\numberwithin{equation}{section}
\newtheorem{Theorem}{Theorem}[section]
\newtheorem{Corollary}[Theorem]{Corollary}
\newtheorem{Lemma}[Theorem]{Lemma}
\newtheorem{Proposition}[Theorem]{Proposition}
 { \theoremstyle{definition}
\newtheorem{Definition}[Theorem]{Definition}
\newtheorem{Example}[Theorem]{Example}
\newtheorem{Remark}[Theorem]{Remark} }
\newcommand{\Z}{\mathbb{Z}}
\newcommand{\C}{\mathbb{C}}
\newcommand{\ip}{{\tilde{p}}}
\newcommand{\is}{{\tilde{s}}}
\begin{document}

\newcommand{\arXivNumber}{2007.01241}

\renewcommand{\thefootnote}{}

\renewcommand{\PaperNumber}{143}

\FirstPageHeading

\ShortArticleName{Riemannian Geometry of a Discretized Circle and Torus}
\ArticleName{Riemannian Geometry of a Discretized Circle\\ and Torus\footnote{This paper is a~contribution to the Special Issue on Noncommutative Manifolds and their Symmetries in honour of~Giovanni Landi. The full collection is available at \href{https://www.emis.de/journals/SIGMA/Landi.html}{https://www.emis.de/journals/SIGMA/Landi.html}}}

\Author{Arkadiusz BOCHNIAK, Andrzej SITARZ and Pawe{\l}~ZALECKI}
\AuthorNameForHeading{A.~Bochniak, A.~Sitarz and P. Zalecki}
\Address{Institute of Theoretical Physics, Jagiellonian University, \\
prof.\ Stanis\l awa \L ojasiewicza 11, 30-348 Krak\'ow, Poland}
\Email{\href{mailto:arkadiusz.bochniak@doctoral.uj.edu.pl}{arkadiusz.bochniak@doctoral.uj.edu.pl}, \href{mailto:andrzej.sitarz@uj.edu.pl}{andrzej.sitarz@uj.edu.pl}, \newline
\hspace*{14mm}\href{mailto:pawel.zalecki@doctoral.uj.edu.pl}{pawel.zalecki@doctoral.uj.edu.pl}}

\ArticleDates{Received July 03, 2020, in final form December 15, 2020; Published online December 23, 2020}

\Abstract{We extend the results of Riemannian geometry over finite groups and provide a~full classification of all linear connections for the minimal noncommutative differential calculus over a finite cyclic group. We solve the torsion-free and metric compatibility condition in general and show that there are several classes of solutions, out of which only special ones are compatible with a metric that gives a Hilbert $C^\ast$-module structure on the space of the one-forms. We compute curvature and scalar curvature for these metrics and find their continuous limits.}

\Keywords{noncommutative Riemannian geometry; linear connections; curvature}

\Classification{46L87; 83C65}

\begin{flushright}
\begin{minipage}{55mm}
\it Dedicated to Giovanni Landi\\ on the occasion of 60th birthday
\end{minipage}
\end{flushright}

\renewcommand{\thefootnote}{\arabic{footnote}}
\setcounter{footnote}{0}

\section{Introduction}

Noncommutative geometry \cite{BM_book,Connes:1994bk, La:bk} offers new insight into spaces and their generalizations
by allowing to skip the traditional assumptions of points and to use the methods of differential geometry that are
adaptable to the algebraic language. The construction of differential calculi has been one of the first steps that allowed
the extension of the formalism of gauge theory to the realm of noncommutative spaces. In particular, the spaces that consist
of a~finite number of points or discrete lattices have appeared not only as discrete approximations of differentiable spaces
but as manifolds in the generalized sense \cite{Balachandran:1996, BHDS, DMH}.

One of the crucial aspects of differential geometry is, however, the link between
the metric aspects, that is, distances and the norm on the space of states with the
relevant objects in the differential algebra. In the classical differential geometry this link
is provided by the metric tensor and leads to the notion of metric compatible and
torsion-free linear connections that provide relevant and physically significant constructions
of the curvature and appropriate geometrical objects like Ricci tensor and the scalar curvature.
The noncommutative geometry has been, so far, unable to retrace these steps
in full generality despite many efforts. Apart from the usual problem of the choice of
the differential calculus for the given algebra, the main problem is the definition of
the metric over the bimodule of differential forms and the linear connection \cite{MMM,mourad}.
The choice of the metric and the linear connection that are compatible with the bimodule structure
of the differential forms lead necessarily to severe restrictions not only on the possible
metrics~\cite{Si94} but also on connections~\cite{DM94} and curvature~\cite{DMM96}.

Recently an updated version of the approach to linear connections for a~special type of
differential calculi was studied in general and for particular examples of noncommutative
spaces \cite{BGJ, BGL, BGL2, BGM}.

A systematic approach to the general issue of bimodule linear connections and bimodule
metrics over the differential forms was started by Majid~\cite{BM11,majid2} and developed
in~\cite{BM14}. The formalism used there allows to generalize in a systematic way almost
all classical notions like torsion-freeness and Ricci and scalar curvature~\cite{BM_book}
(depending on certain choices).

It is worth mentioning that the Connes' spectral approach based on the Dirac operators~\cite{Connes:1994bk}
that was much successful in the reconstruction of the Einstein--Hilbert gravity action for the standard
and modified almost-commutative geometries \cite{CoMa:bk08} can be applied to the studies of the
generalized scalar of curvature for certain noncommutative manifolds, in particular the noncommutative
tori (see \cite{FK13,FK19} and \cite{DaSi15} for the specific example of an asymmetric torus).
 Recently progress in recovering some geometric objects like Ricci tensor was made
\cite{FGK19}, yet there is currently no method to recover all such objects, e.g.,
the torsion, through the spectral methods.
Nevertheless, there exists a huge discrepancy between the usual methods of recovering the
geometric objects like the scalar of curvature for the manifolds and their deformations and the
attempt to use of spectral methods \cite{Ba15,BG19,Gl17,PS} in the finite-dimensional case.

In this paper we start systematic computations of linear connections for finite groups, starting with the
easiest example of finite cyclic groups and their products.

This extends the known results to metrics, which are more general than the so-called quantum symmetric
or edge symmetric already explored in~\cite{majid2020,M13}. We provide a complete classification of linear
connections that are torsion-free and compatible with any nondegenerate metric, demonstrating that there are severe
restrictions on possible metrics and only a certain class of them allows the existence of non-unique compatible
linear connections. The existence of large number of possible solutions (i.e., the non-triviality of the moduli
space of Levi-Civita connection) without imposing additional conditions was already observed in~\cite{M1} for the
case $\mathbb{Z}_2\times\mathbb{Z}_2$ (see also \cite[Chapter~8.2]{BM_book}). Some of the arguments we use here
were already present in~\cite{M1,M2}.

The main result of this paper is that for the special case of left-right symmetric metric there still exist a choice
of linear connections that are torsion-free and compatible with the metric, a scalar curvature that vanishes
only for the constant (equivariant) metric (for some canonical choices of the arbitrary constants in the theory).
We demonstrate that the freedom is much larger in the case of the products of two discrete circles even in the
case of the constant metric.

\section{Preliminaries}

We start by recalling basic definitions and known results about differential calculi over finite groups.

In what follows for a group $G$ with its neutral element $e$ we denote
$G^\times = G \setminus \{e\}$. For a~subset $H \subset G^\times$ by $H^\perp$ we shall always
denote $G^\times \setminus H$. Furthermore, for $X \subset G$ we denote by~$\chi_X$
the characteristic function of the set $X$, i.e.,
\[
\chi_X(g)=\begin{cases}1, & g\in X, \\
0, & g\not\in X.\end{cases}
\]

\begin{Definition}The (first-order) differential calculus over an algebra $A$ over a field $k$ is a pair $\big(\Omega^1(A), d\big)$, where
$\Omega^1(A)$ is a bimodule over $A$, ${\rm d}$ is a linear map ${\rm d}\colon A\rightarrow \Omega^1(A)$, which satisfies the Leibniz rule,
\begin{equation*}
{\rm d}(ab)=a\, {\rm d}b+({\rm d}a) b,
\end{equation*}
and $\Omega^1(A)$ is generated as a left module by the image of ${\rm d}$. We say
that $\big(\Omega^1(A),{\rm d}\big)$ is connected if $\ker {\rm d} \cong k$.
\end{Definition}
In case the algebra $A$ is a Hopf algebra (with a coproduct $\Delta$, counit $\varepsilon$ and antipodal map~$S$) we have the following.
\begin{Definition}
We say that the differential calculus over a Hopf algebra $A$ is left-covariant if there exists a coassociative left coaction of
$A$ on $\Omega^1(A)$, $\delta_{L}\colon \Omega^1(A) \rightarrow A \otimes \Omega^1(A)$, such that
\begin{equation*}
\delta_{L}(a\omega b)=\Delta(a)\delta_L(\omega)\Delta(b),
\end{equation*}
for every $a,b\in A$, $\omega\in\Omega^1(A)$, and
\begin{equation*}
\delta_{L}\circ {\rm d} =(\mathrm{id}\otimes {\rm d})\circ \Delta.
\end{equation*}
\end{Definition}

In a similar manner we define the right-covariance and bicovariance (as simultaneous left and right covariance). The canonical example of a first-order differential calculus is given by the universal calculus, with $\Omega_u^1(A) = \ker m\subseteq A\otimes A$, where $m\colon A\otimes A\rightarrow A$ is the multiplication map for $A$, and the universal differential ${\rm d}_u\colon A \rightarrow \ker m$ of the form ${\rm d}_ua:=a\otimes 1 -1\otimes a$. The universal calculus over a Hopf algebra is bicovariant. The bicovariant calculi over an arbitrary Hopf algebra were classified by Woronowicz~\cite{Wo89}.

For a $\ast$-algebra $A$ we can consider differential calculi that in addition possess a $\ast$-structure, that is we assume ${\rm d}$ is a derivation of a $\ast$-algebra, i.e., ${\rm d}(a^\ast)=({\rm d}a)^\ast$ for every $a\in A$.

\subsection{Finite cyclic groups}

Geometric aspects of finite groups have been intensively studied by several authors, including \cite{BHDS,DMH}, and also
recently in \cite{M1} for the infinite cyclic group $\mathbb{Z}$ and~\cite{M2} in case of the
group $\mathbb{Z}_2\times\mathbb{Z}_2$.
As we consider finite cyclic groups $\mathbb{Z}_N$ with $N\ge 2$ many results are much simpler and therefore we skip the derivation of them, which are mostly adaptations of well-known ones published in the aforementioned literature.

We denote by $e_g$, $g\in G$ a function that vanishes everywhere apart from $g$, $e_g(h) =\delta_{g,h}$.
By~$R_g(f)$ we denote the right translation $R_g(f)(h) = f(hg).$ In a similar manner we introduce the left translation~$L_g$. Note that the left covariance of the calculus is equivalent to the fact that for all $\omega\in \Omega^1(A)$ and
all $g\in G$ we have $L_g\omega\in \Omega^1(A)$.

 The following theorem summarizes the properties of calculi on finite cyclic groups, which we will intensively use in the rest
of the paper. Since these results are well-known, we only sketch the proof.

\begin{Theorem}
Let $G=\Z_N$. Each connected, star-compatible first-order
bicovariant differential calculus over $C(\Z_N)$ is determined by a subset
$H \subset G^\times$ such that $H = H^{-1}$ and $H$ generates the entire group $\Z_N$. By $|H|$ we denote the number of elements in $H$. There are $|H|$ left invariant forms
\[ \theta_h=\sum\limits_{g\in G}{\rm d}e_{gh}\, e_g, \qquad h \in H, \]
such that the star- and the bimodule structure over $\Omega^1(C(\Z_N))$ is
\[ (\theta_h)^* = - \theta_{h^{-1}}, \qquad f \theta_g = \theta_g R_g(f), \]
and the calculus is inner
\[ {\rm d} f = [\theta, f], \qquad \theta = -\sum\limits_{h \in H} \theta_h. \]
Moreover, there exists a unique minimal extension of the first order differential calculus
$($as defined by Woronowicz$)$ so that
\[ \theta_g \wedge \theta_h = - \theta_h \wedge \theta_g, \qquad h,g \in H. \]
\end{Theorem}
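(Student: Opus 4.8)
The structure of the statement is: (i) classify the calculi by subsets $H$, (ii) exhibit the left-invariant forms $\theta_h$ and establish the $\ast$- and bimodule relations, (iii) show the calculus is inner with the stated $\theta$, and (iv) construct the minimal Woronowicz extension with the anticommutation of wedges. The plan is to work throughout in the function algebra $A=C(\Z_N)$ with its canonical basis of idempotents $e_g$, so that every bimodule relation reduces to a finite linear-algebra computation, and to exploit bicovariance via the Peter--Weyl/coideal description of $\Omega^1$.

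First I would recall the general fact (Woronowicz, adapted to the commutative group case) that a bicovariant first-order calculus over $C(G)$ corresponds to an $\mathrm{Ad}$-invariant subset of $G^\times$ together with a choice inside it; for $G$ abelian the adjoint action is trivial, so the datum is simply a subset $H\subset G^\times$. Connectedness, $\ker{\rm d}\cong k$, forces $H$ to generate $\Z_N$ — otherwise ${\rm d}$ annihilates the characteristic function of the subgroup generated by $H$, producing extra kernel. Star-compatibility, ${\rm d}(a^\ast)=({\rm d}a)^\ast$, forces $H=H^{-1}$: one checks this on $a=e_g$, using that complex conjugation of functions acts as identity on the real idempotents while reversing the role of $g$ and $gh$ in $\theta_h$. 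I would then \emph{define} $\theta_h=\sum_{g}{\rm d}e_{gh}\,e_g$ and verify directly that these span $\Omega^1$ over the left module and are left-invariant by applying $L_k$ and reindexing the sum.

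Next come the structural identities, which are all short direct computations once one uses ${\rm d}e_g=\sum_h(e_{gh^{-1}}-e_g\cdot\text{(shift)})$-type expansions and the orthogonality $e_ge_h=\delta_{g,h}e_g$. For $f\theta_g=\theta_g R_g(f)$ I would compute both sides on the idempotent basis; the shift by $g$ appears precisely because $\theta_g$ involves ${\rm d}e_{\cdot g}$. For $(\theta_h)^\ast=-\theta_{h^{-1}}$ I would use the $\ast$-structure on $\Omega^1$ induced from $A$ together with the already-established $H=H^{-1}$. The innerness ${\rm d}f=[\theta,f]$ with $\theta=-\sum_{h\in H}\theta_h$ I would check by expanding the commutator using $f\theta_h=\theta_h R_h(f)$, so that $[\theta_h,f]=(R_h(f)-f)\theta_h$ summed against $-1$ reproduces the standard finite-difference formula for ${\rm d}f$ on a group; this simultaneously re-derives that the $\theta_h$ generate everything.

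Finally, for the minimal extension I would follow Woronowicz's universal construction: form the tensor algebra of $\Omega^1$ over $A$ and quotient by the kernel of the braiding $\sigma$ restricted to left-invariant forms, the point being that on a group $\sigma$ permutes $\theta_g\otimes\theta_h$ by group-theoretic data. For the abelian case $\sigma(\theta_g\otimes\theta_h)=\theta_h\otimes\theta_g$, so the antisymmetrizer is the ordinary one and the minimal quotient imposes exactly $\theta_g\wedge\theta_h=-\theta_h\wedge\theta_g$; minimality (no smaller ideal consistent with bicovariance and ${\rm d}^2=0$) follows from the standard maximality property of Woronowicz's construction. The main obstacle I anticipate is not any single computation but the bookkeeping of the right-module structure — keeping the shift $R_g$ consistent through the $\ast$-operation and the commutator — and, on the conceptual side, justifying cleanly that bicovariance plus abelianness trivializes the braiding so that the extension is as simple as stated; this is where one must be careful rather than merely routine.
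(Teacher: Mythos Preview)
Your proposal is correct and follows essentially the same route as the paper: the classification of the first-order calculus by a generating, inversion-stable subset $H\subset G^\times$ together with the explicit left-invariant forms and their bimodule/$\ast$-relations is exactly what the paper obtains by citing \cite{DMH}, and your construction of the minimal higher-order calculus via the Woronowicz braiding---which for an abelian group reduces to the ordinary flip and hence yields the anticommutation $\theta_g\wedge\theta_h=-\theta_h\wedge\theta_g$---is precisely the paper's argument via $\sigma_W$ from \cite{Wo89}. The only difference is one of presentation: the paper's proof is a two-line appeal to the literature, whereas you unpack the computations (connectedness $\Rightarrow$ $H$ generates, $\ast$-compatibility $\Rightarrow$ $H=H^{-1}$, innerness via $[\theta_h,f]=(R_h f-f)\theta_h$) explicitly.
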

\begin{proof}
The first part follows directly from~\cite[Section~2]{DMH}. From~\cite[Proposition 3.1]{Wo89}, for any bicovariant differential calculus $\big(\Omega^1_H(\mathbb{Z}_N),{\rm d}\big)$ there exists a unique bimodule automorphism~$\sigma_W$ of $\Omega^1_H(\mathbb{Z}_N)\otimes_{C(\mathbb{Z}_N)} \Omega^1_H(\mathbb{Z}_N)$ given by
\[ \sigma_W(\theta_g\otimes\theta_h)=\theta_{g^{-1}hg}\otimes\theta_h.\]
Then we introduce the symmetrization map as the extension of the $\sigma_W$ to the tensor algebra of~$\Omega^1(A)$. Its
kernel is identified with the exterior algebra over~$A$.
\end{proof}

\begin{Theorem} \label{dcZN}
Consider the cyclic group $\Z_N$, $N>2$ with the generator~$p$. Denote by~$\tilde{p}$ its inverse in $\Z_N$.
Then there exists a minimal bicovariant, star-compatible connected differential calculus, generated by
$\theta_p$, $\theta_\ip$ with the following structure:
\begin{alignat*}{3}
& f \theta_p = \theta_p R_p(f), \qquad && f \theta_\ip = \theta_\ip R_\ip(f), &\nonumber\\
& \theta_p^* = - \theta_\ip, \qquad && {\rm d} f = - [ \theta_p + \theta_\ip, f ], &\nonumber\\
& {\rm d} \theta_p = 0, \qquad && {\rm d} \theta_\ip = 0,& \nonumber\\
& {\rm d} \omega = \theta \wedge \omega + \omega \wedge \theta.\qquad & && 
\end{alignat*}
\end{Theorem}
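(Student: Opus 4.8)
The plan is to obtain Theorem~\ref{dcZN} as the special case $H=\{p,\ip\}$ of the preceding general theorem, and then to verify that this particular $H$ satisfies its hypotheses and to unwind the resulting structure. First I would check that $H=\{p,\ip\}$ is admissible: since $\ip=p^{-1}$ we have $H=H^{-1}$ automatically, and because $p$ generates $\Z_N$ the set $H$ certainly generates $\Z_N$ as well; moreover $N>2$ guarantees $p\neq\ip$, so $|H|=2$ and the calculus is genuinely two-dimensional (the case $N=2$, where $p=\ip$, is precisely what is excluded). Connectedness and star-compatibility are part of the conclusion of the general theorem, so nothing extra is needed there.

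Next I would read off the bimodule and star structure directly from the general theorem: $f\theta_p=\theta_p R_p(f)$ and $f\theta_\ip=\theta_\ip R_\ip(f)$ are the two instances of $f\theta_h=\theta_h R_h(f)$, while $(\theta_p)^*=-\theta_{p^{-1}}=-\theta_\ip$ is the instance of $(\theta_h)^*=-\theta_{h^{-1}}$. The inner derivation formula ${\rm d}f=[\theta,f]$ with $\theta=-\sum_{h\in H}\theta_h=-(\theta_p+\theta_\ip)$ gives ${\rm d}f=-[\theta_p+\theta_\ip,f]$ immediately. For the last line, ${\rm d}\omega=\theta\wedge\omega+\omega\wedge\theta$, I would note this is the standard consequence of innerness in the minimal (Woronowicz) extension: on functions it reproduces ${\rm d}f=[\theta,f]$ using $\theta\wedge f + f\wedge\theta$ interpreted via the bimodule rule, and on the generators $\theta_p,\theta_\ip$ it is forced by the graded Leibniz rule together with ${\rm d}\theta=0$, which follows from $\theta_h\wedge\theta_g=-\theta_g\wedge\theta_h$ since then $\theta\wedge\theta=\tfrac12[\theta,\theta]_{\text{graded}}$ need not vanish but $2\theta\wedge\theta = (\theta_p+\theta_\ip)\wedge(\theta_p+\theta_\ip)$, and one checks ${\rm d}\theta_h = \theta\wedge\theta_h+\theta_h\wedge\theta$ is consistent with the algebra.

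The one genuinely computational point is ${\rm d}\theta_p=0={\rm d}\theta_\ip$. Here I would use the explicit expression $\theta_h=\sum_{g}{\rm d}e_{gh}\,e_g$ from the general theorem, apply ${\rm d}$, and use ${\rm d}({\rm d}e_{gh})=0$ together with the Leibniz rule to get ${\rm d}\theta_h=-\sum_g {\rm d}e_{gh}\wedge {\rm d}e_g$; then I would expand each ${\rm d}e_g$ in the left-invariant basis via ${\rm d}e_g=\sum_{h\in H}(e_{gh^{-1}}-e_g)\theta_h$ (equivalently ${\rm d}e_g = [\theta,e_g]$), convert the wedge of two such sums into the $\theta_a\wedge\theta_b$ basis using the bimodule relation $e_g\theta_h=\theta_h e_{gh}$ to move functions to one side, and observe that the antisymmetry $\theta_a\wedge\theta_b=-\theta_b\wedge\theta_a$ (in particular $\theta_p\wedge\theta_p=\theta_\ip\wedge\theta_\ip=0$ when $N>2$) together with the telescoping of the $e_g$-coefficients forces the total to vanish. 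Alternatively, and more cheaply, I would argue ${\rm d}\theta_h$ is left-invariant and bi-invariant of degree two, compute its coefficient against the basis $\{\theta_p\wedge\theta_\ip\}$ of invariant two-forms, and check it is zero by evaluating on a single point using $R_p R_\ip={\rm id}$.

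The main obstacle, such as it is, is purely bookkeeping: correctly tracking the noncommutativity $f\theta_h=\theta_h R_h(f)$ when pushing all function factors to one side of a product of two left-invariant forms, and being careful that for $N>2$ the two generators are distinct so that $\theta_p\wedge\theta_\ip$ spans the invariant two-forms while $\theta_p\wedge\theta_p$ and $\theta_\ip\wedge\theta_\ip$ vanish. None of the steps requires an idea not already contained in the general theorem and in Woronowicz's construction of the minimal extension; the content of Theorem~\ref{dcZN} is really just the explicit transcription of that general machinery for the rank-two calculus on $\Z_N$, $N>2$, which is the calculus used throughout the rest of the paper.
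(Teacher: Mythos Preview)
Your approach is the same as the paper's: specialize the preceding general theorem to $H=\{p,\ip\}$, then verify the two additional claims ${\rm d}\theta_p={\rm d}\theta_\ip=0$ and ${\rm d}\omega=\theta\wedge\omega+\omega\wedge\theta$ by direct computation. The bimodule, star, and inner-derivation relations are indeed pure transcription, and your check that $H$ is admissible is exactly what the paper does.

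Where the paper is sharper is in the two computational points. For ${\rm d}\theta_g=0$ the paper does not expand in the $e_g$ and hope for telescoping; it uses the clean invariant formula
\[
{\rm d}\theta_g=\sum_{a\in H}\chi_H\big(ga^{-1}\big)\,\theta_a\wedge\theta_{ga^{-1}},
\]
and then argues: for $a=p$ one has $ga^{-1}=e\notin H$, while for $a=\ip$ one has $ga^{-1}=p^2$, which lies in $H$ only if $p^3=e$. Thus for $N>3$ the sum is empty, and for $N=3$ the only surviving term is $\theta_{g^{-1}}\wedge\theta_{g^{-1}}=0$ by antisymmetry. You do not single out the $N=3$ case, and your ``telescoping'' sketch obscures exactly this dichotomy; the vanishing is not a cancellation but the fact that the relevant products $ga^{-1}$ fall outside $H$ (or give a square). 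For ${\rm d}\omega=\theta\wedge\omega+\omega\wedge\theta$, your abstract appeal to innerness is a bit circular: this identity on one-forms is not automatic from ${\rm d}f=[\theta,f]$ alone but depends on the structure of the extension. The paper instead writes $\omega=\omega_p\theta_p+\omega_\ip\theta_\ip$, uses ${\rm d}\theta_p={\rm d}\theta_\ip=0$ and Leibniz to get ${\rm d}\omega={\rm d}\omega_p\wedge\theta_p+{\rm d}\omega_\ip\wedge\theta_\ip$, expands ${\rm d}\omega_g$ via the inner formula, and compares term by term with $\theta\wedge\omega+\omega\wedge\theta$. That explicit check is short and removes the ambiguity in your sketch.
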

\begin{proof}Since $N\neq 2$ we have $p\neq \ip$. The first order differential calculus generated by $H=\{p,\ip\}$ is then bicovariant,
connected and compatible with the star structure.

By a direct computation we see that for any $g\in H$:
\[ {\rm d}\theta_g=\sum\limits_{a\in H}\chi_H\big(ga^{-1}\big)\theta_a\wedge \theta_{ga^{-1}}.\]
Since $H=\{p,\ip\}$ and if $p\in H$, then for $N>3$ we get $p\ip^{-1}\not\in H$ (otherwise $p=e$ or is order $3$), we immediately infer that ${\rm d}\theta_p={\rm d}\theta_{\ip}=0$. For $N=3$ in the above sum there is only one term: ${\rm d}\theta_g=\theta_{g^{-1}}\wedge \theta_{g^{-1}}$ for $g=p,\ip$, and as a result ${\rm d}\theta_p={\rm d}\theta_\ip=0$ also for this case.

Notice that since ${\rm d}\theta_p={\rm d}\theta_{\ip}=0$, for $\omega=\omega_p\theta_p+\omega_{\ip}\theta_{\ip}$ we get
\[ {\rm d}\omega={\rm d}\omega_p\wedge\theta_p+{\rm d}\omega_{\ip}\wedge\theta_{\ip}.\]
Furthermore,
\begin{gather*}
 {\rm d}\omega_p=-\omega_p\theta-R_{p}(\omega_p)\theta_{\ip}-R_{\ip}(\omega_p)\theta_p,\\
 {\rm d}\omega_\ip=-\omega_\ip\theta-R_{p}(\omega_\ip)\theta_{\ip}-R_{\ip}(\omega_\ip)\theta_p.
\end{gather*}
As a result
\[ {\rm d}\omega =-\big[(\omega_p-\omega_{\ip})\theta_p\wedge\theta_\ip+\theta_\ip\wedge\big(\omega_p-R_{\ip^2}\omega_\ip\big)\theta_p\big],\]
which is exactly $ \theta \wedge \omega + \omega \wedge \theta$.
\end{proof}

\section{Bimodule linear connections}
Following \cite{BM14}, for a first-order differential calculus $\big(\Omega^1(A),{\rm d}\big)$, we set
\begin{Definition}\label{lincon}
A linear connection on the bimodule $\Omega^1(A)$ is a pair $(\nabla, \sigma)$, a linear map
\begin{equation*}
\nabla\colon \ \Omega^1(A)\rightarrow \Omega^1(A)\otimes_A\Omega^1(A),
\end{equation*}
and a bimodule map,
\begin{equation*}
\sigma\colon \ \Omega^1(A)\otimes_A\Omega^1(A)\rightarrow \Omega^1(A)\otimes_A \Omega^1(A),
\end{equation*}
called the generalized braiding, such that
\begin{gather*}
\nabla(a\omega)={\rm d}a\otimes_A\omega +a\nabla\omega, \nonumber\\
\nabla(\omega a)=(\nabla \omega)a +\sigma (\omega\otimes_A {\rm d}a),
\end{gather*}
for all $a\in A$, $\omega\in \Omega^1(A)$.
\end{Definition}
Notice that (see, e.g., \cite[Proposition~2.1.3]{BM11}) with such a definition, the linear connection can be naturally extended to the whole tensor algebra $T^\bullet\Omega_H^1(A)$ with $T^0\Omega_H^1(A):=A$, $\nabla|_{A}:={\rm d}$ and $\sigma|_{A\otimes_A \Omega_H^1(A)}=\mathrm{id}$.

In our case of the algebra $C(\Z_N)$ since the calculus is inner, we can use
\begin{Remark}
If there exists $\theta\in \Omega^1(A)$ such that ${\rm d}a=[\theta, a]$, then any bimodule connection is of the form
\begin{equation}
\nabla\omega=\theta\otimes\omega-\sigma(\omega\otimes\theta)+\alpha(\omega),
\label{inner_connection}
\end{equation}
for some bimodule maps $\sigma$, $\alpha$ \cite[Theorem~2.1]{M13}.
\end{Remark}
As an immediate consequence of the above definition we get the following result:
\begin{Proposition}\label{connection_determined}
For a minimal bicovariant calculus over $C(\Z_N)$ with $N\neq 3$ a~bimodule linear connection is determined by a bimodule map~$\sigma$.
\end{Proposition}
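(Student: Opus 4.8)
The plan is to reduce the statement, via the inner structure of the calculus, to a count of the available bimodule maps. By the Remark above, applied with $\theta=-\theta_p-\theta_\ip$ as in Theorem~\ref{dcZN}, every bimodule linear connection on $\Omega^1\big(C(\Z_N)\big)$ has the form~\eqref{inner_connection}, $\nabla\omega=\theta\otimes\omega-\sigma(\omega\otimes\theta)+\alpha(\omega)$, where $\sigma$ is the generalized braiding of the connection and $\alpha\colon\Omega^1\to\Omega^1\otimes_A\Omega^1$ is a bimodule map, and conversely every such pair $(\sigma,\alpha)$ defines a connection. So it will be enough to show that for $N\neq3$ the only bimodule map $\alpha$ is $\alpha=0$: then $\nabla$ is completely read off from $\sigma$.

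First I would fix module bases. As a left $C(\Z_N)$-module $\Omega^1$ is free on $\theta_p,\theta_\ip$, and, using $f\theta_g=\theta_g R_g(f)$ in the form $\theta_g f=R_{g^{-1}}(f)\,\theta_g$, the bimodule $\Omega^1\otimes_A\Omega^1$ is free as a left module on $\{\theta_a\otimes\theta_b : a,b\in\{p,\ip\}\}$, with right action $(\theta_a\otimes\theta_b)f=R_{(ab)^{-1}}(f)\,\theta_a\otimes\theta_b$. A bimodule map $\alpha$ is then determined by the functions appearing in $\alpha(\theta_g)=\sum_{a,b}c^g_{ab}\,\theta_a\otimes\theta_b$, $g\in\{p,\ip\}$, and right $C(\Z_N)$-linearity combined with $\theta_g f=R_{g^{-1}}(f)\theta_g$ gives $R_{g^{-1}}(f)\,\alpha(\theta_g)=\alpha(\theta_g)\,f$ for all $f$; comparing coefficients in the left-module basis forces $c^g_{ab}=0$ unless $ab=g$ in $\Z_N$.

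To finish I would note that for $N>2$, $N\neq3$, no pair $a,b\in\{p,\ip\}$ has product in $\{p,\ip\}$: the available products are $p^2$, $p\ip=e$ and $\ip^2=p^{-2}$, and each of these equals $p$ or $\ip$ only if $p$ has order $1$ or $3$, both excluded here. Hence all $c^g_{ab}$ vanish, $\alpha=0$, and every bimodule linear connection is $\nabla\omega=\theta\otimes\omega-\sigma(\omega\otimes\theta)$, i.e., is determined by the bimodule map $\sigma$. I do not expect a genuine obstacle: the only care needed is the bookkeeping of the left and right module structures on $\Omega^1\otimes_A\Omega^1$ and the coefficient comparison, and this last arithmetic step is precisely where the hypothesis $N\neq3$ enters, since for $N=3$ one has $\ip=p^2$, so $p\cdot p=\ip$ and $\ip\cdot\ip=p$, and $\alpha$ need not vanish.
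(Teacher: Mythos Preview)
Your proof is correct and follows essentially the same route as the paper: reduce to the inner form~\eqref{inner_connection} and then show that any bimodule map $\alpha\colon\Omega^1\to\Omega^1\otimes_A\Omega^1$ must vanish because no basis element $\theta_a\otimes\theta_b$ has the same bimodule commutation behaviour as $\theta_g$ unless $ab=g$, which for $a,b,g\in\{p,\ip\}$ forces $p$ to have order~$3$. The paper's argument is terser (it simply observes that matching commutation rules would make $p$ of order~$3$ and cites~\cite{M1,M2}), while you spell out the left-module bases and the coefficient comparison explicitly, but the content is the same.
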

\begin{proof}

The argument is exactly the same as in \cite{M1,M2} for groups $\mathbb{Z}_2\times\mathbb{Z}_2$ and~$\mathbb{Z}$, respectively.

We use shortened notation $\Omega^1(A)$ to denote $\Omega^1_H(\C(\Z_N))$ from Theorem~\ref{dcZN}.
First of all, observe that there are no bimodule maps apart from the zero map between
and $\Omega^1(A)$ and $\Omega^1(A) \otimes_A \Omega^1(A)$. Indeed, there are no objects in
$\Omega^1(A) \otimes_A \Omega^1(A)$ that have the same bimodule commutation rules as in
$\Omega^1(A)$. Otherwise $p$ would be of order $3$.
Therefore, necessarily $\alpha\equiv0$. Hence, the bimodule connection $\nabla$ and $\sigma$ are mutually determined.
\end{proof}
\subsection{Torsion-free connection}

Let us now concentrate on the notion of a torsion. We define torsion as a map $T_\nabla\colon \Omega^1_H(A)\rightarrow \Omega^2_H(A)$
given by
\begin{equation*}
T_\nabla=\wedge \circ \nabla-{\rm d}.
\end{equation*}

Following \cite{M13} we say that the connection is compatible with a torsion if $\operatorname{Im}(\mathrm{id}+\sigma)\subseteq \ker\wedge$.
The connection is said to be torsion-free if $T_\nabla=0$. Observe first that we have the following result.
\begin{Proposition}
For a minimal bicovariant calculus over $C(\Z_N)$ with the torsion-free connection, the map $\sigma$ must satisfy,
\begin{equation}
\omega \wedge \theta= - {\wedge} \circ \, \sigma(\omega \otimes_A \theta).
\label{sigmawedge}
\end{equation}
\end{Proposition}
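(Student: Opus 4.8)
The plan is to apply the definition $T_\nabla = \wedge \circ \nabla - {\rm d}$ directly to a one-form $\omega$, using the inner form of the connection from equation~\eqref{inner_connection}. Since the calculus is inner with ${\rm d}\omega = \theta\wedge\omega+\omega\wedge\theta$ (from Theorem~\ref{dcZN}), and since by Proposition~\ref{connection_determined} we have $\alpha\equiv 0$ (this uses $N\neq 3$; for $N=3$ one argues separately or notes $\alpha$ lands in $\ker\wedge$), the connection is $\nabla\omega = \theta\otimes_A\omega - \sigma(\omega\otimes_A\theta)$. Applying $\wedge$ gives $\wedge\circ\nabla(\omega) = \theta\wedge\omega - \wedge\circ\sigma(\omega\otimes_A\theta)$.

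Next I would subtract ${\rm d}\omega = \theta\wedge\omega + \omega\wedge\theta$. The $\theta\wedge\omega$ terms cancel, leaving
\[
T_\nabla(\omega) = -\wedge\circ\sigma(\omega\otimes_A\theta) - \omega\wedge\theta.
\]
Setting $T_\nabla(\omega)=0$ for all $\omega\in\Omega^1_H(A)$ yields precisely $\omega\wedge\theta = -\wedge\circ\sigma(\omega\otimes_A\theta)$, which is~\eqref{sigmawedge}.

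The only genuine subtlety is the status of $\alpha$: strictly, $\wedge\circ\nabla(\omega)$ would pick up a term $\wedge\circ\alpha(\omega)$, so one should either invoke Proposition~\ref{connection_determined} to kill $\alpha$ outright (valid for $N\neq 3$), or observe that the torsion-compatibility/freeness condition only constrains the image of $\sigma$ up to $\ker\wedge$, so $\alpha$ can be absorbed. I would state the computation for the generic case $N\neq 3$ and remark that the $N=3$ case follows by the same manipulation once one checks $\wedge\circ\alpha=0$ on the relevant bimodule maps. Beyond that, the proof is a one-line substitution — there is no real obstacle, only bookkeeping of which terms in ${\rm d}\omega$ come from the inner structure versus the connection.
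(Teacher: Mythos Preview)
Your argument is correct (the $N=3$ case indeed goes through, since the only bimodule maps $\alpha\colon\Omega^1\to\Omega^1\otimes_A\Omega^1$ send $\theta_p\mapsto c\,\theta_{\ip}\otimes_A\theta_{\ip}$ and $\theta_{\ip}\mapsto c'\,\theta_p\otimes_A\theta_p$, so $\wedge\circ\alpha=0$ automatically), but the paper takes a different and more axiomatic route. Rather than using the inner presentation~\eqref{inner_connection} and the explicit formula ${\rm d}\omega=\theta\wedge\omega+\omega\wedge\theta$, the paper compares the right Leibniz rule $\nabla(\omega f)=\nabla(\omega)f+\sigma(\omega\otimes_A{\rm d}f)$ with ${\rm d}(\omega f)={\rm d}\omega\,f-\omega\wedge{\rm d}f$; applying torsion-freeness to both $\omega$ and $\omega f$ cancels the $\nabla(\omega)f$ against ${\rm d}\omega\,f$ and yields $\omega\wedge{\rm d}f=-\wedge\circ\sigma(\omega\otimes_A{\rm d}f)$ for all $f$. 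The extension to $\theta$ is then immediate since $\Omega^1$ is generated by exact forms as a left module and $\sigma$ is a bimodule map. The paper's route never invokes the inner structure or $\alpha$, so it works uniformly in $N$ and in fact delivers the stronger torsion-compatibility statement $\omega\wedge\rho=-\wedge\circ\sigma(\omega\otimes_A\rho)$ for every $\rho\in\Omega^1$; your approach is equally valid here but is tied to the inner presentation and requires the separate handling of $\alpha$ that you correctly flagged.
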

\begin{proof}
Comparing
\begin{gather*}
\nabla(\omega f)=\nabla(\omega)f+\sigma(\omega\otimes_A {\rm d}f),\\
{\rm d}(\omega f)={\rm d}\omega f-\omega\wedge {\rm d}f,
\end{gather*}
we immediately get $ \omega \wedge {\rm d} f= - {\wedge}\circ\, \sigma(\omega \otimes_A {\rm d}f)$, which gives us the claimed formula.
\end{proof}

Notice that it follows from the last proposition that the torsion-free connection is compatible with a torsion. This is a manifestation of the more general result:
\begin{Proposition} For an inner calculus~\eqref{inner_connection} with the extension to $\Omega^2(A)$ such
that $\theta\wedge\theta=0$, the connection is torsion-free if and only if is torsion-compatible and $\operatorname{im} \alpha\subseteq\ker\wedge$.
\end{Proposition}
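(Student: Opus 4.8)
The plan is to compute the torsion of the connection \eqref{inner_connection} in closed form and then propagate the vanishing condition through the right Leibniz rule. Applying $\wedge$ to $\nabla\omega=\theta\otimes\omega-\sigma(\omega\otimes\theta)+\alpha(\omega)$ and subtracting ${\rm d}\omega=\theta\wedge\omega+\omega\wedge\theta$ — the expression ${\rm d}$ takes on one-forms for an inner calculus with $\theta\wedge\theta=0$ (which also gives ${\rm d}\theta=0$ and ${\rm d}^2=0$) — the two $\theta\wedge\omega$ terms cancel and one is left with
\[
T_\nabla(\omega)=-{\wedge}\,(\mathrm{id}+\sigma)(\omega\otimes_A\theta)+{\wedge}\,\alpha(\omega).
\]
From here the ``if'' direction is immediate: if $\operatorname{Im}(\mathrm{id}+\sigma)\subseteq\ker\wedge$ and $\operatorname{im}\alpha\subseteq\ker\wedge$, both summands vanish for every $\omega$, hence $T_\nabla=0$.

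For the converse the displayed formula alone is not enough, since it only constrains $\sigma$ on elements of the special form $\omega\otimes_A\theta$; instead I would differentiate the torsion along right multiplication. Using the defining relations $\nabla(\omega a)=(\nabla\omega)a+\sigma(\omega\otimes_A{\rm d}a)$ and ${\rm d}(\omega a)=({\rm d}\omega)a-\omega\wedge{\rm d}a$ together with the $A$-bilinearity of $\wedge$, one obtains the identity
\[
T_\nabla(\omega a)=T_\nabla(\omega)\,a+{\wedge}\,(\mathrm{id}+\sigma)(\omega\otimes_A{\rm d}a),
\]
valid for all $\omega\in\Omega^1(A)$ and $a\in A$. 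Assuming $T_\nabla=0$, the left side and the first term on the right both vanish, so ${\wedge}\,(\mathrm{id}+\sigma)(\omega\otimes_A{\rm d}a)=0$ for every $\omega$ and every $a$.

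It then remains to promote this to all of $\Omega^1(A)\otimes_A\Omega^1(A)$. By the very definition of a first-order calculus, $\Omega^1(A)$ is generated as a left $A$-module by $\operatorname{im}{\rm d}$, so an arbitrary element of $\Omega^1(A)\otimes_A\Omega^1(A)$ is a finite sum of terms $\omega\otimes_A(b\,{\rm d}a)=(\omega b)\otimes_A{\rm d}a$; applying the additive map ${\wedge}\,(\mathrm{id}+\sigma)$ term by term and invoking the previous step shows it annihilates every such element, i.e.\ $\operatorname{Im}(\mathrm{id}+\sigma)\subseteq\ker\wedge$, which is torsion-compatibility. Feeding this back into the closed form of $T_\nabla$ kills the first summand, leaving $0=T_\nabla(\omega)={\wedge}\,\alpha(\omega)$ for every $\omega$, that is $\operatorname{im}\alpha\subseteq\ker\wedge$. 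The only point that needs care is this last promotion step: one must use only the tensor relation $\omega\otimes_A b\,{\rm d}a=(\omega b)\otimes_A{\rm d}a$ and additivity of ${\wedge}\,(\mathrm{id}+\sigma)$ (no further property of $\sigma$ is required there), and keep the signs in the Leibniz identity for $T_\nabla(\omega a)$ straight; the rest is bookkeeping.
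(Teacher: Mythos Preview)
Your argument is correct. The paper actually states this proposition without proof, so there is nothing to compare against directly; however, your key step---the right-Leibniz identity
\[
T_\nabla(\omega a)=T_\nabla(\omega)\,a+{\wedge}\,(\mathrm{id}+\sigma)(\omega\otimes_A{\rm d}a)
\]
---is precisely the computation the paper uses in proving the immediately preceding proposition (comparing $\nabla(\omega f)$ with ${\rm d}(\omega f)$ to obtain \eqref{sigmawedge}). Your proof simply carries that same idea through in the general inner setting and adds the closed-form expression $T_\nabla(\omega)=-{\wedge}\,(\mathrm{id}+\sigma)(\omega\otimes_A\theta)+{\wedge}\,\alpha(\omega)$ to handle the $\alpha$ contribution and the ``if'' direction. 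One minor caveat: your closed form uses ${\rm d}\omega=\theta\wedge\omega+\omega\wedge\theta$ on one-forms, which is the standard graded-commutator extension of an inner ${\rm d}$ and is implicit in the hypothesis ``extension to $\Omega^2(A)$ with $\theta\wedge\theta=0$'' (and is exactly what the paper records in Theorem~\ref{dcZN}); it is worth saying this explicitly rather than deriving ${\rm d}\theta=0$ and ${\rm d}^2=0$ as consequences.
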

\begin{Remark}
Notice that the similar result was stated in \cite[Theorem~2.1(3)]{M13} but in that formulation $\alpha$ was forced to be a zero bimodule map instead of satisfying $\operatorname{im} \alpha\subseteq\ker\wedge$.
This was later corrected in \cite[Proposition~8.11]{BM_book}. As one can easily see in the example of the~$\mathbb{Z}_3$ group with
a generator $p$, there exists a torsion-free connection with nontrivial~$\alpha$, namely $\alpha(\theta_p)=\theta_{p^2}\otimes_A\theta_{p^2}$ (because $p^2 p^2 =p$).
\end{Remark}

From Proposition~\ref{connection_determined} it follows that for $N\neq 3$ the pair $(\nabla, \sigma)$ is mutually unambiguously determined. The case with $N=3$ has to be considered separately. Even the torsion-freeness does not guarantee vanishing of $\alpha$.

\begin{Definition}[{see \cite[p.~572]{BM_book}}]
We say that the connection is star-compatible, if
\[ \nabla\circ \ast=\sigma\circ \dagger \circ \nabla,\]
where $(\omega\otimes_A \eta)^\dagger=\eta^\ast\otimes_A\omega^\ast$, i.e.,~$\dagger$ is the induced $\ast$-structure on higher tensors.
\end{Definition}

As an immediate consequence we obtain the following two results, which were already formulated in~\cite{majid2020}
(see also~\cite{M2} for the case $G=\mathbb{Z}$).

\begin{Proposition}\label{sigmy}The torsion-free bimodule connections over the minimal bicovariant calculi over $C(\Z_N)$ with $N\neq 4$ are determined by a family of functions $A_p$, $A_\ip$, $B_p$, $B_\ip$, so that~$\sigma$ is
\begin{gather*}
 \sigma(\theta_p \otimes_A \theta_p) = A_p \theta_p \otimes_A \theta_p,\\
\sigma(\theta_\ip \otimes_A \theta_\ip) = A_\ip \theta_\ip \otimes_A \theta_\ip, \\
\sigma(\theta_p \otimes_A \theta_\ip) = B_p ( \theta_p \otimes_A \theta_\ip + \theta_\ip \otimes_A \theta_p ) - \theta_p \otimes_A \theta_\ip, \\
\sigma(\theta_\ip \otimes_A \theta_p) = B_\ip ( \theta_p \otimes_A \theta_\ip + \theta_\ip \otimes_A \theta_p ) - \theta_\ip \otimes_A \theta_p.
\end{gather*}
\end{Proposition}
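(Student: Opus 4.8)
The plan is to start from the general form of an inner bimodule connection given in equation~\eqref{inner_connection}, recalling that by Proposition~\ref{connection_determined} we have $\alpha\equiv 0$ when $N\neq 3$, so the connection is entirely encoded in the bimodule map $\sigma$ on $\Omega^1(A)\otimes_A\Omega^1(A)$. Thus the first step is to classify all bimodule maps $\sigma$ on this tensor square, and then impose the torsion-free constraint~\eqref{sigmawedge}.

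First I would write out the four generators $\theta_p\otimes_A\theta_p$, $\theta_\ip\otimes_A\theta_\ip$, $\theta_p\otimes_A\theta_\ip$, $\theta_\ip\otimes_A\theta_p$ of the free module $\Omega^1(A)\otimes_A\Omega^1(A)$ (free of rank four over $C(\Z_N)$), and determine their left–right bimodule commutation rules from $f\theta_g=\theta_g R_g(f)$: namely $f\,(\theta_a\otimes_A\theta_b) = (\theta_a\otimes_A\theta_b)\,R_{ab}(f)$. So the relevant ``weights'' are $p^2$, $\ip^2$, $e$, $e$ respectively. A bimodule map must send a generator of weight $w$ into the $C(\Z_N)$-span of generators of the same weight $w$. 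Hence $\sigma(\theta_p\otimes_A\theta_p)$ must be a function multiple of $\theta_p\otimes_A\theta_p$ (weight $p^2$, and $p^2\neq\ip^2$ precisely because $N\neq 4$ — this is where the hypothesis enters), similarly for $\theta_\ip\otimes_A\theta_\ip$, while $\sigma(\theta_p\otimes_A\theta_\ip)$ and $\sigma(\theta_\ip\otimes_A\theta_p)$ each land in the rank-two submodule spanned by the two weight-$e$ generators $\theta_p\otimes_A\theta_\ip$ and $\theta_\ip\otimes_A\theta_p$. Being a \emph{bimodule} map then forces the coefficient functions to be genuinely constant (equivariance: $\sigma$ commutes with both left and right multiplication, and on a generator of weight $w$ this pins the coefficient down to a constant), giving $\sigma(\theta_p\otimes_A\theta_p)=A_p\,\theta_p\otimes_A\theta_p$, $\sigma(\theta_\ip\otimes_A\theta_\ip)=A_\ip\,\theta_\ip\otimes_A\theta_\ip$, and $\sigma$ on the mixed generators described by a $2\times2$ constant matrix.

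Next I would impose~\eqref{sigmawedge}, $\omega\wedge\theta=-\wedge\circ\,\sigma(\omega\otimes_A\theta)$, using $\theta=-(\theta_p+\theta_\ip)$ and the exterior relations $\theta_p\wedge\theta_p=0$, $\theta_\ip\wedge\theta_\ip=0$, $\theta_p\wedge\theta_\ip=-\theta_\ip\wedge\theta_p$ from Theorem~\ref{dcZN}. Taking $\omega=\theta_p$ gives $\theta_p\wedge\theta=-\theta_p\wedge\theta_\ip$, which must equal $-\wedge\circ\,\sigma(\theta_p\otimes_A\theta_p+\theta_p\otimes_A\theta_\ip) = -A_p(\theta_p\wedge\theta_p) - \wedge\circ\,\sigma(\theta_p\otimes_A\theta_\ip)$; since $\theta_p\wedge\theta_p=0$ this reads $\theta_p\wedge\theta_\ip = \wedge\circ\,\sigma(\theta_p\otimes_A\theta_\ip)$. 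Writing $\sigma(\theta_p\otimes_A\theta_\ip)=B_p(\theta_p\otimes_A\theta_\ip+\theta_\ip\otimes_A\theta_p)+C_p\,\theta_p\otimes_A\theta_\ip+D_p\,\theta_\ip\otimes_A\theta_p$ and applying $\wedge$ (so $\theta_\ip\otimes_A\theta_p\mapsto -\theta_p\wedge\theta_\ip$) yields a linear relation among $B_p,C_p,D_p$; the symmetric combination $\theta_p\otimes_A\theta_\ip+\theta_\ip\otimes_A\theta_p$ lies in $\ker\wedge$, so only the antisymmetric part matters, and matching coefficients forces $C_p-D_p=1$. The cleanest normalization — the one chosen in the statement — is $D_p=0$, $C_p=-1$, i.e. $\sigma(\theta_p\otimes_A\theta_\ip)=B_p(\theta_p\otimes_A\theta_\ip+\theta_\ip\otimes_A\theta_p)-\theta_p\otimes_A\theta_\ip$, and symmetrically $\sigma(\theta_\ip\otimes_A\theta_p)=B_\ip(\theta_p\otimes_A\theta_\ip+\theta_\ip\otimes_A\theta_p)-\theta_\ip\otimes_A\theta_p$ from $\omega=\theta_\ip$. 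I should remark that $A_p,A_\ip$ are unconstrained by torsion-freeness because $\theta_p\wedge\theta_p=\theta_\ip\wedge\theta_\ip=0$ kills those terms on both sides.

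The main obstacle — really the only subtle point — is the bimodule-map classification: one must argue carefully that no ``cross terms'' between differently-weighted generators can occur, and this is exactly where $N\neq 4$ is needed, since for $N=4$ one has $p^2=\ip^2$ (both equal the unique order-two element) and the weight-$p^2$ and weight-$\ip^2$ sectors coincide, allowing $\sigma(\theta_p\otimes_A\theta_p)$ to pick up a $\theta_\ip\otimes_A\theta_\ip$ component and vice versa, so the ansatz in the Proposition genuinely fails there. Apart from isolating that hypothesis, everything else is the routine linear algebra of matching coefficients against the three exterior relations, and the final form in the statement is one convenient gauge choice among the one-parameter family of torsion-compatible $\sigma$'s on each mixed sector.
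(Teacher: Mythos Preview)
Your overall strategy matches the paper's: use that $\sigma$ is a bimodule map to restrict which basis tensors can occur, invoking $p^2\neq\ip^2$ for $N\neq 4$ to separate the ``diagonal'' sectors, and then impose the wedge condition~\eqref{sigmawedge} to fix the antisymmetric part on the mixed sectors. That is exactly the content of the paper's one-line proof.

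However, there is a genuine error. You assert that ``being a bimodule map then forces the coefficient functions to be genuinely constant,'' and you justify this by an equivariance argument. This is wrong: the $A_p$, $A_{\ip}$, $B_p$, $B_{\ip}$ are \emph{functions} in $C(\Z_N)$, as the Proposition explicitly states, and the later metric-compatibility analysis (e.g., Theorem~\ref{comp_N}, where expressions like $R_{g^{-1}}B_{g^{-1}}$ appear) relies on this. The bimodule condition only forces $\sigma(\theta_a\otimes_A\theta_b)$ to land in the span of those $\theta_c\otimes_A\theta_d$ with $cd=ab$; it does \emph{not} constrain the coefficient functions further. Concretely, if $\sigma(\theta_p\otimes_A\theta_p)=A_p\,\theta_p\otimes_A\theta_p$ with $A_p\in C(\Z_N)$, then for any $f$ one has $\sigma\big((\theta_p\otimes_A\theta_p)f\big)=\sigma\big(R_{\ip^2}(f)\,\theta_p\otimes_A\theta_p\big)=R_{\ip^2}(f)A_p\,\theta_p\otimes_A\theta_p=A_p\,(\theta_p\otimes_A\theta_p)f$, using only commutativity of $C(\Z_N)$; no restriction on $A_p$ arises. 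You may be conflating ``bimodule map over $C(\Z_N)$'' with ``$\Z_N$-equivariant map,'' which is a different (stronger) condition.

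There is also a sign slip in your torsion computation: you drop the minus in $\theta=-(\theta_p+\theta_\ip)$ when forming $-\wedge\circ\sigma(\theta_p\otimes_A\theta)$, and consequently obtain $C_p-D_p=1$ instead of the correct $C_p-D_p=-1$. Your final choice $C_p=-1$, $D_p=0$ is nonetheless the right one, so the conclusion survives, but the intermediate relation is inconsistent with it. Finally, the phrase ``one convenient gauge choice'' is misleading: there is no residual gauge, only the redundancy in your own three-parameter ansatz for a two-dimensional space; the form in the Proposition is the general solution, with $B_p$ a free function.
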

\begin{proof}
If follows directly from the fact that $\sigma$ is a bimodule map, $p^2\neq \ip^2$ for $N\neq 4$, and the compatibility condition
of $\sigma$ with the $\wedge$ \eqref{sigmawedge}.
\end{proof}
The assumption for the connection to be compatible with the star structure imposes further restrictions on the functions $A$ and $B$.
\begin{Proposition}
The connection in the Proposition~{\rm \ref{sigmy}} is star-compatible if and only if the relations below are fulfilled:
\begin{gather}
\big(R_{g}\overline{A_g}\big)\big(R_{g^{-1}}A_{g^{-1}}\big)=1,\qquad
|B_g-1|^2 + \overline{B_g}B_{g^{-1}}=1\label{star-connection}
\end{gather}
for $g\in\{p,\ip\}$.
\end{Proposition}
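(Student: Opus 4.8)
The plan is to unwind the star-compatibility condition $\nabla\circ\ast=\sigma\circ\dagger\circ\nabla$ evaluated on the left-invariant forms $\theta_p$ and $\theta_\ip$, using the explicit formula for $\nabla$ coming from innerness and the explicit form of $\sigma$ from Proposition~\ref{sigmy}. Since $N\neq 3$ we know $\alpha\equiv 0$, so $\nabla\omega=\theta\otimes_A\omega-\sigma(\omega\otimes_A\theta)$ with $\theta=-(\theta_p+\theta_\ip)$; hence $\nabla$ is completely determined by the data $A_p,A_\ip,B_p,B_\ip$. First I would compute $\nabla\theta_p$ and $\nabla\theta_\ip$ explicitly in the basis $\{\theta_p\otimes_A\theta_p,\theta_p\otimes_A\theta_\ip,\theta_\ip\otimes_A\theta_p,\theta_\ip\otimes_A\theta_\ip\}$, being careful that the coefficient functions do not commute past the $\theta_g$'s: one has $f\theta_g=\theta_g R_g(f)$, so moving a function across a tensor factor produces a right translation. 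This is the bookkeeping heart of the computation.

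Next I would apply the two maps to $\nabla\theta_p$. On one side, $\nabla(\theta_p^\ast)=\nabla(-\theta_\ip)=-\nabla\theta_\ip$, which is again read off from the formula. On the other side, $\dagger$ reverses the two tensor legs and applies $\ast$ to each, using $(\theta_p)^\ast=-\theta_\ip$, $(\theta_\ip)^\ast=-\theta_p$, and $(f\omega)^\ast=\omega^\ast \bar f$ together with the bimodule rule to bring the conjugated coefficient back into standard position — this is where factors like $R_g\overline{A_g}$ and $R_{g^{-1}}A_{g^{-1}}$ will be generated, since $\dagger$ on $A_p\,\theta_p\otimes_A\theta_p$ must move $\overline{A_p}$ across two tensor slots. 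Then I apply $\sigma$ to the result, which reintroduces the $A$'s and $B$'s, and finally match coefficients of the four basis tensors in $\nabla(\theta_p^\ast)=\sigma\circ\dagger\circ\nabla\theta_p$. Doing the same for $\theta_\ip$ (which by the $\ast$-symmetry $p\leftrightarrow\ip$ should just swap the roles of $g$ and $g^{-1}$) gives a finite system of functional equations.

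Collecting the equations, the diagonal sector ($\theta_g\otimes_A\theta_g$ coefficients) will force $(R_g\overline{A_g})(R_{g^{-1}}A_{g^{-1}})=1$, and the off-diagonal sector ($\theta_p\otimes_A\theta_\ip$ and $\theta_\ip\otimes_A\theta_p$) will collapse — after using the $-\theta_p\otimes_A\theta_\ip$ correction terms built into $\sigma$ — to $|B_g-1|^2+\overline{B_g}B_{g^{-1}}=1$ for $g\in\{p,\ip\}$. I would note that the two off-diagonal conditions obtained from $\theta_p$ and from $\theta_\ip$ turn out to be the same pair (up to $g\leftrightarrow g^{-1}$), so exactly the stated two relations survive; conversely, if these relations hold then running the computation backwards shows star-compatibility, giving the "if" direction for free. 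The main obstacle is purely organizational: keeping track of which right translations $R_p$, $R_\ip$, $R_{p^2}$, $R_{\ip^2}$ attach to which coefficient as functions are shuttled across tensor factors and conjugated, and making sure the hypothesis $N\neq 4$ (so $p^2\neq\ip^2$, already used in Proposition~\ref{sigmy}) is what guarantees no extra cross-terms appear to spoil the clean separation into the diagonal and off-diagonal systems.
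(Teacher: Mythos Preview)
Your approach is correct and is precisely the direct computation one has in mind for this result. The paper itself provides no proof at all for this proposition: it is stated as ``an immediate consequence'' with a reference to \cite{majid2020} (and \cite{M2} for the $\mathbb{Z}$ case), so there is nothing to compare against beyond confirming that your outline actually goes through.

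It does. From $\nabla\theta_g=(A_g-1)\theta_g\otimes_A\theta_g+(B_g-1)(\theta_g\otimes_A\theta_{g^{-1}}+\theta_{g^{-1}}\otimes_A\theta_g)$ one computes $\dagger\nabla\theta_p$; the key bookkeeping point is that moving $\overline{A_p-1}$ left across $\theta_\ip\otimes_A\theta_\ip$ produces $R_{p^2}(\overline{A_p}-1)$, while moving $\overline{B_p-1}$ across either mixed tensor returns it unchanged (the two translations cancel). Applying $\sigma$ and matching against $-\nabla\theta_\ip$, the $\theta_\ip\otimes_A\theta_\ip$ coefficient gives $A_\ip\,R_{p^2}\overline{A_p}=1$, which after acting by $R_\ip$ is exactly $(R_p\overline{A_p})(R_\ip A_\ip)=1$; the $\theta_p\otimes_A\theta_\ip$ and $\theta_\ip\otimes_A\theta_p$ coefficients both collapse to $|B_p-1|^2+\overline{B_p}B_\ip=1$. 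The computation for $\theta_\ip$ gives the $g\leftrightarrow g^{-1}$ versions, and the converse is immediate since the argument is reversible. Your remark about $N\neq4$ ensuring no cross-terms between the diagonal and off-diagonal sectors is also on point.
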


\section{Metric}
We use here the notion of metricity as introduced in \cite{BM14} (see also \cite{Si94}),
\begin{Definition}Let $\big(\Omega^1_H(A),{\rm d}\big)$ be a first order differential calculus over~$A$. We define the metric as a pair, an element
\[
\mathbf{g}=\mathbf{g}^{(1)}\otimes_A \mathbf{g}^{(2)}\in \Omega^1_H(A)\otimes_A\Omega^1_H(A),
\]
and a bimodule map
\[
(\cdot,\cdot)\colon \ \Omega^1_H(A)\otimes_A\Omega^1_H(A)\rightarrow A,
\] such that the pairing
between them is nondegenerate, in the following sense,
\begin{equation*}
\big(\omega,\mathbf{g}^{(1)}\big)\mathbf{g}^{(2)}=\omega=\mathbf{g}^{(1)}\big(\mathbf{g}^{(2)},\omega\big)
\end{equation*}
for all $\omega\in\Omega^1_H(A)$.
\end{Definition}

\begin{Definition}\label{def:comp1}
We say that the metric $\mathbf{g}$ is compatible with $\ast$, if $\mathbf{g}^*=\mathbf{g}$, that is
\begin{gather*}
\mathbf{g}^\ast = \big( \mathbf{g}^{(1)}\otimes_A \mathbf{g}^{(2)} \big)^*
= \big(\mathbf{g}^{(2)}\big)^*\otimes_A \big(\mathbf{g}^{(1)}\big)^* = \mathbf{g}, \\
(\omega^*, \rho^*) = (\rho, \omega)^*, \qquad \forall\, \omega,\rho \in \Omega^1(A).
\end{gather*}
\end{Definition}
\begin{Definition}\label{def:comp2}
We say that the metric is compatible with the higher-order differential calculus iff
$\mathbf{g} \in \ker \wedge$, that is
\[ {\wedge}\, \mathbf{g} = \mathbf{g}^{(1)}\wedge \mathbf{g}^{(2)} = 0. \]
\end{Definition}

\begin{Remark}The metric satisfying condition from Definition \ref{def:comp2} is called {\it quantum symmetric} in \cite{BM14}, while the condition in Definition \ref{def:comp1} is called {\it reality} therein.
\end{Remark}

In our situation, we have:
\begin{Lemma}
\label{lemma_metric01}
A nondegenerate metric over the minimal bicovariant calculus over $C(\Z_N)$
is given by functions $G_p$, $G_\ip$, which are everywhere different from $0$,
\begin{gather}
 \mathbf{g} = G_p \theta_p \otimes_A \theta_\ip + G_\ip \theta_\ip \otimes_A \theta_p,\label{metric1} \\
 ( \theta_a,\theta_b)=\frac{1}{R_{a^{-1}} G_{a^{-1}}} \delta_{a^{-1},b} , \qquad a,b=\{p,\ip\}.\nonumber 
\end{gather}
\end{Lemma}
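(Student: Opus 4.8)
The plan is to determine the most general metric by exploiting the two defining conditions: that $\mathbf{g}$ is an element of $\Omega^1_H(A)\otimes_A\Omega^1_H(A)$, and that the pairing $(\cdot,\cdot)$ is a bimodule map with a nondegenerate coupling to $\mathbf{g}$. First I would write the generic element of the tensor square as
\[
\mathbf{g} = G_1\,\theta_p\otimes_A\theta_p + G_2\,\theta_\ip\otimes_A\theta_\ip + G_p\,\theta_p\otimes_A\theta_\ip + G_\ip\,\theta_\ip\otimes_A\theta_p
\]
with $G_1,G_2,G_p,G_\ip\in C(\Z_N)$. The point is that not every such combination can appear: requiring the map $\omega\mapsto\big(\omega,\mathbf{g}^{(1)}\big)\mathbf{g}^{(2)}$ to equal the identity (hence in particular to be left $A$-linear, whereas $(\cdot,\cdot)$ is only a bimodule map) forces the ``diagonal'' components $G_1,G_2$ to vanish. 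I would show this by feeding $\theta_p$ and $\theta_\ip$ into the nondegeneracy relations and using the bimodule commutation rule $f\theta_g=\theta_g R_g(f)$ together with the fact that the only bimodule map $\Omega^1_H(A)\otimes_A\Omega^1_H(A)\to A$ pairs $\theta_a$ with $\theta_b$ nontrivially exactly when $b=a^{-1}$ (since $\theta_a\otimes_A\theta_b$ has a well-defined bimodule character that matches that of $A$ only in that case — the same order-of-$p$ argument used throughout the paper). This simultaneously pins down the form of $(\cdot,\cdot)$ up to two functions, say $(\theta_p,\theta_\ip)=K_p$ and $(\theta_\ip,\theta_p)=K_\ip$, and kills the diagonal terms of $\mathbf{g}$.

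Next I would impose the two nondegeneracy equations $\big(\omega,\mathbf{g}^{(1)}\big)\mathbf{g}^{(2)}=\omega=\mathbf{g}^{(1)}\big(\mathbf{g}^{(2)},\omega\big)$ on the basis one-forms $\omega=\theta_p,\theta_\ip$. With $\mathbf{g}=G_p\,\theta_p\otimes_A\theta_\ip + G_\ip\,\theta_\ip\otimes_A\theta_p$ and the bimodule map property of $(\cdot,\cdot)$, the left relation gives, after moving functions through $\theta$'s via the appropriate translations,
\[
\big(\theta_p,G_p\,\theta_p\big)\theta_\ip = \theta_p, \qquad \big(\theta_\ip,G_\ip\,\theta_\ip\big)\theta_p = \theta_\ip,
\]
and unwinding the bimodule-map rule (pulling $G_p$ out using $f\theta_g=\theta_g R_g(f)$, hence $(\theta_p,G_p\theta_p)$ becomes $(R_{p}^{-1}$-translated $G_p)$ times $(\theta_p,\theta_p)$, which is zero, versus $(\theta_p, f\theta_\ip)=R_{\ip}(f)(\theta_p,\theta_\ip)$ being the nonzero channel — I will be careful here) yields the scalar equations relating $K_p,K_\ip$ to $G_p,G_\ip$. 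Solving them produces $K_{a}=\big(R_{a^{-1}}G_{a^{-1}}\big)^{-1}$, which is exactly the claimed formula $(\theta_a,\theta_b)=\dfrac{1}{R_{a^{-1}}G_{a^{-1}}}\,\delta_{a^{-1},b}$, and simultaneously forces $G_p,G_\ip$ to be invertible in $C(\Z_N)$, i.e.\ nowhere vanishing. The right relation $\mathbf{g}^{(1)}\big(\mathbf{g}^{(2)},\omega\big)=\omega$ should then be checked to be automatically satisfied (or to give the identical constraint), confirming consistency.

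The main obstacle I anticipate is purely bookkeeping: keeping track of which translation $R_g$ or $R_{g^{-1}}$ acts on which function when a scalar is commuted past a $\theta$ inside the pairing, since $(\cdot,\cdot)$ is a bimodule map and the tensor factors absorb functions asymmetrically. In particular one must be vigilant that $\big(\theta_a, f\,\theta_b\big) = R_b(f)\,(\theta_a,\theta_b)$ while $\big(f\,\theta_a,\theta_b\big) = f\,(\theta_a,\theta_b)$, and that inside $\mathbf{g}^{(1)}\otimes_A\mathbf{g}^{(2)}$ the coefficient can be slid either way across $\otimes_A$ at the cost of a translation; a single misplaced $R_g$ changes the final answer. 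Once the dictionary is fixed the computation is short and linear. I would also remark that the nowhere-vanishing condition on $G_p,G_\ip$ is not an extra hypothesis but a consequence of nondegeneracy, as it arises from demanding that $K_a=\big(R_{a^{-1}}G_{a^{-1}}\big)^{-1}$ be a well-defined element of $C(\Z_N)$.
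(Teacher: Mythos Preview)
Your approach is correct and essentially identical to the paper's: first use the bimodule-map property of $(\cdot,\cdot)$ together with $f\theta_g=\theta_g R_g(f)$ to force $(\theta_a,\theta_b)=F_a\,\delta_{a^{-1},b}$, then impose nondegeneracy on the basis one-forms to kill the diagonal components of $\mathbf{g}$ and read off $F_a=(R_{a^{-1}}G_{a^{-1}})^{-1}$. The only cosmetic difference is that the paper works with the right relation $\omega=\mathbf{g}^{(1)}(\mathbf{g}^{(2)},\omega)$ rather than the left one you chose; the two computations are mirror images.

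One concrete slip to fix: your displayed intermediate equations $(\theta_p,G_p\,\theta_p)\theta_\ip=\theta_p$ and $(\theta_\ip,G_\ip\,\theta_\ip)\theta_p=\theta_\ip$ cannot be right as written, since the left-hand sides are proportional to $\theta_\ip$ (resp.\ $\theta_p$) and in fact vanish because $(\theta_a,\theta_a)=0$. What survives from $(\theta_p,\mathbf{g}^{(1)})\mathbf{g}^{(2)}=\theta_p$ is the cross term $(\theta_p,G_\ip\,\theta_\ip)\theta_p=\theta_p$, giving $K_p\,R_\ip(G_\ip)=1$, which is exactly the relation $K_a=(R_{a^{-1}}G_{a^{-1}})^{-1}$ you announce; the $\theta_\ip$-component of the same equation then forces $G_2=0$. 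You anticipated precisely this bookkeeping hazard in your final paragraph, and once the indices are corrected the argument is complete.
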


\begin{Corollary}\label{cor_metric}
If the metric in the Lemma~{\rm \ref{lemma_metric01}} is also compatible with the higher-order dif\-fe\-rential calculus $($i.e., $\wedge{\bf g}=0)$, then it can be described by the only one function $G:=G_p=G_\ip$.
\end{Corollary}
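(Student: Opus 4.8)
The plan is to use Lemma~\ref{lemma_metric01} to write the generic nondegenerate metric as ${\bf g} = G_p\,\theta_p\otimes_A\theta_\ip + G_\ip\,\theta_\ip\otimes_A\theta_p$ and then to impose $\wedge{\bf g}=0$ directly. First I would recall from Theorem~\ref{dcZN} (and the minimal exterior extension described in the first theorem) the wedge relations among the left-invariant forms: since $H=\{p,\ip\}$ and $p\neq\ip$, one has $\theta_p\wedge\theta_p$ and $\theta_\ip\wedge\theta_\ip$ as the generators coming from the diagonal, while the symmetrization forces $\theta_p\wedge\theta_\ip = -\,\theta_\ip\wedge\theta_p$. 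The key point is that $\theta_p\wedge\theta_\ip$ is a nonzero element of $\Omega^2$, so the coefficient of this monomial in $\wedge{\bf g}$ must vanish identically as a function on $\Z_N$.

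Next I would compute $\wedge{\bf g}$ explicitly. Using the bimodule rule $f\theta_g=\theta_g R_g(f)$ to move the coefficient functions past the first leg, we get
\begin{equation*}
\wedge{\bf g} = G_p\,\theta_p\wedge\theta_\ip + G_\ip\,\theta_\ip\wedge\theta_p
= \bigl(G_p - G_\ip\bigr)\,\theta_p\wedge\theta_\ip,
\end{equation*}
where in the second equality I used $\theta_\ip\wedge\theta_p=-\theta_p\wedge\theta_\ip$. (One must be slightly careful here about whether the coefficient functions need to be pulled through the wedge; since $G_p,G_\ip$ sit on the left of the monomials, they pass through unchanged, so no translation operator appears.) Hence $\wedge{\bf g}=0$ is equivalent to $G_p = G_\ip$ as functions on $\Z_N$, and defining $G:=G_p=G_\ip$ gives the claimed one-parameter (one-function) description, with $G$ nowhere vanishing since both $G_p$ and $G_\ip$ were.

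I would then note the converse is immediate: if $G_p=G_\ip=G$ then the displayed computation shows $\wedge{\bf g}=0$, so the metrics compatible with the higher-order calculus are exactly those of the form ${\bf g}=G(\theta_p\otimes_A\theta_\ip+\theta_\ip\otimes_A\theta_p)$ with the accompanying pairing $(\theta_a,\theta_b)=\frac{1}{R_{a^{-1}}G}\,\delta_{a^{-1},b}$. The only real subtlety — and the step I would check most carefully — is the claim that $\theta_p\wedge\theta_\ip\neq 0$ in $\Omega^2$ for all $N>2$; this is what makes the vanishing of its coefficient a genuine constraint rather than a vacuous one. This follows from Woronowicz's construction of the minimal exterior algebra as recalled in the proof of the first theorem: the antisymmetrization of $\sigma_W$ on $\theta_p\otimes\theta_\ip$ is nonzero because $\sigma_W(\theta_p\otimes\theta_\ip)=\theta_\ip\otimes\theta_\ip\neq\theta_p\otimes\theta_\ip$ (as $p^{-1}\ip p=\ip\neq p$), so $\theta_p\wedge\theta_\ip$ survives. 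Everything else is the routine bookkeeping already set up in Lemma~\ref{lemma_metric01}.
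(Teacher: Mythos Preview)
Your argument is correct and is exactly the direct computation needed: apply $\wedge$ to $\mathbf{g}=G_p\,\theta_p\otimes_A\theta_\ip+G_\ip\,\theta_\ip\otimes_A\theta_p$, use $\theta_\ip\wedge\theta_p=-\theta_p\wedge\theta_\ip$, and read off $G_p=G_\ip$ from the nonvanishing of $\theta_p\wedge\theta_\ip$. Note, however, that the proof block the paper places after this Corollary is really a proof of the preceding Lemma~\ref{lemma_metric01}: it derives the diagonal shape of $(\cdot,\cdot)$ from the bimodule constraints and then the relation $\mathbf{g}_{ac^{-1}}R_{a^{-1}}(F_{c^{-1}})=\delta_{a,c}$ from nondegeneracy, and it never mentions $\wedge\mathbf{g}$ at all. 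The paper evidently treats the Corollary itself as immediate once the Lemma is in hand. So your write-up and the paper's are complementary rather than competing: the paper supplies the form of $\mathbf{g}$, and you spell out the one-line wedge computation the paper leaves implicit.

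One small caution on your final paragraph: the formula $\sigma_W(\theta_g\otimes\theta_h)=\theta_{g^{-1}hg}\otimes\theta_h$ you quote from the paper appears to be a misprint (for an abelian group it would collapse to $\theta_h\otimes\theta_h$, which cannot be a bimodule automorphism), so your computation $\sigma_W(\theta_p\otimes\theta_\ip)=\theta_\ip\otimes\theta_\ip$ is not a safe justification. It is cleaner simply to invoke the statement of the first Theorem that the minimal extension satisfies $\theta_g\wedge\theta_h=-\theta_h\wedge\theta_g$, together with Theorem~\ref{dcZN}, where $\theta_p\wedge\theta_\ip$ appears explicitly as the nonzero generator of $\Omega^2$ (e.g., in the formula ${\rm d}\omega=\theta\wedge\omega+\omega\wedge\theta$). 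That already gives $\theta_p\wedge\theta_\ip\neq0$ without touching $\sigma_W$, and the rest of your argument goes through unchanged.
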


\begin{proof}Although this result is well-known (for example, the case with $G=\mathbb{Z}$ was proved in \cite{M2} where also the
consequences for the metric compatible with higher-order calculi were studied), for completeness we demonstrate the proof.

Since
\begin{equation*}
\theta_g f=\big(R_{g^{-1}} f\big) \theta_g,
\end{equation*}
for arbitrary $f \in \C(G)$, then we can now analyse the conditions we have from the required properties
of a metric $\mathbf{g}$. First, we obviously have
\[ f(\rho,\omega)=(\rho,\omega)f, \qquad (\rho f,\omega)=(\rho,f\omega),\]
and
\[ f(\rho,\omega)=(f\rho,\omega), \qquad (\rho,\omega f)=(\rho,\omega)f, \]
for every $f\in\C(G)$ and every $\rho,\omega\in \Omega^1_H(A)$. Therefore, we have
\begin{align*}
f(\theta_i,\theta_j)&=(\theta_i,\theta_j)f=(\theta_i,\theta_j f)=\big(\theta_i,\big(R_{j^{-1}}f\big)\theta_j\big) \\
&=\big(\theta_i\big(R_{j^{-1}}f\big),\theta_j\big)
=\big(\big(R_{i^{-1}}R_{j^{-1}}f\big) \theta_i,\theta_j\big) = \big(R_{i^{-1}}R_{j^{-1}}f\big)(\theta_i,\theta_j).
\end{align*}
Since the right action is free it implies that $j=i^{-1}$ whenever $(\theta_i,\theta_j)\neq0$. Therefore
the bimodule map $(\cdot, \cdot)$ has to be of the following form
\begin{equation*}
(\theta_a,\theta_b)=\delta_{a^{-1},b} F_{a},
\end{equation*}
where $F_a \in \C(\Z_N)$.
We are now ready to explore conditions that follow from equation~\eqref{metric1}. Let us write $\mathbf{g}$ in the basis, here $H=\{p,\ip\}$,
\begin{equation*}
\mathbf{g} = \sum\limits_{a,b \in \{p,\ip\}} \mathbf{g}_{ab} \theta_a \otimes_A \theta_b,
\end{equation*}
and consider the condition $\omega=\mathbf{g}^{(1)}\big(\mathbf{g}^{(2)},\omega\big)$ with
$\omega = \theta_c$. We have
\begin{gather*}
\theta_c
=\sum\limits_{a,b \in H} \mathbf{g}_{ab} \theta_a (\theta_b, \theta_c)=
\sum\limits_{a,b \in H} \mathbf{g}_{ab} \theta_a F_b \delta_{b,c^{-1}}
=\sum\limits_{a \in H} \mathbf{g}_{ac^{-1}} \theta_a F_{c^{-1}}.
\end{gather*}
The equality holds if and only if
\begin{equation*}
\mathbf{g}_{ac^{-1}} R_{a^{-1}}\big(F_{c^{-1}}\big) = \delta_{a,c}.
\end{equation*}
Taking $a=c$ we immediately obtain the claimed result.
\end{proof}

It follows immediately that $\mathbf{g}$ is a central element in $\Omega^1_H(A) \otimes_A \Omega^1_H(A)$ and
we can compute both contractions of the metric, that is not only $ \big(\mathbf{g}^{(1)},\mathbf{g}^{(2)}\big)$
but also $ \big(\mathbf{g}^{(2)},\mathbf{g}^{(1)}\big)$ make sense.
We have
\begin{gather}
\big(\mathbf{g}^{(1)},\mathbf{g}^{(2)}\big)=G_p(\theta_p,\theta_{\ip})+G_\ip(\theta_\ip,\theta_p)=\frac{G_p}{R_\ip G_\ip}+\frac{G_\ip}{R_{p} G_p}, \nonumber\\
\big(\mathbf{g}^{(2)},\mathbf{g}^{(1)}\big)=(\theta_\ip,G_p\theta_p)+(\theta_p,G_\ip\theta_{\ip})=(R_pG_p)\frac{1}{R_pG_p}+(R_\ip G_\ip)\frac{1}{R_\ip G_\ip}=2.\label{metric-short}
\end{gather}

\begin{Definition}
The metric is right-invariant if $R_{{h}}(\mathbf{g}) = \mathbf{g}$ (resp.\ left-invariant if $L_{{h}}(\mathbf{g}) = \mathbf{g}$),
for every $h\in G$, where we have used the unique extension of right (resp.\ left) translations to the whole differential algebra, so that
\[ R_g({\rm d}f)={\rm d}(R_gf), \qquad (\mathrm{resp.} \ L_g({\rm d}f)={\rm d}(L_gf)).\]
\end{Definition}

\begin{Lemma}The metric $\mathbf{g}$ is left-invariant if and only if for every $g\in\{p,\ip\}$, $G_p={\rm const}$. A~nondegenerate
metric is $\ast$-compatible iff for the metric coefficients are real, $G_g=G_g^\ast$.
\end{Lemma}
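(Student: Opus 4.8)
The plan is to prove each equivalence separately, working directly from the definitions of left-invariance, $\ast$-compatibility, and the explicit form of the metric from Lemma~\ref{lemma_metric01}.

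First I would establish the left-invariance part. Recall from Lemma~\ref{lemma_metric01} that $\mathbf{g} = G_p\,\theta_p\otimes_A\theta_\ip + G_\ip\,\theta_\ip\otimes_A\theta_p$, and recall that the left translations $L_h$ extend to the differential algebra and commute with ${\rm d}$. The key observation is that the left-invariant forms $\theta_p$, $\theta_\ip$ are themselves fixed by every $L_h$; indeed this is precisely why they are called left-invariant, and it follows from $\theta_h=\sum_{g}{\rm d}e_{gh}\,e_g$ together with $L_h({\rm d}e_{gh}\,e_g)={\rm d}(e_{hgh})\,e_{hg}$ and reindexing. Since $L_h$ is an algebra map on $\C(\Z_N)$ (the group is abelian, so $L_h = R_h$ on functions, but this is not even needed) and respects the tensor product over $A$, applying $L_h$ to $\mathbf{g}$ gives $L_h(\mathbf{g}) = L_h(G_p)\,\theta_p\otimes_A\theta_\ip + L_h(G_\ip)\,\theta_\ip\otimes_A\theta_p$. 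Using the uniqueness of the expansion coefficients (the $\theta_a\otimes_A\theta_b$ are a left-module basis), $L_h(\mathbf{g})=\mathbf{g}$ for all $h$ is equivalent to $L_h(G_p)=G_p$ and $L_h(G_\ip)=G_\ip$ for all $h\in\Z_N$, which in turn is equivalent to $G_p$ and $G_\ip$ being constant functions on $\Z_N$. Note that by Corollary~\ref{cor_metric} we could additionally have $G_p=G_\ip$ if the metric is quantum symmetric, but without that assumption the statement is just that each coefficient is constant; I would phrase it as $G_p, G_\ip\in\C\cdot 1$.

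Next I would handle the $\ast$-compatibility part. By Definition~\ref{def:comp1}, $\ast$-compatibility means $\mathbf{g}^\ast = \mathbf{g}$, where $\mathbf{g}^\ast = (\mathbf{g}^{(2)})^\ast\otimes_A(\mathbf{g}^{(1)})^\ast$. Using $(\theta_p)^\ast = -\theta_\ip$ and $(\theta_\ip)^\ast = -\theta_p$ from Theorem~\ref{dcZN}, together with the bimodule relation $f\theta_g = \theta_g R_g(f)$ (equivalently $\theta_g f = (R_{g^{-1}}f)\theta_g$), I compute $\mathbf{g}^\ast$ term by term: $(G_p\,\theta_p\otimes_A\theta_\ip)^\ast = \theta_\ip^\ast\otimes_A(G_p\theta_p)^\ast = (-\theta_p)\otimes_A(\overline{G_p}\,(-\theta_\ip)\cdots)$ — here I must be careful to move the function $\overline{G_p}$ across using the bimodule rule and track which translation appears. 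After carrying out this bookkeeping for both terms and re-expanding in the basis $\theta_a\otimes_A\theta_b$, the condition $\mathbf{g}^\ast=\mathbf{g}$ becomes a pair of equations relating $\overline{G_p}$, $\overline{G_\ip}$ to $G_p$, $G_\ip$ up to right translations; combined with the right-hand requirement $(\omega^\ast,\rho^\ast)=(\rho,\omega)^\ast$ on the pairing and the formula $(\theta_a,\theta_b) = \frac{1}{R_{a^{-1}}G_{a^{-1}}}\delta_{a^{-1},b}$, this forces the coefficients to be real, $G_g = \overline{G_g}$.

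The main obstacle will be the second part: keeping the noncommutative bookkeeping straight when computing $\mathbf{g}^\ast$ and when checking the pairing reality condition, since functions and forms do not commute and the $\ast$ operation reverses tensor factors while also conjugating and applying the relevant right translations. One has to verify that the translations that appear cancel consistently so that the net condition is simply reality of each $G_g$ rather than something twisted like $G_g = R_{g^2}\overline{G_g}$; I expect this cancellation to work precisely because $\theta_p^\ast\propto\theta_\ip$ pairs the $p$-slot with the $\ip$-slot and the two translations $R_p$, $R_\ip$ are mutually inverse. The left-invariance direction, by contrast, is essentially immediate once one records that $\theta_p,\theta_\ip$ are $L_h$-fixed.
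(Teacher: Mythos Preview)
The paper states this lemma without proof, so there is nothing to compare your approach against; your plan is the natural direct computation and it is correct.

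One remark on the second part: the bookkeeping is cleaner than you fear, and you do not need to invoke the pairing condition separately. Computing directly,
\[
(G_p\,\theta_p\otimes_A\theta_\ip)^\ast=\theta_\ip^\ast\otimes_A(G_p\theta_p)^\ast=(-\theta_p)\otimes_A\bigl(-\theta_\ip\,\overline{G_p}\bigr)=\theta_p\otimes_A R_p(\overline{G_p})\,\theta_\ip=R_\ip R_p(\overline{G_p})\,\theta_p\otimes_A\theta_\ip=\overline{G_p}\,\theta_p\otimes_A\theta_\ip,
\]
and likewise for the other term, so $\mathbf{g}^\ast=\overline{G_p}\,\theta_p\otimes_A\theta_\ip+\overline{G_\ip}\,\theta_\ip\otimes_A\theta_p$. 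The translations $R_p$ and $R_\ip$ cancel exactly as you anticipated, and $\mathbf{g}^\ast=\mathbf{g}$ is already equivalent to $G_g=\overline{G_g}$. The pairing reality condition $(\omega^\ast,\rho^\ast)=(\rho,\omega)^\ast$ then follows automatically, since the pairing is determined by $\mathbf{g}$ via Lemma~\ref{lemma_metric01}; it does not impose an additional constraint.
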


Finally let us see when a $*$-compatible metric defines a norm on the module of one-forms.

\begin{Lemma}\label{Hilbert}
Let us define $($see also {\rm \cite[Proposition~8.40]{BM_book})}:
\[ \langle \cdot, \cdot \rangle \colon \ \Omega^1(A)\otimes_A\Omega^1(A)\rightarrow A,
\qquad \langle \omega_1, \omega_2 \rangle:=(\omega_1^\ast,\omega_2). \]
If all $G_g$ are real and negative then $\Omega^1(A)$ equipped with $\langle \cdot,\cdot\rangle$ is a Hilbert $C^\ast$-module over~$A$.
\end{Lemma}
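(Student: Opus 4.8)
The plan is to verify the axioms of a Hilbert $C^\ast$-module directly, using the explicit formulas from Lemma~\ref{lemma_metric01} and Corollary~\ref{cor_metric}. Recall that a right Hilbert $C^\ast$-module over $A=C(\Z_N)$ requires an $A$-valued sesquilinear pairing $\langle\cdot,\cdot\rangle$ that is $A$-linear in the second argument, satisfies $\langle\omega_1,\omega_2\rangle^\ast=\langle\omega_2,\omega_1\rangle$, is positive ($\langle\omega,\omega\rangle\ge 0$ with equality only for $\omega=0$), and is complete in the induced norm. Since $A$ is finite-dimensional, completeness is automatic, so the real content is positivity and the adjointness/linearity relations, which follow from $\ast$-compatibility of the metric together with the sign condition on the $G_g$.

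First I would record the bimodule-map property of $(\cdot,\cdot)$ and the relation $f\theta_g=\theta_g R_{g^{-1}}(f)$ to see that $\langle\omega_1,\omega_2\rangle=(\omega_1^\ast,\omega_2)$ is conjugate-linear in $\omega_1$ over $A$ and linear in $\omega_2$, using $(\omega_1 f)^\ast=f^\ast\omega_1^\ast$ and that $(\cdot,\cdot)$ is a left/right bimodule map. Next, the hermitian symmetry $\langle\omega_1,\omega_2\rangle^\ast=\langle\omega_2,\omega_1\rangle$ is precisely $(\omega_1^\ast,\omega_2)^\ast=(\omega_2^\ast,\omega_1)$, which is exactly the reality condition on the pairing from Definition~\ref{def:comp1} applied to $\omega=\omega_1^\ast$, $\rho=\omega_2^\ast$ (here we use that the metric is $\ast$-compatible, which by the preceding Lemma is equivalent to the $G_g$ being real). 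For positivity, I would expand a general one-form $\omega=\omega_p\theta_p+\omega_\ip\theta_\ip$ and compute $\langle\omega,\omega\rangle=(\omega^\ast,\omega)$ using $(\theta_a,\theta_b)=\delta_{a^{-1},b}/(R_{a^{-1}}G_{a^{-1}})$, $\theta_p^\ast=-\theta_\ip$, $\theta_\ip^\ast=-\theta_p$, and $G:=G_p=G_\ip$. The cross terms pair $\theta_\ip$ with $\theta_p$ and vice versa, giving something of the shape $-\tfrac{1}{R_p G}|R_p\omega_p|^2\cdot(\text{translate})+\cdots$; the key computational point is that all the coefficients of $|\omega_p|^2$ and $|\omega_\ip|^2$ at each point of $\Z_N$ come out to be $-1/G$ times a positive quantity, hence are positive functions precisely when $G<0$ everywhere. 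Since the result is a nonnegative function that vanishes at a point only if the corresponding $\omega_p,\omega_\ip$ vanish there, $\langle\omega,\omega\rangle\ge 0$ with equality iff $\omega=0$, i.e.\ the pairing is a faithful positive form.

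The main obstacle I anticipate is the bookkeeping in the positivity computation: one must keep careful track of which right translation $R_p$ or $R_\ip$ acts on which coefficient when commuting functions past the $\theta$'s, and then confirm that after all translations the ``diagonal'' coefficient seen by $|\omega_g(k)|^2$ for each $k\in\Z_N$ is a genuine value of $-1/G$ (up to a positive translate of another $1/|G|$-type factor), with no residual cross terms $\overline{\omega_p}\,\omega_\ip$ surviving — this is where the specific index structure $\delta_{a^{-1},b}$ of the metric pairing does the work, since $\theta_p^\ast=-\theta_\ip$ pairs only with $\theta_p$. Once the quadratic form is shown to be diagonal with strictly negative-of-$G$ coefficients, the conclusion that $\langle\cdot,\cdot\rangle$ is positive-definite, and hence that $\big(\Omega^1(A),\langle\cdot,\cdot\rangle\big)$ is a Hilbert $C^\ast$-module, is immediate from finite-dimensionality of $A$. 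I would present the diagonalization as a short explicit display and leave the remaining verifications as routine.
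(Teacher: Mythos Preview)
Your approach is essentially the paper's: verify sesquilinearity and $A$-linearity from the bimodule property of $(\cdot,\cdot)$, deduce hermitian symmetry from the reality condition in Definition~\ref{def:comp1}, check positivity by an explicit diagonal computation, and invoke finite-dimensionality for completeness. Two small corrections: the commutation rule is $f\theta_g=\theta_g R_g(f)$ (equivalently $\theta_g f=(R_{g^{-1}}f)\theta_g$), not $f\theta_g=\theta_g R_{g^{-1}}(f)$; and you should not invoke Corollary~\ref{cor_metric} or set $G:=G_p=G_\ip$, since the lemma does not assume $\wedge\mathbf{g}=0$ --- the positivity computation goes through verbatim with distinct $G_p$, $G_\ip$, yielding $\langle\omega,\omega\rangle=-R_p\!\big(|\omega_p|^2/G_p\big)-R_\ip\!\big(|\omega_\ip|^2/G_\ip\big)$.
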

\begin{proof}
The defined map is sesquilinear (right $\C$-linear, left antilinear) and satisfies
\[ \langle \omega_1 a_1, \omega_2 a_2 \rangle= a_1^* \langle \omega_1,\omega_2\rangle a_2,\]
for every $a_1,a_2 \in A$ and all $\omega_{1},\omega_2\in\Omega^1(A)$.
Furthermore, if ${\bf g}={\bf g}^\ast$ then also $\langle \omega_1, \omega_2\rangle^\ast = \langle \omega_2, \omega_1\rangle$ and $\langle \omega,\omega\rangle \ge 0$ if all $G_g$
are negative-valued. Moreover, in such a case $\langle \omega,\omega\rangle=0$ iff $\omega=0$. To sum up, for ${\bf g}={\bf g}^*$ with negative-valued $G_g$, we indeed
have a pre-Hilbert module structure. Therefore, $\|\omega\|:=\|\langle \omega, \omega\rangle\|^{\frac{1}{2}}$ defines a norm on $\Omega^1(A)$,
making its completion (which in a finite-dimensional case is $\Omega^1(A)$) a
Hilbert $C^\ast$-module over $A$.
\end{proof}

\section{Metric compatibility condition}
Let us now concentrate on the metric compatibility condition for a bimodule linear connection over
the minimal bicovariant calculus on $C(\Z_N)$. Although we shall later concentrate on the solutions that correspond to the real-valued metrics that provide nondegenerate scalar products over $\Omega^1$, we solve the metric compatibility
problem in all generality.

\begin{Definition}[{\cite{BM14}, see also \cite[Chapter 8]{BM_book}}]
A linear connection $(\nabla,\sigma)$ is said to be compatible with the metric ${\bf g}$ if
\begin{equation*}
(\nabla\otimes\mathrm{id}){\bf g}+(\sigma\otimes\mathrm{id})(\mathrm{id}\otimes\nabla){\bf g}=0.
\end{equation*}
\end{Definition}

Before we proceed with the conditions for the general $\Z_N$ case, $N>4$, let us consider a much simpler
case of $N=2$.
\begin{Example}[{Levi-Civita bimodule connections for $\Z_2$, compare \cite[Lemma~2.1]{M1}}]
In the case of $\Z_2$, we have $p=\ip$ and therefore the entire connection is determined by one
function $S$:
\[ \nabla(\theta_p) = (S-1) \theta_p \otimes_A \theta_p, \qquad \sigma(\theta_p \otimes_A \theta_p)= S \theta_p \otimes_A \theta_p,\]
the metric is given by $G \theta_p \otimes_A \theta_p$ and the metric compatibility then reads:
\[ (G-R_pG) + G (S-1) + G S (R_p(S)-1) = 0. \]
Using notation $G_0 = G(e)$, $G_1 = G(p)$ and $S_0$, $S_1$ for the respective values of~$S$ we have
\[ G_0-G_1+G_0(S_0-1)+G_0S_0(S_1-1)=0,\]
and
\[ G_1-G_0 + G_1 (S_1-1) + G_1 S_1 (S_0-1)=0.\]
The above system of equations is equivalent to the following two
\[ G_1=G_0 S_0 S_1, \qquad
G_0=G_1S_1 S_0,\]
which lead to $G_1 = \pm G_0$ and
\[S_0 S_1 =\pm 1.\]
Observe that even in the case of constant metric we can have a one-parameter family of torsion-free,
metric compatible connections given by
\[S_0 = z , \qquad S_1 = \frac{1}{z}. \]
\end{Example}
\begin{Theorem}\label{comp_N}
For the torsion-free bimodule connection for the minimal bicovariant calculus over $\mathbb{Z}_N$ with $N>4$ the metric
compatibility conditions takes the following form:
\begin{gather}
 G_g\big(R_{g^{-1}}B_{g^{-1}}\big)A_g=R_{g^{-1}}G_g,\nonumber\\
G_{g^{-1}}\big(R_gB_g-1\big)B_{g^{-1}}+G_g(B_g-1)\big(R_{g^{-1}}A_{g^{-1}}\big)=0,\nonumber\\
R_gG_g=G_{g^{-1}}(R_gB_g-1)\big(B_{g^{-1}}-1\big)+G_gB_g\big(R_{g^{-1}}A_{g^{-1}}\big),\label{rownfin2}
\end{gather}
for $g=p,\ip$.
\end{Theorem}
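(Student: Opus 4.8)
The plan is to substitute the explicit formulas for the metric $\mathbf{g}$ (from Lemma~\ref{lemma_metric01}, in the general form with two functions $G_p$, $G_\ip$) and for the braiding $\sigma$ (from Proposition~\ref{sigmy}, involving $A_p$, $A_\ip$, $B_p$, $B_\ip$) directly into the metric compatibility equation
\[
(\nabla\otimes\mathrm{id})\mathbf{g}+(\sigma\otimes\mathrm{id})(\mathrm{id}\otimes\nabla)\mathbf{g}=0,
\]
and then collect coefficients of the three independent basis elements of $\Omega^1(A)\otimes_A\Omega^1(A)\otimes_A\Omega^1(A)$ that actually occur. First I would use Proposition~\ref{connection_determined} (valid since $N>4$ implies $N\neq 3$) to write $\nabla\omega=\theta\otimes_A\omega-\sigma(\omega\otimes_A\theta)$ with $\theta=-(\theta_p+\theta_\ip)$, so that $\nabla$ is entirely expressed through $\sigma$, i.e.\ through $A_g$, $B_g$. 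Then I would compute $\nabla\theta_p$ and $\nabla\theta_\ip$ explicitly: apply $\sigma$ to $\theta_p\otimes_A\theta$ and $\theta_\ip\otimes_A\theta$ using the four defining relations, remembering the bimodule rule $f\theta_g=\theta_g R_g(f)$ when moving the coefficient functions past the forms, and that the tensor product is over $A$ so coefficients can be shuffled between legs.

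Next I would assemble the two pieces of the compatibility equation. Since $\mathbf{g}=G_p\,\theta_p\otimes_A\theta_\ip+G_\ip\,\theta_\ip\otimes_A\theta_p$, the term $(\nabla\otimes\mathrm{id})\mathbf{g}$ requires $\nabla$ applied to the first leg together with the coefficient function $G_p$ or $G_\ip$ sitting to its left, which by the Leibniz rule for $\nabla$ produces ${\rm d}G_g$ terms as well; however ${\rm d}G_g=-[\theta_p+\theta_\ip,G_g]$, so these reorganize into right-translation differences $R_gG_g-G_g$ which is exactly where the $R_gG_g$ on the left-hand sides of \eqref{rownfin2} come from. The term $(\sigma\otimes\mathrm{id})(\mathrm{id}\otimes\nabla)\mathbf{g}$ requires $\nabla$ on the second leg followed by $\sigma$ on the (now first two) legs; here one must be careful that $\sigma$ is a bimodule map so it commutes with multiplication by functions only after the functions are positioned according to the correct commutation relations. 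After expanding everything in the basis $\{\theta_p\otimes\theta_p\otimes\theta_\ip,\ \theta_\ip\otimes\theta_p\otimes\theta_\ip,\ \theta_p\otimes\theta_\ip\otimes\theta_p,\ \theta_\ip\otimes\theta_\ip\otimes\theta_p,\ \ldots\}$, the vanishing of each coefficient gives a scalar (function-valued) equation; grouping the ones with $g=p$ and $g=\ip$ produces the three displayed equations symmetric under $p\leftrightarrow\ip$.

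The main obstacle, and the part demanding real care rather than routine algebra, is bookkeeping the positions of the coefficient functions $G_g$, $A_g$, $B_g$ relative to the $\theta$'s: every time a function passes a $\theta_g$ it is hit by $R_g$ (or $R_{g^{-1}}$ going the other way), and the tensor-product-over-$A$ identifications mean a coefficient can legitimately be written on either side of the $\otimes_A$, so one must fix conventions and stick to them to avoid spurious translations. A secondary subtlety is justifying that no terms outside the span indexed by $g\in\{p,\ip\}$ survive: this is where $N>4$ enters, since for $N=4$ one would have $p^2=\ip^2$ and extra bimodule maps (hence extra terms in $\sigma$) as flagged before Proposition~\ref{sigmy}, and for $N=3$ the map $\alpha$ need not vanish. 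Assuming those degenerate cases are excluded, the three families of coefficient equations collapse precisely to \eqref{rownfin2}, and comparing with the $\Z_2$ Example (where $p=\ip$, $A_g$ and $B_g$ degenerate into the single function $S$) provides a useful consistency check on the derivation.
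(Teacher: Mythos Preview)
Your proposal is correct and follows essentially the same route as the paper: both compute $(\nabla\otimes\mathrm{id})\mathbf{g}$ and $(\sigma\otimes\mathrm{id})(\mathrm{id}\otimes\nabla)\mathbf{g}$ explicitly in the basis of triple tensors and read off the coefficient equations. One small slip: there are six (not three) basis tensors that actually occur---the two ``pure'' ones $\theta_g\otimes_A\theta_g\otimes_A\theta_g$ are absent---and it is the $p\leftrightarrow\ip$ symmetry that collapses the six equations into the three displayed families; otherwise your plan and your remarks on the bookkeeping of right-translations and the role of $N>4$ match the paper's argument.
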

\begin{proof}First, notice that
\begin{gather*}
(\nabla\otimes\mathrm{id}){\bf g} =
(G_pA_p - R_\ip G_p)\theta_p\otimes_A\theta_p\otimes_A\theta_\ip +(G_\ip B_\ip-R_\ip G_\ip)\theta_p\otimes_A\theta_\ip\otimes_A\theta_p \\
\hphantom{(\nabla\otimes\mathrm{id}){\bf g} =}{}
+G_\ip(B_\ip-1)\theta_\ip\otimes_A\theta_p\otimes_A\theta_p+G_p(B_p-1)\theta_p\otimes_A\theta_\ip\otimes_A\theta_\ip\\
\hphantom{(\nabla\otimes\mathrm{id}){\bf g} =}{} +(G_\ip A_\ip-R_pG_\ip)\theta_\ip\otimes_A\theta_\ip\otimes_A\theta_p+(G_p B_p-R_pG_p )\theta_\ip\otimes_A\theta_p\otimes_A\theta_\ip.
\end{gather*}
On the other hand
\begin{gather*}
(\sigma\otimes\mathrm{id})(\mathrm{id}\otimes\nabla){\bf g} =
G_p(R_\ip B_\ip -1)A_p\theta_p\otimes_A\theta_p\otimes_A\theta_\ip+G_\ip(R_p B_p -1)A_\ip\theta_\ip\otimes_A\theta_\ip\otimes_A\theta_p \\
\hphantom{(\sigma\otimes\mathrm{id})(\mathrm{id}\otimes\nabla){\bf g} =}{}
+ [G_p(R_\ip B_\ip -1)(B_p-1)+G_\ip B_\ip(R_p A_p -1) ]\theta_p\otimes_A\theta_\ip\otimes_A\theta_p \\
\hphantom{(\sigma\otimes\mathrm{id})(\mathrm{id}\otimes\nabla){\bf g} =}{}
+ [G_\ip(R_p B_p -1)(B_\ip-1)+G_p B_p(R_\ip A_\ip -1) ]\theta_\ip\otimes_A\theta_p\otimes_A\theta_\ip \\
\hphantom{(\sigma\otimes\mathrm{id})(\mathrm{id}\otimes\nabla){\bf g} =}{}
+ [G_p(R_\ip B_\ip-1) B_p+G_\ip (B_\ip-1)(R_p A_p -1) ]\theta_\ip\otimes_A\theta_p\otimes_A\theta_p \\
\hphantom{(\sigma\otimes\mathrm{id})(\mathrm{id}\otimes\nabla){\bf g} =}{}
+ [G_\ip(R_p B_p-1) B_\ip+G_p(B_p-1)(R_\ip A_\ip -1) ]\theta_p\otimes_A\theta_\ip\otimes_A\theta_\ip.
\end{gather*}
Taking the sum of these two expressions we get the final result.
\end{proof}

Let us now solve the system of equations \eqref{rownfin2}. To start we substitute $A_g=a_g+1$ and $B_g=b_g+1$, then the equations
read
\begin{gather}
R_{g^{-1}}G_g=G_g(1+a_g)\big(1+R_{g^{-1}}b_{g^{-1}}\big),\nonumber\\
G_gb_g\big(1+R_{g^{-1}}a_{g^{-1}}\big)+G_{g^{-1}}(R_gb_g)\big(1+b_{g^{-1}}\big)=0,\nonumber\\
R_gG_g=G_g(1+b_g)\big(1+R_{g^{-1}}a_{g^{-1}}\big)+G_{g^{-1}}b_{g^{-1}}(R_gb_g).\label{rownfin2a}
\end{gather}
for $g=p,\ip$.

Introducing $X_g=\frac{R_gG_g}{G_{g^{-1}}}$ and combining the first and the third equation we obtain
\begin{equation}
b_{g^{-1}} ( R_g b_g ) =X_g - R_{g^{-1}} X_g. \label{war_X3_1}
\end{equation}

As the left-hand side is unchanged when we replace $g$ by $g^{-1}$ and act on the result with~$R_g$, we obtain
\begin{equation*}
X_g - R_{g^{-1}} X_g = R_g \big( X_{g^{-1}}- R_{g} X_{g^{-1}} \big),
\end{equation*}
Since $X_g$ satisfies
\[ R_g X_{g^{-1}} = \frac{1}{X_g}, \]
we obtain
\[
X_g - R_{g^{-1}} X_g =\frac{1}{X_g} - R_{g} \frac{1}{X_g},
\]
which leads to
\[
X_g + R_{g} \frac{1}{X_g} = R_{g^{-1}} \left( X_g + R_g \frac{1}{X_g} \right),
\]
and as a result
\begin{equation}
\label{condition_cyclic}
X_g+\frac{1}{R_gX_g}=c=\mathrm{const}.
\end{equation}
Notice that the above relation is, effectively equivalent to $\big({\bf g}^{(1)},{\bf g}^{(2)}\big)=c$, which means that in this case both contractions as
computed in \eqref{metric-short} are constant.

Writing explicitly $X_p$, $X_{\tilde{p}}$ as functions over $\Z_N$, the relation~\eqref{condition_cyclic} can be
reformulated in the form of the following recurrence system, here for simplicity we denote the function~$X_p$
as $f$ and choose $p=1$ (so $\tilde{p}=-1$),
\begin{equation}\label{recur-f}
\begin{cases}
(c-f(n))f(n + 1)=1,\\
f(0)=f(N),
\end{cases}
\end{equation}
for a function $f\colon \mathbb{N} \rightarrow \mathbb{C}$. Note that we can equivalently choose the equation for
$X_{\tilde{p}}$ (denote this function as $F$) but this corresponds to the choice of $-1$ as the generator of $\Z_N$
and therefore give the equations
\begin{equation*}
\begin{cases}
(c-F(n))F(n - 1)=1,\\
F(0)=F(N),
\end{cases}
\end{equation*}
which is equivalent to \eqref{recur-f} since
\[ F(n) = \frac{1}{f(n-1)}.\]

\subsection{Solving the recurrence relation}
We begin with solving the following recurrence equation \eqref{recur-f}. First, let us choose $\gamma$ such that
$(c-\gamma)\gamma=1$. There are two possible solutions of this equation,
\[ \gamma_\pm = \tfrac{1}{2} \big(c \pm \sqrt{c^2-4}\big),\]
which may be, in general, complex numbers and are mutual inverses, that is $\gamma_- = (\gamma_+)^{-1}$.
Fixing one root $\gamma$ we define $f(n)=k(n)+\gamma$, so that the equation we have to solve reduces to
an equivalent one,
\begin{equation*}
k(n)k(n+1)=\frac{1}{\gamma}k(n+1)-\gamma k(n).
\end{equation*}
Since $\gamma\not=0$ then we either have $k\equiv 0$ or all $k(n)$ are different from $0$.
In the first case we have a constant (trivially periodic) solution,
\[ f(n) = \gamma, \]
whereas in the second case we set $h(n)=\frac{1}{k(n)}$ and obtain
\begin{equation*}
h(n+1)=\frac{1}{\gamma^2}h(n)-\frac{1}{\gamma}.
\end{equation*}
The above relation has a solution,
\begin{equation*}
h(n) =
\begin{cases}
\dfrac{\gamma}{\gamma^2-1 } \bigl( H^2 \gamma^{-2n-2} - 1 \bigr) , & \gamma^2 \not=1, \\
H^2 - n \gamma, & \gamma^2 = 1,
\end{cases}
\end{equation*}
where $H^2$ is an arbitrary constant up to the following restrictions:
\begin{gather*}
 \gamma^2 \not=1\colon \ H^2\neq \gamma^{2k+2}, \ k \in \{0,\ldots, N-1\}, \\
 \gamma^2 = 1\colon \ \gamma^{-1} H^2 \notin \{0,1,\ldots,N-1\}.
\end{gather*}
Before we pass to $f$ observe that in the case $\gamma^2=1$ we cannot have a periodic solution for $h$,
since $h(0)=h(N)$ enforces $\gamma=0$, which contradicts our starting point.
If $\gamma^2\not=1$ the periodicity condition is
\[ h(0) = \frac{\gamma}{\gamma^2-1 } \bigl( H^2 \gamma^{-2} - 1 \bigr) = \frac{\gamma}{\gamma^2-1 } \bigl( H^2 \gamma^{-2N-2} - 1 \bigr) = h(N), \]
which is possible only if $\gamma^{2N}=1$ or $H=0$. The solution with $H=0$ is nothing else as a constant solution with
$\gamma^{-1}$ (corresponding to the other choice of the root of the equation $(c-\gamma)\gamma=1$).

We can write explicit form of a non-constant (i.e., with $H\neq 0$) solution for $f$:
\begin{equation}
f(n) = \frac{H \gamma^{-n} - H^{-1} \gamma^{n}}{H\gamma^{-n-1} - H^{-1}\gamma^{n+1}}.\label{rec_f}
\end{equation}

This form of the solution is very convenient, as it is easy to verify the multiplication property for $f$:
\begin{equation*}
\prod\limits_{n=0}^{N-1} f(n) = \frac{H-H^{-1}}{H\gamma^{-N}-H^{-1}\gamma^{N}}=\gamma^N,
\end{equation*}
where we have used $\gamma^{2N}=1$. Note that this holds as well for the constant solution $f(n)=\gamma$.

There are $2N-2$ possible values of $\gamma$ giving non-constant periodic solutions for~$f$, however,
since $c=\gamma+\gamma^{-1}$, both $\gamma$ and $\gamma^{-1}$ result in the same value of~$c$, so that
there are only $N-1$ possible values of~$c$, for which there exist non-constant solutions.
Since $\gamma^{2N}=1$ those $c$ are real.

\subsubsection{The real-valued solutions}
As we are interested in real metrics $G_g$, we consider real-valued solutions of the above recurrence system. It immediately follows from \eqref{rec_f} that
non-constant real solutions exist only for \mbox{$|H|=1$}, i.e.. for $H={\rm e}^{{\rm i}\phi}$ with some $\phi$. Using the fact that $\gamma$ satisfies $\gamma^{2N}=1$, $\gamma^2\neq 1$
and $H^2\neq \gamma^{2n+2}$, $n\in\mathbb{Z}$, we can choose $\gamma = {\rm e}^{\pi {\rm i} \frac{l}{N}}$ and obtain a set of solutions, parametrized by $l=1,\dots ,N-1,\allowbreak N+1,\ldots, 2N-1$,
\begin{equation*}
f_{l,\phi}(n)=\cos\left(\frac{\pi l}{N}\right)+\sin\left(\frac{\pi l}{N}\right)\cot\left(\phi-\frac{\pi l}{N}(n+1)\right).
\end{equation*}
Some of the solutions are, however, repeated as $f_{2N-l,\phi}=f_{l,-\phi}$. Moreover, for such $\gamma$ we have
$c=2\cos\big(\frac{\pi l}{N}\big)$. Note that although we have excluded the case $\gamma^2=1$, the above formula recovers some of
the constant real solutions, which arise for
$l=0$ ($f(n)=1$) and $l=N$ ($f(n)=-1$), so in fact we can extend the range of $l$ also into $l=0$ and $l=N$. It is also easy to see that in case of the real nonconstant solutions~$X_p$ cannot be a positive
function. Finally, let us observe that in case we do not demand reality of the metric, the formula above is still valid but with $\phi$ allowed to be an arbitrary complex number.

\subsubsection{The coefficients of the linear connection}
In the next step we are going to solve the system of equations following from
\eqref{rownfin2a} without restricting ourselves to real solutions of~$X_g$.
Using the first and second equation and~\eqref{war_X3_1} we
obtain a linear dependence between~$b_g$ and~$a_{g^{-1}}$:
\begin{equation*}
G_{g^{-1}} ( R_g b_g + X_g ) = G_g \big( 1 + R_{g^{-1}} a_{g^{-1}} \big) .
\end{equation*}
Reintroducing $1 + R_{g^{-1}} a_{g^{-1}}$ into the first equation we have
\begin{equation}\label{iksy_b}
R_{g^{-1}} X_g = (1+b_g)(X_g +R_g b_g),
\end{equation}
which, after splitting $X_g +R_g b_g$ into $(X_g - 1) + R_g(1 + b_g)$, is equivalent to
\[
R_g(1+b_g)=1 - X_g + \frac{R_{g^{-1}}X_g}{1+b_g}.
\]
Note that $1+b_g$ cannot vanish at any point since $X_g$ cannot vanish at any point, so we can divide both
sides by it. Next, substituting
\[ Y_g = \frac{1+ b_g}{R_{g^{-1}}X_g} +1, \]
we obtain
\[
R_g Y_g = \frac{1}{X_g} \frac{Y_g}{Y_g-1}.
\]
This has an obvious solution $Y_g \equiv 0$, which gives
\[
b_g = - R_{g^{-1}}X_g -1,
\]
and apart from this solution $Y_g$ must be invertible at each point. Then, take
$y_g = (Y_g)^{-1}$ to obtain
\[ R_g y_g =X_{g} (1-y_g ). \]
To solve this equation it is sufficient to find just one solution~$y_g^0$ of the inhomogeneous equation
and a family of solutions of the homogeneous equation
\[
R_g y_g^{{\rm hom}} = - y_g^{{\rm hom}} X_{g}.
\]
The first problem is solved explicitly by verifying that
\[
y_g^0 = \frac{1}{c+2}\big(1+R_{g^{-1}}X_g\big),
\]
provided that $c\neq -2$. We shall discuss the special case $c=-2$ later.

Next, we solve the homogeneous equation. It is easy to see that all solutions are parametrized by a multiplicative constants $\kappa_p$, $\kappa_{\ip}$,
\begin{gather}
y_p^{{\rm hom}}(n) = \kappa_p (-1)^n \prod_{k=0}^{n-1} X_p(k),\label{yhom}\\
y_{\ip}^{{\rm hom}}(n) = \kappa_{\ip} (-1)^n \prod_{k=0}^{n-1} X_p(k) = \frac{\kappa_{\ip}}{\kappa_{p}} y_p^{{\rm hom}}(n)\nonumber
\end{gather}
for $n \in \mathbb{Z}_N$, where $\kappa_g$ are such that $y_g^{\rm hom}+y_g^0\neq 0$, since we require $y_g$ to be invertible.

Observe that for the function $y_g$ to be periodic we need to have
\[\kappa_g (-1)^N \prod_{k=0}^{N-1} X_g(k) = \kappa_g, \]
which, after taking into account that the product of all $X_g(k)$ in the non-constant case is $\gamma^N$ gives
us
\[ \gamma^N = (-1)^N, \qquad \text{or} \qquad \kappa_g=0,\]
further restricting the possible solutions for $X_g$, which then must be parametrised by an integer $l =0,1,\ldots, 2N-1$ such that $N+l$ is always even.
From now on we will always assume that $N+l$ is even, and proceed with the further analysis.

If we have $X_g=\hbox{const}$ then either $\kappa_g=0$ or $X_g^N=(-1)^N$. For real-valued solutions it restricts constant $X_g$ to be $-1$, or, for even $N$, to be $1$. But since here $c \neq -2$ the first possibility is not allowed.

Finally we go back to the case $c=-2$, for which there exists only the constant solution $X_g=-1$. In this case the equation for $b_g$
reduces to
\[
R_g(1+b_g)+\frac{1}{1+b_g}=2,
\]
which, as we already know from the previous subsection, has only one periodic solution $b_g=0$.

To summarize, we have three possible cases:
\begin{itemize}\itemsep=0pt
\item ${X_g = -1}$. In this case $b_g=0$.

\item ${X_g = \text{const}=\gamma}$, ${\gamma \not=-1}$ and ${\gamma^N\neq (-1)^N}$.
In this case the only periodic solutions are constant ones with $b_g=0$ or $b_g=-1-\gamma$, however,
from \eqref{war_X3_1} we see that at least one of $b_g$, $b_{g^{-1}}$ must be $0$, so we have
three possible solutions: $b_g=b_{g^{-1}}=0$, or
$b_g=0$ and $b_{g^{-1}}=-1-\frac{1}{\gamma}$, or $b_g=-1-\gamma$ and $b_{g^{-1}}=0$.

\item ${X_g \not= \text{const}}$ or ${X_g=\gamma}$ with ${\gamma^N=(-1)^N}$ and ${\gamma\neq -1}$.
In this case, combining the results, we have two possibilities
\begin{equation*}
b_g =
\begin{cases}
-1-R_{g^{-1}} X_g, \\
\dfrac{(c+2) R_{g^{-1}}X_g}{1+R_{g^{-1}} X_g+ (c+2) y^{{\rm hom}}_g}-R_{g^{-1}}X_g-1.
\end{cases}
\end{equation*}
where $y^{{\rm hom}}_g$ is expressed in~\eqref{yhom}.
\end{itemize}

Now, what is left in the last case is the compatibility with~\eqref{war_X3_1}. Indeed, although we had determined possible solutions for $b_g$ and
$b_{g^{-1}}$ we must further check whether they are related with each other through~\eqref{war_X3_1}. First observe that if $b_g$ is of the first type,
then from \eqref{war_X3_1} it follows that the solution for $b_{g^{-1}}$ is
\[ b_g=-1-R_{g^{-1}}X_g, \qquad b_{g^{-1}}=-\frac{X_g - c+\frac{1}{X_g}}{1+X_g}. \]
The last expression for $b_{g^{-1}}$ can be rewritten as
\[ \frac{c+2}{X_g+1}-\frac{1+X_g}{X_g}, \]
which is the solution of the second type with the homogeneous part vanishing. Similarly, inserting the solution for $b_g$ of the second type
with $y^{{\rm hom}}_g=0$ to \eqref{war_X3_1}, we end up with the solution for $b_{g^{-1}}$ of the first type.

Our goal is to establish a relation between $y^{{\rm hom}}_{g^{-1}}$ and $y^{{\rm hom}}_g$. We have already discussed cases with vanishing homogeneous parts,
and have shown that they are coupled, in the aforementioned sense, to the solutions of the first kind, so from now on we assume that for both $g$ and $g^{-1}$ we have
a solution for $b$ of the second type and with $y^{{\rm hom}}\neq 0$.

Inserting these two solutions into \eqref{war_X3_1} we end up with
\begin{equation}
y^{{\rm hom}}_{g^{-1}}\big(R_gy^{{\rm hom}}_g\big)=\frac{X_g-R_{g^{-1}}X_g}{(c+2)^2}. \label{rel_y}
\end{equation}

Using \eqref{yhom} we can write \eqref{rel_y} as
\begin{equation*}
\left(\kappa_g (-1)^n \prod_{k=0}^{n-1} X_g(k) \right)
\left( \kappa_{g^{-1}} (-1)^{n+1} \prod_{k=0}^{n} X_g(k) \right)
= \frac{X_g(n)-X_g(n-1)}{(c+2)^2},
\end{equation*}
which gives
\begin{equation*}
(c+2)^2\kappa_g\kappa_{g^{-1}}X_g(0)=\frac{X_g(n-1)-X_g(n)}{\prod\limits_{k=0}^{n-1}X_g(k)X_g(k+1)}.
\end{equation*}
Notice that since $X_g$ satisfies \eqref{condition_cyclic},
we have
\[
\frac{X_g(n-1)-X_g(n)}{X(n-1)X(n)} = \frac{1}{X_g(n)} - \frac{1}{X_g(n-1)}
= X_g(n-2) - X_g(n-1), \]
so the right hand side is independent on $n$, and the equation imposes a
condition on the product of $\kappa_g$ and $\kappa_{g^{-1}}$:
\begin{equation*}
\kappa_g\kappa_{g^{-1}}=\frac{1}{(c+2)^2}\left(\frac{X_g(N-1)}{X_g(0)}-1\right)
=- H^2 \frac{(\gamma-1)^2}{(\gamma+1)^2(H^2-1)^2}.
\end{equation*}

To sum up, we have proven the following result:
\begin{Theorem}\label{theor_ZN}
For the minimal calculus on $\mathbb{Z}_N$, $N>4$ with $H=\{p,\ip\}$ the only allowed torsion-free connections compatible with the metric ${\bf g}$ are
determined by the bimodule map $\sigma$ as in Proposition~{\rm \ref{sigmy}}, where
\[
A_g=\frac{R_{g^{-1}}G_g}{G_g\big(1+R_{g^{-1}}b_{g^{-1}}\big)},
\]
and for $B_g=1+b_g$ we have the following possibilities depending on $X_g$,

{Case I.}
If $X_g\neq {\rm const}$ the only following functions $X_p$ are allowed
\begin{gather}
X_p(n)=\cos\left(\frac{l\pi }{N}\right)+\sin\left(\frac{l\pi}{N}\right)\cot\left(\phi-\frac{(n+1)l\pi}{N}\right),
\label{Xnnonc}
\end{gather}
for $l=1,\dots ,N-1$, and an arbitrary constant $\phi$ such that ${\rm e}^{2{\rm i}\phi}\neq {\rm e}^{\frac{2l\pi}{N}(n+1)}$.

Then with $c=2\cos\big(\frac{l\pi}{N}\big)$ there exist three possible solutions:
\begin{gather*}
(a) \quad b_p=-1-R_{\ip}X_p,\qquad
b_{\ip}=\frac{c+2}{X_p+1}-\frac{1+X_p}{X_p}.
\\ (b) \quad b_{\ip}=-1-R_{p}X_{\ip},\qquad
b_{p}=\frac{c+2}{X_{\ip}+1}-\frac{1+X_{\ip}}{X_{\ip}}.
\end{gather*}
and, provided that $\prod\limits_{k\in\mathbb{Z}_N}X_g(k)=(-1)^N$,
\begin{equation*}
(c) \quad b_g=\frac{(c+2)R_{g^{-1}}X_g}{1+R_{g^{-1}}X_g +(c+2)y_g^{{\rm hom}}}-R_{g^{-1}}X_g-1, \qquad g = \{p,\ip\},
\end{equation*}
where
\[
y_p^{{\rm hom}}(n) = \kappa_p (-1)^n \prod_{k=0}^{n-1} X_p(k),\qquad
y_{\ip}^{{\rm hom}}(n) = \kappa_{p^{-1}} (-1)^n \prod_{k=0}^{n-1} X_p(k).
\]

Furthermore, the constants $\kappa_p$ and $\kappa_{\ip}$ are restricted via a constraint
\[
\kappa_p\kappa_{\ip}=\frac{1}{(c+2)^2}\left(\frac{X_p(N-1)}{X_p(0)}-1\right),
\]
and also requirement that $y_p^{{\rm hom}}+y_p^0\neq 0$.

{Case II.}
If $X_g = \gamma \equiv{\rm const}$:
\begin{itemize}\itemsep=0pt
\item $b_g=b_{g^{-1}}=0$ is always a solution $($independently of $\gamma)$,
\item if $\gamma^N \neq (-1)^N$, then there are two more independent solutions:
\begin{enumerate}\itemsep=0pt
\item[$(a)$] $b_p=0$ and $b_{\ip}=-1-\frac{1}{\gamma}$,
\item[$(b)$] $b_p=-1-\gamma$ and $b_{\ip}=0$.
\end{enumerate}
\item if $\gamma^N=(-1)^N$ and $\gamma\neq -1$ then~\eqref{Xnnonc} is also a solution.
\end{itemize}
Notice that for $\gamma=-1$ the cases $(a)$ and $(b)$ reduce to the first bullet point.
\end{Theorem}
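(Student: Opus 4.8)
The plan is to assemble the theorem from the chain of reductions set up above, starting from the metric-compatibility system of Theorem~\ref{comp_N}. First I would record the reparametrization $A_g=1+a_g$, $B_g=1+b_g$ that turns \eqref{rownfin2} into \eqref{rownfin2a}, and introduce $X_g=R_gG_g/G_{g^{-1}}$, which packages the metric data into a single pair of functions subject to the inversion identity $R_gX_{g^{-1}}=1/X_g$. Combining the first and third equations of \eqref{rownfin2a} yields \eqref{war_X3_1}, expressing $b_{g^{-1}}(R_gb_g)$ purely in terms of $X_g$; exploiting that the left side is invariant under $g\leftrightarrow g^{-1}$ (followed by $R_g$), together with the inversion identity, forces the conserved quantity \eqref{condition_cyclic}, namely $X_g+1/(R_gX_g)=c=\mathrm{const}$. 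This is the point at which the problem decouples into "metric" data (the admissible $X_g$) and "connection" data (the admissible $b_g$), and I would make that structure explicit before proceeding.

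Next I would solve the periodic scalar recurrence \eqref{recur-f} for $f=X_p$ (with $p=1$). Picking a root $\gamma$ of $(c-\gamma)\gamma=1$ and substituting $f=k+\gamma$, then $h=1/k$, linearizes the equation; its general solution is either the constant $f=\gamma$ or the one-parameter family \eqref{rec_f}. Imposing $f(0)=f(N)$ kills the $\gamma^2=1$ branch and, in the non-constant case, forces $\gamma^{2N}=1$, which makes $c$ real and yields the trigonometric form \eqref{Xnnonc} with $c=2\cos(l\pi/N)$, $l=1,\dots,N-1$, the excluded values of $\phi$ being exactly those making a denominator vanish. I would also record the multiplicativity $\prod_{n=0}^{N-1}f(n)=\gamma^N$, valid for both the constant and non-constant solutions, as it is needed twice later. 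This exhausts Case~I's list of admissible $X_p$; since $X_{\ip}$ is then determined from $X_p$ via $R_pX_{\ip}=1/X_p$, the metric side is finished.

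For the connection coefficients I would follow the subsection "The coefficients of the linear connection": the first and second equations of \eqref{rownfin2a} together with \eqref{war_X3_1} give a linear relation between $b_g$ and $R_{g^{-1}}a_{g^{-1}}$, from which the stated formula for $A_g$ follows and the system collapses to \eqref{iksy_b}. Substituting $Y_g=(1+b_g)/(R_{g^{-1}}X_g)+1$ and then $y_g=Y_g^{-1}$ reduces this to the inhomogeneous linear recurrence $R_gy_g=X_g(1-y_g)$, whose solutions are the explicit particular solution $y_g^0=\tfrac{1}{c+2}(1+R_{g^{-1}}X_g)$ (valid for $c\neq-2$) plus the homogeneous family \eqref{yhom} parametrized by $\kappa_g$; periodicity of $y_g^{\mathrm{hom}}$ forces $\gamma^N=(-1)^N$ or $\kappa_g=0$, which is where the parity constraint "$N+l$ even" enters. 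The degenerate branch $Y_g\equiv0$ gives $b_g=-1-R_{g^{-1}}X_g$, and the exceptional value $c=-2$ (forcing $X_g=-1$) is disposed of directly, yielding only $b_g=0$. Unwinding $y_g\mapsto b_g$ produces the two candidate forms for $b_g$ displayed in the theorem.

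Finally I would reassemble everything into Cases~I and~II. The one genuinely delicate step, and the place I expect the real work to be, is checking mutual compatibility of the $b_g$ and $b_{g^{-1}}$ candidates through \eqref{war_X3_1}: one must show that a first-type solution for $b_g$ forces $b_{g^{-1}}$ to be second-type with vanishing homogeneous part (a short rewriting of $-(X_g-c+1/X_g)/(1+X_g)$), and that if both are second-type with nonzero homogeneous parts then inserting \eqref{yhom} into \eqref{war_X3_1} gives \eqref{rel_y}, whose right-hand side is $n$-independent by virtue of \eqref{condition_cyclic}, leaving exactly the constraint on $\kappa_p\kappa_{\ip}$; throughout one must keep track of the nonvanishing conditions ($1+b_g\neq0$, invertibility of $Y_g$, $y_g^{\mathrm{hom}}+y_g^0\neq0$). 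Case~II is then the specialization $X_g=\gamma$ constant: $b_g=b_{g^{-1}}=0$ always works, $\gamma^N\neq(-1)^N$ additionally admits the two one-sided constant solutions coming from \eqref{war_X3_1}, and $\gamma^N=(-1)^N$ with $\gamma\neq-1$ re-admits the family \eqref{Xnnonc}. Collecting these statements, with $\sigma$ recovered from $(A_g,B_g)$ via Proposition~\ref{sigmy}, yields the theorem.
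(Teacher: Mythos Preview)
Your proposal is correct and follows the paper's own argument essentially step for step: the same reparametrization to \eqref{rownfin2a}, the same derivation of \eqref{war_X3_1} and \eqref{condition_cyclic}, the same linearization of the recurrence via $f=k+\gamma$ and $h=1/k$, the same $Y_g\to y_g$ reduction to an inhomogeneous linear recurrence with particular solution $y_g^0$ and homogeneous part \eqref{yhom}, and the same compatibility check through \eqref{war_X3_1} yielding \eqref{rel_y} and the $\kappa_p\kappa_{\ip}$ constraint. There is nothing to add.
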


As the next step let us summarize the restrictions on the possible metrics. As we have computed all possible solutions for
\[ X_g = \frac{R_g G_g}{G_{g^{-1}}},\]
so that
\[ G_g(n+1) = G_{g^{-1}}(n) f(n),\]
we can always choose one of the functions $G_p$, $G_{\ip}$ arbitrarily, and then the second one will be determined by the relation above.

\begin{Remark}\label{remark_Hilbert}For the real metric satisfying ${\bf g}={\bf g}^\ast$, the constant solutions above are restricted to real constant $X_g$, whereas the non-constant solutions are restricted by an additional demand that $\phi$ is a real parameter. Only the solutions with $X_g={\rm const} > 0$ give the real metric ${\bf g}$ that equips the module of one-forms with a Hilbert $C^\ast$-module structure (see Lemma~\ref{Hilbert}).
\end{Remark}

\begin{Remark}If we further assume that the metric is compatible with the differential calculus, $\wedge\,{\bf g}=0$, the solution for $X_g$ provides the solution for $G_p=G_\ip$ given by
\[ G(n) = G_0 \prod_{k=0}^{n-1} f_{l,\phi}(k). \]
The only real constant solutions that are compatible with the differential calculus are restricted to $X_g=1$, and, for even $N$, also $-1$, yet only the first one gives a Hilbert $C^\ast$-module structure. Moreover, no non-constant solution gives rise to a Hilbert $C^\ast$-module structure since they are not of constant sign.
\end{Remark}

We can further assume that in addition to compatibility of the metric with the star structure, the connection itself is star-compatible, i.e., relations in \eqref{star-connection} are satisfied.

Using the first relation in \eqref{rownfin2a} we can express $A$ in terms of~$B$, and then the first relation in~\eqref{star-connection} implies that
\begin{equation}\label{star_b_condition}
\overline{B_g}B_{g^{-1}}=\frac{R_{g^{-1}}X_g}{X_g}.
\end{equation}
Observe that since $X_g$ satisfies \eqref{condition_cyclic}, the right-hand side of this equation is non-negative. Indeed, using \eqref{condition_cyclic} we can write
\begin{equation*}
1-\frac{R_{g^{-1}}X_g}{X_g}=(R_{g^{-1}}X_g)^2-c(R_{g^{-1}}X_g)+1,
\end{equation*}
and the problem reduces to examine the quadratic equation $x^2-cx+1=0$, which has no real roots iff $|c|<2$. Hence for those $c$, the right-hand side is always positive. Interestingly, this is the same range of $c$ for which there exist non-constant solutions for $X_g$. On the other hand, for constant solutions $X_g$ combining \eqref{star_b_condition} with the Theorem~\ref{theor_ZN} we see that in these cases $B_g$ has to be equal to $1$. Let us further examine which non-constant solutions determined in Theorem~\ref{theor_ZN} are allowed when compatibility with the star structure is imposed, so we are concentrate on Case I therein. By a straightforward computation we check that cases~(a) and~(b) do not fulfil the condition~\eqref{star_b_condition}. So, suppose now we take solutions as in the case (c) with non-zero homogeneous parts $y_g^{{\rm hom}}$. Using~\ref{condition_cyclic} again, an the fact that $c\neq-2$, the condition~\eqref{star_b_condition} can be reduced to
\[ y_{g^{-1}}^{\rm hom}+\overline{y_g^{\rm hom}}=0.\]
On the other hand, $y^{{\rm hom}}$ satisfy \eqref{rel_y} and $R_gy_g^{\rm hom}=-y_g^{\rm hom}X_g$, so together with the relation above it implies that
\[ (c+2)^2\big|y_g^{\rm hom}\big|^2=1-\frac{R_{g^{-1}}X_g}{X_g},\]
so we get a restriction for possible star-compatible solutions
\[ |B_g-1|=\big|(c+2)y_g^{\rm hom}\big|.\]
Parametrizing
\begin{gather*}
B_g-1=r{\rm e}^{{\rm i}\rho}, \qquad (c+2)y_g^{\rm hom}=r{\rm e}^{{\rm i}\varphi}, \qquad r=\sqrt{1-\frac{R_{g^{-1}}X_g}{X_g}},\\
 a=(c+2)R_{g^{-1}}X_g, \qquad b=1+R_{g^{-1}}X_g,
\end{gather*}
the relation for the solution $B_g$
\[ B_g-1=\frac{(c+2)R_{g^{-1}}X_g}{1+R_{g^{-1}}X_g+(c+2)y_g^{\rm hom}}-\big(1+R_{g^{-1}}X_g\big)\]
can be rephrased as
\[ b^2-a+r^2{\rm e}^{{\rm i}(\varphi+\rho)}+rb\big({\rm e}^{{\rm i}\varphi}+{\rm e}^{{\rm i}\rho}\big)=0.\]
Simple calculations show that $b^2-a=r^2$, so for $r\neq 0$ the star-compatibility condition for a~connection introduces the following constraints on phases $\rho$ and $\varphi$:
\[ {\rm e}^{{\rm i}\rho}=-\frac{r+b{\rm e}^{{\rm i}\varphi}}{b+r{\rm e}^{{\rm i}\varphi}}.\]

As a result we have the following.
\begin{Proposition} Suppose the conditions as specified in the Remark~{\rm \ref{remark_Hilbert}} are satisfied, i.e., we have a Hilbert $C^\ast$-module structure on~$\Omega^1(A)$ given by the metric~$\mathbf{g}$. Then there exists a unique torsion free, metric compatible and star-compatible linear
connection.
\end{Proposition}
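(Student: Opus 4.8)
The plan is to show that the Hilbert $C^\ast$-module hypothesis, through Remark~\ref{remark_Hilbert}, pins us to the most rigid corner of the classification of Theorem~\ref{theor_ZN}, and that imposing star-compatibility then removes the last remaining freedom. First I would invoke Remark~\ref{remark_Hilbert}: the existence of a Hilbert $C^\ast$-module structure forces the metric to be real and the ratios $X_g=\frac{R_gG_g}{G_{g^{-1}}}$ to be \emph{positive constants}. Writing $X_p=\gamma>0$ we then have $X_{\ip}=\gamma^{-1}$ and $c=\gamma+\gamma^{-1}\ge 2$, so in particular $c\neq-2$ and we are outside the range $|c|<2$ in which non-constant solutions for $X_g$ occur; hence only Case~II of Theorem~\ref{theor_ZN} can apply.

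Next I would bring in the star-compatibility consequence~\eqref{star_b_condition}. Because $X_g$ is constant, $\overline{B_g}\,B_{g^{-1}}=\frac{R_{g^{-1}}X_g}{X_g}=1$ for $g=p,\ip$. Running down the list in Case~II of Theorem~\ref{theor_ZN}: the solutions $(a)$ and $(b)$, available only when $\gamma^N\neq(-1)^N$, give $\overline{B_p}\,B_{\ip}=-\gamma^{-1}<0$, incompatible with the identity above; and the case-$(c)$-type solutions, available when $\gamma^N=(-1)^N$ and $\gamma\neq-1$, are excluded exactly as in the paragraph preceding the statement — combining the star relation $y_{g^{-1}}^{\rm hom}+\overline{y_g^{\rm hom}}=0$ with~\eqref{rel_y} and $R_gy_g^{\rm hom}=-y_g^{\rm hom}X_g$ yields $(c+2)^2\big|y_g^{\rm hom}\big|^2=1-\frac{R_{g^{-1}}X_g}{X_g}=0$, so $y_g^{\rm hom}\equiv0$ and again $B_g=1$. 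Thus in every subcase $b_p=b_{\ip}=0$, i.e.\ $B_p=B_{\ip}=1$, and the formula for $A_g$ in Theorem~\ref{theor_ZN} collapses to $A_g=\frac{R_{g^{-1}}G_g}{G_g}$. Since by Proposition~\ref{sigmy} the data $(A_g,B_g)$ determine $\sigma$ and by Proposition~\ref{connection_determined} the connection is determined by $\sigma$, this already gives uniqueness.

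For existence I would check directly that this candidate — $B_g=1$ and $A_g=\frac{R_{g^{-1}}G_g}{G_g}$ — meets all three requirements. It has the shape of Proposition~\ref{sigmy}, hence the associated inner connection is torsion-free. Substituting $b_g=0$ into~\eqref{rownfin2a}, the first equation becomes the definition $1+a_g=\frac{R_{g^{-1}}G_g}{G_g}$, the second is trivially satisfied, and the third reduces to $R_gG_g=G_g\,\big(R_{g^{-1}}A_{g^{-1}}\big)$, which holds because $X_g\equiv\gamma$ forces $R_gG_g=\gamma G_{g^{-1}}$ and $R_{g^{-1}}G_{g^{-1}}=\gamma^{-1}G_g$. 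For the star-compatibility relations~\eqref{star-connection}: reality of $G_g$ gives $\overline{A_g}=A_g$, and then $\big(R_g\overline{A_g}\big)\big(R_{g^{-1}}A_{g^{-1}}\big)=\frac{G_gG_{g^{-1}}}{(R_gG_g)(R_{g^{-1}}G_{g^{-1}})}=1$ by the same two identities, while $|B_g-1|^2+\overline{B_g}B_{g^{-1}}=0+1=1$. Hence $(\nabla,\sigma)$ exists and is unique.

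The only genuinely non-bookkeeping step is the exclusion of the case-$(c)$ solutions when $\gamma^N=(-1)^N$: one must feed the star-compatibility relation between $y_g^{\rm hom}$ and $y_{g^{-1}}^{\rm hom}$ into the quadratic constraint~\eqref{rel_y} and use~\eqref{condition_cyclic} with $X_g$ constant to force the homogeneous part to vanish. Everything else is reading off the list already assembled in Theorem~\ref{theor_ZN} and the star-compatibility computation carried out just before the statement.
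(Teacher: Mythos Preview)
Your proof is correct and follows essentially the same route as the paper: the Hilbert $C^\ast$-module hypothesis gives $X_g=\text{const}>0$, and then the star-compatibility analysis carried out immediately before the statement forces $B_g=1$, leaving a unique connection. You are simply more explicit than the paper's one-line proof, in particular by spelling out the exclusion of the subcases of Case~II and by verifying existence directly against~\eqref{rownfin2a} and~\eqref{star-connection}.
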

\begin{proof}
It follows from the computations before that in such a case we have $X_g={\rm const}>0$ and from the above discussion it follows that the star-compatibility of the connection fixes $B_g$ to be equal to $1$.
\end{proof}

We finish with a remark that this corollary is in a complete agreement with the result obtained in~\cite{majid2020}, where only the case $X_g=1$ was
assumed.

\section{The curvature}
In this section we shall compute the curvature of the torsion-free linear connection compatible
with the metric~$\mathbf{g}$. Though it can be done for arbitrary metrics that satisfy the
compatibility connections, we shall restrict ourselves to the case of real metrics that equip
the bimodule of one-forms with a Hilbert $C^\ast$-module structure. This will restrict
$X_p=\gamma>0$.

\begin{Definition}
The Riemannian curvature for a given connection $\nabla$ is a map
\begin{equation*}
{\bf R}_\nabla\colon \ \Omega^1 \rightarrow \Omega^2 \otimes_A \Omega^1,
\end{equation*}
defined by the following prescription
\begin{equation*}
{\bf R}_\nabla= ({\rm d}\otimes_A\mathrm{id} -\mathrm{id}\wedge \nabla )\nabla.
\end{equation*}
\end{Definition}
By a straightforward computation we get the following:
\begin{Theorem}
The Riemannian curvature for the connection $\nabla$ from Theorem~{\rm \ref{theor_ZN}} is
\begin{equation*}
{\bf R}_\nabla(\theta_g)=\theta_g\wedge\theta_{g^{-1}}\otimes_A \rho_g, \qquad g=p,\tilde{p},
\end{equation*}
where
\begin{gather*}
\rho_g= \big[B_g(R_gA_g)-A_g\big(R_{g^{-1}}B_g\big)-\big(R_{g^{-1}}B_{g^{-1}}-1\big)(B_g-1)\big]\theta_g\\
\hphantom{\rho_g=}{} + \big[\big(R_{g^{-1}}A_{g^{-1}}\big)(1-B_g)+B_g(R_gB_g-1)\big]\theta_{g^{-1}}.
\end{gather*}
\end{Theorem}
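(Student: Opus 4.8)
The plan is to compute $\mathbf{R}_\nabla(\theta_g)={\rm d}\circ\nabla(\theta_g)-(\mathrm{id}\wedge\nabla)\nabla(\theta_g)$ directly from the explicit form of $\nabla$ supplied by Theorem~\ref{theor_ZN}. Since the calculus on $\mathbb{Z}_N$ is inner with ${\rm d}f=-[\theta_p+\theta_\ip,f]$, $\theta_g^2$-type terms are controlled by Theorem~\ref{dcZN}, which also gives ${\rm d}\theta_p={\rm d}\theta_\ip=0$; hence the only contribution to ${\rm d}\otimes\mathrm{id}$ comes from differentiating the coefficient functions. I would write $\nabla(\theta_g)$ in the basis using the formulas $\sigma(\theta_g\otimes_A\theta_g)=A_g\theta_g\otimes_A\theta_g$ and $\sigma(\theta_g\otimes_A\theta_\ip)=B_g(\theta_p\otimes_A\theta_\ip+\theta_\ip\otimes_A\theta_p)-\theta_g\otimes_A\theta_\ip$ together with $\nabla(\omega)=\theta\otimes\omega-\sigma(\omega\otimes\theta)$, recalling from Proposition~\ref{connection_determined} that $\alpha\equiv 0$ for $N\neq 3$ (and we are anyway in the regime $N>4$).

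First I would establish the \emph{shape} $\mathbf{R}_\nabla(\theta_g)=\theta_g\wedge\theta_{g^{-1}}\otimes_A\rho_g$. This is where the only real structural input is needed: one must check that after all the wedge products only the basis two-form $\theta_g\wedge\theta_{g^{-1}}$ survives in the first slot. Using $\theta_g\wedge\theta_g=0$ (for $N>3$, since $g^2\neq e$, as noted in the proof of Theorem~\ref{dcZN}) and $\theta_p\wedge\theta_\ip=-\theta_\ip\wedge\theta_p$ from the minimal extension, every wedge of two of the generators reduces to $\pm\theta_p\wedge\theta_\ip$; the bookkeeping then forces, for the image of $\theta_g$, precisely the sign pattern producing $\theta_g\wedge\theta_{g^{-1}}$. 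The vanishing of ${\rm d}\theta_g$ and the fact that the coefficients $A_g,B_g$ are scalars (not matrices), so they commute past the wedge up to a right-translation, is what keeps the computation finite.

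Second, I would collect the coefficient of $\theta_g\wedge\theta_{g^{-1}}\otimes_A\theta_p$ and of $\theta_g\wedge\theta_{g^{-1}}\otimes_A\theta_\ip$ separately. The ${\rm d}\otimes_A\mathrm{id}$ piece contributes the translated derivatives of $A_g$, $B_g$ — written via $R_g$, $R_{g^{-1}}$ acting on these functions, using $\theta_h f=(R_{h^{-1}}f)\theta_h$ to move functions through the one-forms — and the $(\mathrm{id}\wedge\nabla)\nabla$ piece contributes the quadratic terms $B_g(R_gA_g)$, $A_g(R_{g^{-1}}B_g)$, $(R_{g^{-1}}B_{g^{-1}}-1)(B_g-1)$, $(R_{g^{-1}}A_{g^{-1}})(1-B_g)$, $B_g(R_gB_g-1)$. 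Matching these against $\rho_g$ as displayed is then a matter of carefully tracking which of $\nabla(\theta_p)$ and $\nabla(\theta_\ip)$ feeds into each term, and which $R$-translation each factor picks up.

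The main obstacle is purely the sign-and-translation bookkeeping: there are two generators, each $\nabla(\theta_g)$ has up to four terms, applying $\mathrm{id}\wedge\nabla$ again roughly squares the count, and every function acquires an $R_g^{\pm1}$ as it passes a $\theta$, so one must be scrupulous about the order in which translations are applied (note $R_g R_{g^{-1}}=\mathrm{id}$ but $R_g(fh)=R_g(f)R_g(h)$, and these do \emph{not} commute with the nonabelian-looking shifts in general — here $\mathbb{Z}_N$ is abelian so they do commute, which is the one simplification we exploit). I do not expect any conceptual difficulty beyond this; once the shape lemma is in place, the identification of $\rho_g$ is a finite, if tedious, verification, which is why I would present it as "by a straightforward computation."
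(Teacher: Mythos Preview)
Your proposal is correct and matches the paper's approach exactly: the paper's entire proof reads ``By a straightforward computation we get the following,'' and you have spelled out precisely the ingredients of that computation (the inner form $\nabla\omega=\theta\otimes\omega-\sigma(\omega\otimes\theta)$ with $\alpha=0$, the vanishing ${\rm d}\theta_g=0$, the antisymmetric wedge, and the bimodule relation $\theta_h f=(R_{h^{-1}}f)\theta_h$). There is nothing to add.
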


To define the objects corresponding to Ricci and scalar curvature we need, however, some more structure.
\begin{Definition}
Let $\iota$ be a bimodule map representing two-forms in $\Omega^1(A) \otimes_A \Omega^1(A)$,
\begin{equation*}
\iota\colon \ \Omega^2\rightarrow \Omega^1\otimes_A\Omega^1,
\end{equation*}
such that the following diagram commutes:
\[
\begin{tikzcd}\Omega^2\arrow[dr,"\mathrm{id}"]\arrow[r,"\iota"]& \Omega^1\otimes_A\Omega^1\arrow[d,"\wedge"] \\ &\Omega^2.
\end{tikzcd}
\]
Then, we define
\begin{equation*}
\widetilde{{\bf R}}_\nabla\equiv(\iota\otimes \mathrm{id}){\bf R}_\nabla \colon \ \Omega^1\rightarrow\Omega^1\otimes_A\Omega^1\otimes_A\Omega^1,
\end{equation*}
and the Ricci tensor is defined as
\begin{equation*}
{\rm Ricci}=\big({\bf g}^{(1)},\widetilde{{\bf R}}_\nabla\big({\bf g}^{(2)}\big)_{(1)}\big)\widetilde{{\bf R}}_\nabla\big({\bf g}^{(2)}\big)_{(2)}\otimes_A \widetilde{{\bf R}}_\nabla\big({\bf g}^{(2)}\big)_{(3)},
\end{equation*}
where the Sweedler's notation on $\Omega^1\otimes_A\Omega^1\otimes_A\Omega^1$ is used.
\end{Definition}

Observe that, the above definition uses the metric unlike the usual definition of the Ricci tensor
that is metric independent and uses the trace.

Following \cite{BM14} we can further define the Einstein tensor and the scalar curvature,
\begin{Definition}
\begin{gather*}
{\rm Einstein}={\rm Ricci}-\frac{\big(\mathrm{Ricci}_{(1)},\mathrm{Ricci}_{(2)}\big)}{\big({\bf g}^{(1)},{\bf g}^{(2)}\big)}{\bf g},\\
R=\big({\bf g}^{(1)},\widetilde{{\bf R}}_\nabla\big({\bf g}^{(2)}\big)_{(1)}\big)
\cdot\big(\widetilde{{\bf R}}_\nabla\big({\bf g}^{(2)}\big)_{(2)}, \widetilde{{\bf R}}_\nabla\big({\bf g}^{(2)}\big)_{(3)}\big).
\end{gather*}
Obviously, $R$ is an element of the algebra $A$.
\end{Definition}

In our case of $A=C(\mathbb{Z}_N)$ with $N>4$ we observe that the most general form of the lifting map $\iota$ is
\begin{equation*}
\iota(\theta_p\wedge \theta_{\ip})=\theta_p\otimes_A\theta_{\ip}+\beta (\theta_p\otimes_A\theta_{\ip}+\theta_{\ip}\otimes_A\theta_p),
\end{equation*}
where $\beta\in C(\mathbb{Z}_N)$. As an immediate consequence we finally obtain for the Ricci tensor,
\begin{equation*}
\mathrm{Ricci}=-\frac{R_{\ip}\beta}{X_{\ip}}\theta_p\otimes_A\rho_{\ip}+\frac{1+R_p\beta}{X_p}\theta_{\ip}\otimes_A\rho_p.
\end{equation*}
Since $\rho_g$ has a form $M_g\theta_g+N_g\theta_{g^{-1}}$ for $g=p,\ip$, we get for the scalar curvature
\begin{equation*}
R=-\frac{1}{X_{\ip}}R_{\ip}\left(\frac{\beta M_{\ip}}{G_{\ip}}\right)+\frac{1}{X_p}R_p\left(\frac{(1+\beta)M_p}{G_p}\right).
\end{equation*}

Since $A_g=\frac{R_{g^{-1}}G_{g}}{G_g(R_{g^{-1}}B_{g^{-1}})}$ we get
\begin{equation*}
M_g=(1-R_{g^{-1}})\left(\frac{B_gG_g}{B_{g^{-1}}(R_gG_g)}\right)+\big(1-R_{g^{-1}}B_{g^{-1}}\big)(B_g-1),
\end{equation*}
and as a result
\begin{gather*}
R= -\frac{1}{X_{\ip}}R_{\ip}\left(\frac{\beta}{G_{\ip}}\right)\left[(R_{\ip}-1)\left(\frac{B_{\ip}G_{\ip}}{B_p(R_{\ip}G_{\ip})}\right) +(1-B_p)(R_{\ip}B_{\ip}-1)\right] \\
\hphantom{R=}{} + \frac{1}{X_{p}}R_{p}\left(\frac{1+\beta}{G_{p}}\right)\left[(R_{p}-1)\left(\frac{B_{p}G_{p}}{B_{\ip}(R_{p}G_{p})}\right) +(1-B_{\ip})(R_{p}B_{p}-1)\right].
\end{gather*}

We can formulate the main theorem.
\begin{Theorem}
\label{positive_curvature}
For a positive parameter $X_p=\gamma>0$ and a metric ${\bf g}$ with $G:=G_p < 0$, for odd~$N$ there exist three possible torsion-free and metric compatible linear connections given by the functions $B_p$, $B_\ip$ with the corresponding Ricci tensor and the scalar curvature $($for an arbitrary lift of $\Omega^2$ given by the function $\beta)$:

Case $(a)$
\begin{gather*}
 B_p=1,\qquad B_\ip=1, \\
{\rm Ricci}(n) = \gamma \beta(n-1) Z_+(n) \theta_p \otimes_A \theta_\ip + \frac{1+\beta(n+1)}{\gamma} Z_-(n+1) \theta_\ip \otimes_A \theta_p, \\
 R(n) = \gamma^2 \beta(n-1) W_+(n) +\frac{1+\beta(n+1)}{\gamma} W_-(n+1).
\end{gather*}

Case $(b)$
\begin{gather*}
B_p=1, \qquad B_\ip=-\frac{1}{\gamma}, \\
{\rm Ricci}(n) = -\beta(n-1)Z_+(n)\theta_p\otimes_A\theta_\ip-(1+\beta(n+1))Z_-(n+1)\theta_\ip\otimes\theta_p\\
\hphantom{{\rm Ricci}(n) =}{} +\beta(n-1)S_-(n)\theta_p\otimes_A\theta_p,\\
 R(n) = -\gamma\beta(n-1)W_+(n)-(1+\beta(n+1))W_-(n+1).
\end{gather*}

Case $(c)$
\begin{gather*}
 B_p=-\gamma, \qquad B_\ip=1, \\
{\rm Ricci}(n) = -\beta(n-1)Z_+(n)\theta_p\otimes_A\theta_{\ip}-(1+\beta(n+1))Z_-(n+1)\theta_\ip\otimes_A\theta_p\\
\hphantom{{\rm Ricci}(n) =}{} -(1+\beta(n+1))S_+(n+1)\theta_\ip\otimes\theta_\ip,\\
R(n) =-\gamma\beta(n-1)W_+(n)-(1+\beta(n+1))W_-(n+1),
\end{gather*}
where
\begin{alignat*}{3}
 &Z_+(n) = \frac{G(n+1)}{G(n)}-\frac{G(n)}{G(n-1)},\qquad && Z_-(n) = \frac{G(n)}{G(n+1)}-\frac{G(n-1)}{G(n)},& \\
 & S_+(n)=\frac{\gamma+1}{\gamma^2}\left(\frac{G(n+1)}{G(n)}-\gamma^2\right),\qquad && S_-(n)=\gamma(\gamma+1)\left(\frac{G(n-1)}{G(n)}-\frac{1}{\gamma^2}\right),&
\end{alignat*}
and
\[
W_\pm(n) = \frac{Z_\pm(n)}{G(n)}.
\]
On the other hand, for even $N$ in addition to the above ones there are also solutions corresponding to the last point in Case II of Theorem~{\rm \ref{theor_ZN}}. In these cases the corresponding Ricci tensor and the scalar curvature are given by
\begin{gather*}
{\rm Ricci}(n) =\beta(n-1)V_+(n)\theta_p\otimes_A\theta_\ip+(1+\beta(n+1))V_-(n)\theta_\ip\otimes_A\theta_p\\
\hphantom{{\rm Ricci}(n) =}{} -\beta(n-1)T_-(n)\theta_p\otimes \theta_p +(1+\beta(n+1))T_+(n)\theta_\ip\otimes_A\theta_\ip, \\
 R(n)=\frac{\beta(n-1)}{G(n)}V_+(n)+\frac{1+\beta(n+1)}{G(n+1)}V_-(n),
\end{gather*}
where
\begin{gather*}
 V_+(n)=\frac{B_\ip(n)}{B_p(n)}\frac{G(n+1)}{G(n)}-\frac{B_p(n)}{B_\ip(n)}\frac{G(n)}{G(n-1)}, \\ V_-(n)=\frac{B_\ip(n)}{B_p(n)}\frac{G(n+1)}{G(n+2)}-\frac{B_p(n)}{B_\ip(n)}\frac{G(n)}{G(n+1)},\\
T_-(n)=(B_\ip(n)-1)\left(\frac{G(n-1)}{G(n)}+\frac{1}{B_\ip(n)}\right), \\ T_+(n)=(B_p(n)-1)\left(\frac{G(n+2)}{G(n+1)}+\frac{1}{B_p(n)}\right).
\end{gather*}
In the above formulas either $B_p=1$ and $B_\ip$ is given by~\eqref{Xnnonc} or the other way around.
\end{Theorem}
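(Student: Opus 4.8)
The statement splits into two logically independent parts — the classification of the admissible connections and the curvature formulas attached to each — and I would treat them in turn. For the classification: the standing hypothesis $X_p=\gamma>0$ forces us into Case~II of Theorem~\ref{theor_ZN}, because a non-constant solution of~\eqref{condition_cyclic} is never of constant sign, so a positive $X_p$ must be constant; then $c=\gamma+\gamma^{-1}\ge 2$, $X_\ip=\gamma^{-1}$ (from $R_pX_\ip=X_p^{-1}$), and the metric is pinned down by $G:=G_p<0$ together with $G_\ip(n)=\gamma^{-1}G(n+1)$, read off from $X_p=R_pG_p/G_\ip$; Lemma~\ref{Hilbert} then supplies the Hilbert $C^\ast$-module structure. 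I would then simply copy the list of $B_g=1+b_g$ from Case~II of Theorem~\ref{theor_ZN}: the solution $b_p=b_\ip=0$ is always present (case~$(a)$), and for odd $N$ one has $\gamma^N>0\ne -1=(-1)^N$, so the branch $\gamma^N\ne(-1)^N$ applies automatically and yields precisely the two further constant solutions $(b_p,b_\ip)=(0,-1-\gamma^{-1})$ and $(-1-\gamma,0)$ — cases~$(b)$ and~$(c)$ — which are pairwise distinct since $\gamma>0$ gives $\gamma\ne-1$, and these are all of them because for constant $X_g$ equation~\eqref{war_X3_1} forces one of $b_p,b_\ip$ to vanish. For even $N$ the same three occur whenever $\gamma^N\ne(-1)^N$; in the remaining case $\gamma^N=(-1)^N$ the homogeneous part $y^{\rm hom}_g$ becomes periodic, the last bullet of Case~II produces in addition the non-constant family, and the constraint $\kappa_p\kappa_\ip=(c+2)^{-2}\big(X_p(N-1)/X_p(0)-1\big)=0$ forces $\kappa_p=0$ or $\kappa_\ip=0$, i.e.\ either $B_p=1$ with $B_\ip$ non-constant or the other way around. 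This is the claimed enumeration.

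\emph{Step 2 (the curvature).} For each connection on that list the curvature is obtained by substitution into the formulas already established above the statement. I would first compute $A_g=\dfrac{R_{g^{-1}}G_g}{G_g\,R_{g^{-1}}B_{g^{-1}}}$, then read $\rho_g=M_g\theta_g+N_g\theta_{g^{-1}}$ off the curvature theorem above with
\[M_g=(1-R_{g^{-1}})\!\left(\frac{B_gG_g}{B_{g^{-1}}(R_gG_g)}\right)+\big(1-R_{g^{-1}}B_{g^{-1}}\big)(B_g-1),\]
and then feed $\rho_p,\rho_\ip$ into the general Ricci formula ${\rm Ricci}=-X_\ip^{-1}(R_\ip\beta)\,\theta_p\otimes_A\rho_\ip+X_p^{-1}(1+R_p\beta)\,\theta_\ip\otimes_A\rho_p$ and $M_p,M_\ip$ into $R=-X_\ip^{-1}R_\ip(\beta M_\ip/G_\ip)+X_p^{-1}R_p((1+\beta)M_p/G_p)$, commuting coefficients through one-forms by $\theta_g f=(R_{g^{-1}}f)\theta_g$ and using that $\theta_p\otimes_A\theta_\ip$ and $\theta_\ip\otimes_A\theta_p$ are central (the two translations cancel) whereas $\theta_p\otimes_A\theta_p$ and $\theta_\ip\otimes_A\theta_\ip$ are not. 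With $X_p=\gamma$, $X_\ip=\gamma^{-1}$, $G_\ip(n)=\gamma^{-1}G(n+1)$ and $c=\gamma+\gamma^{-1}$ the ratios should collapse precisely to $Z_\pm,S_\pm,W_\pm$ in cases~$(a)$--$(c)$ — with the $N_g\theta_{g^{-1}}$ term vanishing in case~$(a)$ but surviving in $(b)$,$(c)$, which is the source of the $\theta_p\otimes_A\theta_p$ and $\theta_\ip\otimes_A\theta_\ip$ contributions — and to $V_\pm,T_\pm$ in the even-$N$ non-constant family; the shifts $\beta(n\pm1)$ are exactly $R_p\beta$ and $R_\ip\beta$.

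\emph{Main obstacle.} The whole argument is routine in structure, and the only real work is Step~2, which is pure bookkeeping of the right translations. The things I would watch: the second leg of the metric is \emph{not} $G$ but $G_\ip(n)=\gamma^{-1}G(n+1)$, so every $G_\ip$ entering $A_g$, $M_g$, $R$ has to be rewritten in terms of $G$; each coefficient pushed past a $\theta_g$ inside $\otimes_A$ acquires a shift, so reproducing the precise $n\mapsto n\pm1$ shifts and signs in $Z_\pm,S_\pm,W_\pm,V_\pm,T_\pm$ is the delicate point; and one should record at the outset that $1+b_g$ never vanishes — automatic from Theorem~\ref{theor_ZN} since $X_g\ne 0$, and visible directly from $B_\ip=-\gamma^{-1}\ne0$, $B_p=-\gamma\ne0$ (with the analogous nonvanishing in the even-$N$ family being the constraint already imposed there) — so that $A_g$, $M_g$, $R$ are well defined. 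Step~1 is then immediate once the parity fact $\gamma^N\ne(-1)^N$, automatic for odd $N$, is spelled out.
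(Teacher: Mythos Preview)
Your proposal is correct and matches the paper's approach, which is entirely implicit: the theorem has no separate proof in the paper and is meant to follow by specializing Theorem~\ref{theor_ZN} to constant positive $X_p=\gamma$ and then substituting into the general formulas for $\rho_g$, ${\rm Ricci}$ and $R$ derived just above. Your Step~1/Step~2 split is exactly this, and your handling of the bookkeeping (rewriting $G_\ip$ in terms of $G$, tracking the $R_{g^{\pm1}}$ shifts, noting $1+b_g\ne0$) is the right checklist. Your observation that for constant $X_p$ the constraint $\kappa_p\kappa_\ip=(c+2)^{-2}\big(X_p(N-1)/X_p(0)-1\big)=0$ forces one of $\kappa_p,\kappa_\ip$ to vanish --- and hence one of $B_p,B_\ip$ to equal $1$ in the even-$N$ family --- is a justification the paper leaves unstated.
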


\begin{Theorem}
In the case of left-right symmetric metric $\gamma=1$ and the standard choice of the lift $\beta=-\frac{1}{2}$ the scalar curvature is $R(n)=\pm\frac{1}{2}(W_+(n)-W_{-}(n+1))$, i.e.,
\begin{equation}\label{scalaR}
R(n)=\pm\frac{1}{2}\left[\frac{G(n+1)^3+G(n)^3}{G(n+1)^2G(n)^2}-\left(\frac{1}{G(n-1)}+\frac{1}{G(n+2)}\right)\right],
\end{equation}
with the sign $-$ for the case $(a)$ and $+$ for cases $(b)$ and $(c)$. On the other hand, for the special cases discussed at the end of the previous theorem, the scalar curvature is
\[ R(n)=\frac{1}{2} \left[\frac{B_\ip(n)}{B_p(n)}\left(\frac{1}{G(n+2)}-\frac{G(n+1)}{G(n)^2}\right) +\frac{B_p(n)}{B_\ip(n)}\left(\frac{1}{G(n-1)}-\frac{G(n)}{G(n+1)^2}\right)\right].\]
In particular for the constant metric $G$, this curvature vanishes in all these cases.
\end{Theorem}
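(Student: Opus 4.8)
The plan is to obtain the statement by specializing the general formulas of Theorem~\ref{positive_curvature}. First I would set $\gamma=1$ and take the standard constant lift $\beta(n)=-\tfrac12$, so that at every $n$ one has $\beta(n-1)=-\tfrac12$ and $1+\beta(n+1)=\tfrac12$. Substituting these into the scalar curvature of case~$(a)$, namely $R(n)=\gamma^2\beta(n-1)W_+(n)+\tfrac{1+\beta(n+1)}{\gamma}W_-(n+1)$, gives $R(n)=-\tfrac12W_+(n)+\tfrac12W_-(n+1)=-\tfrac12(W_+(n)-W_-(n+1))$, while the common formula $R(n)=-\gamma\beta(n-1)W_+(n)-(1+\beta(n+1))W_-(n+1)$ for cases~$(b)$ and~$(c)$ gives $R(n)=\tfrac12(W_+(n)-W_-(n+1))$. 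This already yields $R(n)=\pm\tfrac12(W_+(n)-W_-(n+1))$ with the sign $-$ for $(a)$ and $+$ for $(b)$, $(c)$, the signs being inherited directly from the overall signs appearing in Theorem~\ref{positive_curvature}.

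Next I would make $W_\pm$ explicit: from $W_\pm(n)=Z_\pm(n)/G(n)$ and the definitions of $Z_\pm$ one gets $W_+(n)=\tfrac{G(n+1)}{G(n)^2}-\tfrac1{G(n-1)}$ and, after the index shift $n\mapsto n+1$, $W_-(n+1)=\tfrac1{G(n+2)}-\tfrac{G(n)}{G(n+1)^2}$. Subtracting these and using the elementary identity $\tfrac{G(n+1)}{G(n)^2}+\tfrac{G(n)}{G(n+1)^2}=\tfrac{G(n+1)^3+G(n)^3}{G(n)^2G(n+1)^2}$ produces exactly \eqref{scalaR}.

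For the extra families occurring when $N$ is even (the last bullet of Case~II of Theorem~\ref{theor_ZN}, where one of $B_p,B_\ip$ equals $1$ and the other is given by the right-hand side of \eqref{Xnnonc}), I would specialize in the same way the expression $R(n)=\tfrac{\beta(n-1)}{G(n)}V_+(n)+\tfrac{1+\beta(n+1)}{G(n+1)}V_-(n)$ of Theorem~\ref{positive_curvature}; with $\beta=-\tfrac12$ this becomes $R(n)=-\tfrac1{2G(n)}V_+(n)+\tfrac1{2G(n+1)}V_-(n)$. Inserting the definitions of $V_\pm$ and collecting the four resulting terms according to the two ratios $B_\ip(n)/B_p(n)$ and $B_p(n)/B_\ip(n)$ yields the displayed formula; the $T_\pm$ parts of the Ricci tensor (its $\theta_p\otimes_A\theta_p$ and $\theta_\ip\otimes_A\theta_\ip$ components) never enter, since the scalar curvature only contracts the mixed components, which is already reflected in the form of $R(n)$ quoted in Theorem~\ref{positive_curvature}.

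Finally, the vanishing for a constant metric is immediate: if $G(n)\equiv G$ then \eqref{scalaR} gives $\tfrac{2G^3}{G^4}-\tfrac2G=0$, and in the even-$N$ formula each bracket collapses to $\tfrac1G-\tfrac1G=0$; equivalently, for the generic cases a constant $G$ already forces $Z_\pm\equiv0$, hence $W_\pm\equiv0$ and $R\equiv0$. I do not expect a genuine obstacle here: the whole argument is a bookkeeping exercise in specialization, index shifts and sign tracking. The one spot where a slip is easy is matching the $\pm$ to the correct case, which is why I would carry cases~$(a)$ and~$(b)$--$(c)$ through separately rather than in a single line.
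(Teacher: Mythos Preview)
Your proposal is correct and follows exactly the route the paper takes: the theorem is stated there without a separate proof because it is a direct specialization of Theorem~\ref{positive_curvature} at $\gamma=1$, $\beta=-\tfrac12$, followed by the elementary identity $\tfrac{G(n+1)}{G(n)^2}+\tfrac{G(n)}{G(n+1)^2}=\tfrac{G(n+1)^3+G(n)^3}{G(n)^2G(n+1)^2}$. Your handling of the signs, the index shift in $W_-$, and the even-$N$ case via $V_\pm$ all match the paper's intended derivation.
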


\begin{Remark}It is interesting to see the continuous limit of the expression \eqref{scalaR}. A simple computation gives that if we denote by $g(t)$ the limit of the $G(n)$ function, for the parametrization of the curve with $t$, then the curvature $R(t)$ becomes
\[ R(t) = \pm\frac{g''(t) g(t) - g'(t)^2}{g(t)^3} = \pm\frac{1}{g(t)} \frac{{\rm d}}{{\rm d}t} \left( \frac{\frac{{\rm d}}{{\rm d}t} g(t)}{g(t)} \right). \]
\end{Remark}

\subsection{Examples of metrics and curvatures}
It is interesting to see how the scalar curvature depends on the metric. Clearly it vanishes for the constant metric, which can be depicted as an equilateral $N$-polygon. On the other hand, if we consider a polygon that approximates the ellipse, that is the respective lengths of the sides correspond to the lengths of lines connecting points on the ellipse like depicted on the Fig.~\ref{fig:el}, we obtain a nontrivial scalar curvature.

\begin{figure}[h!tb]\centering
\includegraphics[height=5cm]{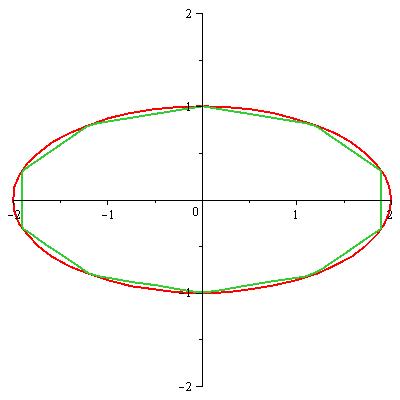}
\caption{Ellipse and ellipse-like polygon ($N=10$).}\label{fig:el}
\end{figure}

We can then compute the scalar curvature for the assumed form of the metric,
which becomes as shown on the Fig.~\ref{fig:emc} and which very closely
approximates its continuous limit.

\begin{figure}[h!tb]\centering\small
\begin{minipage}{50mm}\centering
\includegraphics[height=4cm]{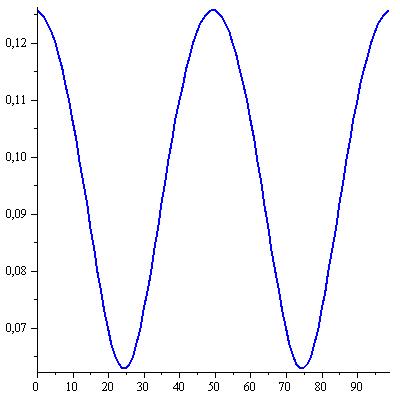}\\
(a)
\end{minipage}\qquad
\begin{minipage}{50mm}\centering
\includegraphics[height=4cm]{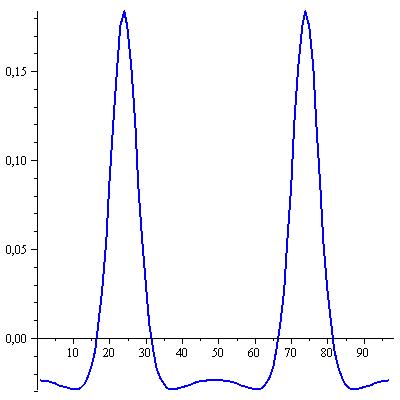}\\
(b)
\end{minipage}
\caption{The metric (a) and the scalar curvature (b) for the ellipse-like polygon ($N=100$).} \label{fig:emc}
\end{figure}

Even more interesting is the inverse problem, of finding the metric such that the scalar curvature is fixed. This shall be treated rather as an exercise in the $N \to \infty$ limit, that is an infinite lattice with the algebra $C(\Z)$, as we fix three distances and then compute the rest using the recursive relation arising
from the Theorem~\ref{positive_curvature}. It is clear that we cannot then guarantee periodic solutions and, moreover, the choice of the initial values
leads to solutions that differ from the continuous approximations.

We have checked some example cases with the constant scalar curvature. It appears that for the positive scalar curvature (see Fig.~\ref{fig:pos}), and the initial data of equilateral sides we obtain oscillating distances, whereas for the negative (small) curvature (see Fig.~\ref{fig:neg}) the metric tends rapidly to zero.

\begin{figure}[h!tb]\centering
\includegraphics[height=4cm]{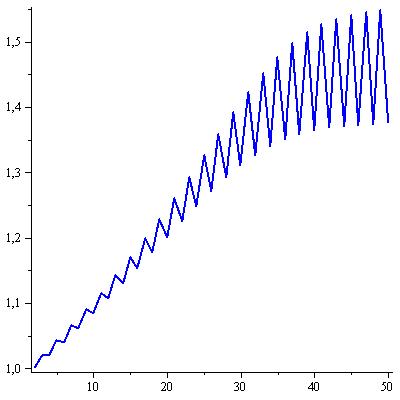} \qquad\qquad
\includegraphics[height=4cm]{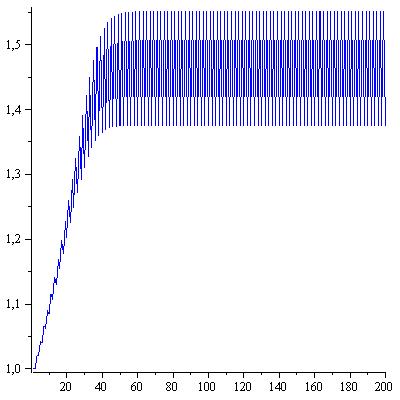}
\caption{The metric for constant positive scalar curvature.}\label{fig:pos}
\end{figure}
\begin{figure}[h!tb]
\centering
\includegraphics[height=4cm]{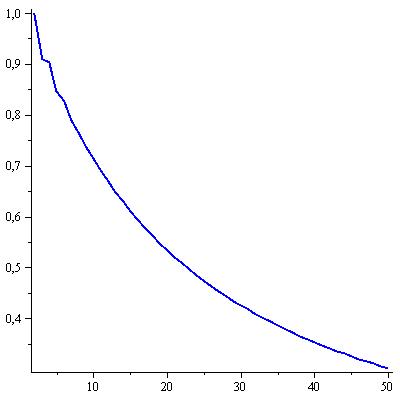} \qquad\qquad
\includegraphics[height=4cm]{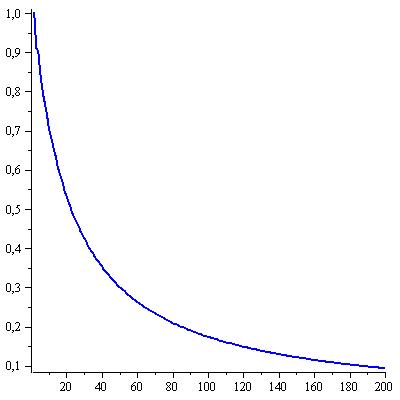}
\caption{The metric for constant negative scalar curvature.}\label{fig:neg}
\end{figure}

\section[Linear connections and curvature for products of ZN]{Linear connections and curvature for products of $\boldsymbol{\Z_N}$}
In this section we shall extend the computations of linear connection to the tensor product of two $C(\Z_N)$
algebras, which corresponds to the Cartesian product of discrete spaces.

Since we consider the minimal differential calculi over both algebras and their
natural graded tensor product, most of the results from previous sections can
be transferred. In particular, it is easy to see that the only possible metric is the
diagonal one, that is for the algebra $A_1 \otimes A_2$,
\[ \mathbf{g} = \mathbf{g}_1 \otimes A_2 + A_1 \otimes \mathbf{g}_2, \]
which means that the total metric is the sum of metrics, yet the coefficients can
be elements of the full algebra.

For simplicity we shall consider here the product of two algebras, this can
be later extended to an arbitrary number of component algebras in the product.
Furthermore, we restrict ourselves only to negative metrics, which then allows
us to use the results of Theorem~\ref{positive_curvature}.

Let us introduce the notation used in this section. We denote by $p$ and $s$ the
generators of the groups~$\Z_N$ and~$\Z_M$, with their inverses~$\ip$ and~$\is$.

\begin{Lemma}The only bimodule metric over $C(\Z_N) \otimes C(\Z_M)$ is of the form
\begin{equation*}
\mathbf{g} = G_p \theta_p \otimes \theta_\ip
+ G_\ip \theta_p \otimes \theta_\ip
+ G_s \theta_s \otimes \theta_\is
+ G_\is \theta_\is \otimes \theta_s,
\end{equation*}
where $G_p$, $G_\ip$, $G_s$, $G_\is$ are functions over $\Z_N \times \Z_M$.
\end{Lemma}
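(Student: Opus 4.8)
The plan is to reduce the classification of bimodule metrics over $C(\Z_N)\otimes C(\Z_M)$ to the already-established one-dimensional result of Lemma~\ref{lemma_metric01} by exploiting the graded tensor product structure of the differential calculus. First I would recall that over the tensor product algebra the minimal calculus has the one-forms $\Omega^1 = \big(\Omega^1_H(C(\Z_N))\otimes C(\Z_M)\big)\oplus\big(C(\Z_N)\otimes\Omega^1_H(C(\Z_M))\big)$, so that a basis of left-invariant one-forms is $\{\theta_p,\theta_\ip,\theta_s,\theta_\is\}$, now with coefficient functions in the full algebra $C(\Z_N\times\Z_M)$, and the bimodule relations are inherited: $f\theta_a = \theta_a R_a(f)$ for $a\in\{p,\ip,s,\is\}$, where $R_p,R_\ip$ act only on the first factor and $R_s,R_\is$ only on the second, and where the different generators commute past each other trivially. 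The crucial extra input is the structure of the exterior algebra: in the graded tensor product the mixed products $\theta_a\wedge\theta_b$ for $a\in\{p,\ip\}$, $b\in\{s,\is\}$ are nonzero (they span the ``cross'' two-forms), while within each factor the relations of Theorem~\ref{dcZN} persist.

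The key step is exactly the argument reproduced in the proof of Corollary~\ref{cor_metric}: writing the inverse pairing $(\cdot,\cdot)$ and the metric element $\mathbf g = \sum_{a,b}\mathbf g_{ab}\,\theta_a\otimes_A\theta_b$ in the basis, and using that $(\cdot,\cdot)$ is a bimodule map together with the freeness of the right translation action of $C(\Z_N\times\Z_M)$ on itself, one shows that $(\theta_a,\theta_b)\neq 0$ forces $b=a^{-1}$. Here $a^{-1}$ means the inverse taken in the appropriate $\Z_N$ or $\Z_M$ factor; in particular $p$ and $s$ are never mutually inverse, so there is no pairing between one-forms coming from different factors. Then the nondegeneracy condition $\omega = \mathbf g^{(1)}(\mathbf g^{(2)},\omega)$ applied to each $\omega=\theta_c$ yields $\mathbf g_{a c^{-1}} R_{a^{-1}}(F_{c^{-1}}) = \delta_{a,c}$, which kills every off-block term $\mathbf g_{ab}$ with $a,b$ in different factors, and within each block forces the metric to have precisely the shape of Lemma~\ref{lemma_metric01}. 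This gives $\mathbf g = G_p\theta_p\otimes_A\theta_\ip + G_\ip\theta_\ip\otimes_A\theta_p + G_s\theta_s\otimes_A\theta_\is + G_\is\theta_\is\otimes_A\theta_s$ with the four coefficient functions on $\Z_N\times\Z_M$, which is exactly the claimed form (modulo the evident typo $G_\ip\theta_p\otimes\theta_\ip$ in the statement, which should read $G_\ip\theta_\ip\otimes\theta_p$).

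The main obstacle, and the only place where genuine care is needed beyond copying the one-dimensional computation, is ruling out the \emph{cross terms} such as $\theta_p\otimes_A\theta_s$: a priori the metric element lives in $\Omega^1\otimes_A\Omega^1$, which contains these mixed tensors. One must check that such a term cannot contribute. This follows because the associated pairing $(\theta_p,\theta_s)$ would have to be a bimodule map $C(\Z_N\times\Z_M)$-valued expression, and the bimodule-map computation $f(\theta_p,\theta_s) = (\theta_p,\theta_s)f = (R_{p^{-1}}R_{s^{-1}}f)(\theta_p,\theta_s)$ forces $(\theta_p,\theta_s)=0$ since $R_{p^{-1}}R_{s^{-1}}\neq\id$ on $C(\Z_N\times\Z_M)$ (as $p$ has order $>1$ in $\Z_N$); hence by the nondegeneracy relation the cross coefficients $\mathbf g_{ps}$ etc.\ all vanish. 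Once this is in place, the remaining verification — that the surviving coefficients must be nonvanishing and that the pairing is determined as in Lemma~\ref{lemma_metric01} — is the routine repetition of the earlier proof, done separately in each of the two diagonal blocks. I would present the whole argument as ``the proof is identical to that of Corollary~\ref{cor_metric}, applied blockwise, after observing that no bimodule pairing exists between one-forms associated to different factors,'' and leave the bookkeeping to the reader.
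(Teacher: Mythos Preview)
Your argument is correct and is precisely the natural extension of the proof of Lemma~\ref{lemma_metric01}/Corollary~\ref{cor_metric} to the product situation. In fact the paper does not supply a formal proof of this lemma at all: it merely remarks, just before stating it, that ``it is easy to see that the only possible metric is the diagonal one'' and writes $\mathbf{g}=\mathbf{g}_1\otimes A_2 + A_1\otimes \mathbf{g}_2$. Your write-up is exactly the computation the paper has in mind and leaves implicit --- the bimodule-map identity $f(\theta_a,\theta_b)=(R_{a^{-1}}R_{b^{-1}}f)(\theta_a,\theta_b)$ applied now in the group $\Z_N\times\Z_M$, where $p$ and $s$ (and their inverses) are never mutual inverses, so all cross pairings vanish and nondegeneracy forces the block-diagonal form. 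Your identification of the typo in the $G_\ip$ term is also correct.
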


\begin{Lemma}The most general linear connection for the minimal differential calculus over
$C(\Z_N) \otimes C(\Z_M)$ with $N,M>4$ is determined by the map $\sigma$ given by
\begin{gather*}
\sigma(\theta_p \otimes_A \theta_p) = A_p \theta_p \otimes_A \theta_p, \\
\sigma(\theta_\ip \otimes_A \theta_\ip) = A_\ip \theta_\ip \otimes_A \theta_\ip, \\
\sigma(\theta_s \otimes_A \theta_s) = A_s \theta_s \otimes_A \theta_s, \\
\sigma(\theta_\is \otimes_A \theta_\is) = A_\is \theta_\is \otimes_A \theta_\is, \\
\sigma(\theta_p \otimes_A \theta_\ip) = B_p \left( \theta_p \otimes_A \theta_\ip + \theta_\ip \otimes_A \theta_p \right) - \theta_p \otimes_A \theta_\ip + W_{p} (\theta_s\otimes_A\theta_{\is}+\theta_{\is}\otimes_A\theta_s), \\
\sigma(\theta_\ip \otimes_A \theta_p) = B_\ip \left( \theta_p \otimes_A \theta_\ip + \theta_\ip \otimes_A \theta_p \right) - \theta_\ip \otimes_A \theta_p + W_{\ip} (\theta_s\otimes_A\theta_{\is}+\theta_{\is}\otimes_A\theta_s), \\
\sigma(\theta_s \otimes_A \theta_\is) = B_s \left( \theta_s \otimes_A \theta_\is + \theta_\is \otimes_A \theta_s \right) - \theta_s \otimes_A \theta_\is + W_{s} (\theta_p\otimes_A\theta_{\ip}+\theta_{\ip}\otimes_A\theta_p), \\
\sigma(\theta_\is \otimes_A \theta_s) = B_\is \left( \theta_s \otimes_A \theta_\is + \theta_\is \otimes_A \theta_s \right) - \theta_\is \otimes_A \theta_s + W_{\is} (\theta_p\otimes_A\theta_{\ip}+\theta_{\ip}\otimes_A\theta_p),
\end{gather*}
and
\begin{gather*}
 \sigma(\theta_a \otimes_A \theta_b) = C_{ab} (\theta_a\otimes_A\theta_b+\theta_b\otimes_A\theta_a)-\theta_a\otimes_A\theta_b
\end{gather*}
for all $a,\in\{p,\ip \}$, $b \in \{ s,\is\}$ or $ a \in \{ s,\is\}$, $b \in\{p,\ip \}$.
\end{Lemma}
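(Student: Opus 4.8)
The plan is to determine the most general bimodule map $\sigma$ on $\Omega^1\otimes_A\Omega^1$ for the graded tensor product calculus, exactly as was done in Proposition~\ref{sigmy} for a single $\Z_N$, but now keeping track of the extra freedom coming from the fact that the two-forms of the product calculus contain mixed terms $\theta_p\wedge\theta_s$ etc. First I would recall that a bimodule map is determined by its values on the basis elements $\theta_a\otimes_A\theta_b$ with $a,b\in\{p,\ip,s,\is\}$, and that the bimodule property forces the image of $\theta_a\otimes_A\theta_b$ to lie in the subspace of $\Omega^1\otimes_A\Omega^1$ with the same left/right commutation rules with functions in $C(\Z_N)\otimes C(\Z_M)$; this is governed by the product $ab$ in the group $\Z_N\times\Z_M$. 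Since $N,M>4$, one checks (just as in the single-group case, where the order-3 and order-4 coincidences were the only obstructions) that the only monomials $\theta_c\otimes_A\theta_d$ with $cd=ab$ are the ``obvious'' ones, which gives the listed ansatz with coefficient functions $A$, $B$, $W$, $C$ that are a priori arbitrary elements of the full algebra $C(\Z_N\times\Z_M)$.

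Next I would impose the torsion-freeness constraint via \eqref{sigmawedge}, $\omega\wedge\theta=-\wedge\circ\,\sigma(\omega\otimes_A\theta)$, together with the explicit wedge relations of the graded tensor product: $\theta_p\wedge\theta_\ip$ is antisymmetric, $\theta_s\wedge\theta_\is$ is antisymmetric, and $\theta_a\wedge\theta_b=-\theta_b\wedge\theta_a$ for $a$ from one factor and $b$ from the other, while $\theta_p\wedge\theta_s$ is a genuinely new (nonzero) two-form. Applying $\wedge$ to each line of the ansatz and matching against $\theta_a\wedge\theta$ (with $\theta=-\theta_p-\theta_\ip-\theta_s-\theta_\is$) fixes the diagonal structure: the ``$-\theta_a\otimes_A\theta_b$'' correction terms appear precisely so that $\wedge\circ\sigma$ reproduces the required value, the symmetric combinations $\theta_a\otimes_A\theta_b+\theta_b\otimes_A\theta_a$ and the mixed symmetric combinations from the other factor all lie in $\ker\wedge$, and hence the coefficients $B$, $W$, $C$ remain free while $A$ is unconstrained at this stage. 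The key point is that the cross-terms $W_p(\theta_s\otimes_A\theta_\is+\theta_\is\otimes_A\theta_s)$ are wedge-closed and therefore allowed by torsion-freeness — this is the genuinely new feature compared with a single circle.

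The main obstacle, and the step I expect to require the most care, is the bookkeeping of which symmetric tensors actually lie in $\ker\wedge$ for the graded product calculus and making sure no further coincidences among group elements (beyond the $N,M>4$ exclusions) produce additional admissible monomials — in particular one must verify that there is no map sending $\theta_p\otimes_A\theta_\ip$ into, say, $\theta_s\otimes_A\theta_s$ or other spurious targets, which would happen only if $p\ip=s^2$ in $\Z_N\times\Z_M$, i.e.\ never, since the two factors are independent. I would also double-check the commutation rules $f\theta_a=\theta_a R_a(f)$ extended to the product (with $R_a$ acting trivially on the factor not containing $a$) to confirm the coefficient functions can indeed be arbitrary functions on $\Z_N\times\Z_M$ rather than on a single factor. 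Once these compatibility checks are done, collecting the surviving free coefficients gives exactly the displayed form of $\sigma$, completing the proof.
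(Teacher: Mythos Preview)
Your approach is essentially the paper's own: it too argues that the inner calculus reduces the connection to $(\sigma,\alpha)$, that $\alpha=0$ by the bimodule-coincidence argument of Proposition~\ref{connection_determined}, and that the form of $\sigma$ then follows from the bimodule structure plus \eqref{sigmawedge} exactly as in Proposition~\ref{sigmy}. The one small omission in your plan is that you jump straight to analysing $\sigma$ without first justifying that $\alpha\equiv 0$ in \eqref{inner_connection}; this is the same group-element bookkeeping you already do for $\sigma$ (there is no pair $c,d\in\{p,\ip,s,\is\}$ with $cd=a$ for a single generator $a$ once $N,M>3$), so just state it up front before restricting attention to $\sigma$.
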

\begin{proof}Since the calculus is inner we can apply \eqref{inner_connection}. Furthermore, in a completely similar manner as in the Proposition~\ref{connection_determined} we infer that if both~$N$ and~$M$ are different than~$3$, $\alpha$ has to be a zero map. As a~result, the connection is determined by the bimodule map~$\sigma$ only. Now, form the bimodule structure, as in the proof of Proposition~\ref{sigmy}, we get the exact form of this map, provided that $N,M\neq 4$.
\end{proof}

\begin{Lemma}The metric compatibility condition, which can be written in general as
\begin{equation*}
\sum\limits_{g,h,k}G_g \psi^{a,b}_{g,k}\big(R_{g^{-1}}\psi^{k,c}_{g^{-1},h}\big)=R_{a^{-1}}G_{c^{-1}}\delta_{b,c^{-1}},
\end{equation*}
for all $a$, $b$, $c$, where $\sigma(\theta_g\otimes_A\theta_h)=\sum\limits_{a,b}\psi_{g,h}^{a,b}\theta_a\otimes_A \theta_b$ leads
to the following system of $36$ equations which can be divided into six types written
explicitly below $($where we use the convention that $h\neq g,g^{-1})$:
\begin{gather*}
 R_{g^{-1}}G_g=G_gA_g\big(R_{g^{-1}}B_{g^{-1}}\big),\\
R_{g^{-1}}G_{g^{-1}}=G_g(B_g-1)\big(R_{g^{-1}}B_{g^{-1}}-1\big)+G_{g^{-1}}B_{g^{-1}}(R_gA_g) \\
\hphantom{R_{g^{-1}}G_{g^{-1}}=}{}
+G_hW_hR_{h^{-1}}\big(C_{h^{-1}g}-1\big)+G_{h^{-1}}W_{h^{-1}}(R_hC_{hg}-1),\\
 0=G_g(B_g-1)\big(R_{g^{-1}}A_{g^{-1}}\big)+G_{g^{-1}}B_{g^{-1}}(R_gB_g-1)\\
\hphantom{0=}{} +G_hW_h\big(R_{h^{-1}}C_{h^{-1}g^{-1}}-1\big)+G_{h^{-1}}W_{h^{-1}}\big(R_hC_{hg^{-1}}-1\big),\\
 0=G_g(B_g-1)\big(R_{g^{-1}}C_{g^{-1}h}-1\big)+G_{g^{-1}}B_{g^{-1}}(R_gC_{gh}-1) \\
\hphantom{0=}{} +G_hW_h\big(R_{h^{-1}}B_{h^{-1}}-1\big)+G_{h^{-1}}W_{h^{-1}}(R_hA_h),\\
0=G_g(C_{gh}-1)\big(R_{g^{-1}}C_{g^{-1}h}\big)+G_hC_{hg}\big(R_{h^{-1}}W_{h^{-1}}\big),\\
R_{g^{-1}}G_h=G_g(C_{gh}-1)\big(R_{g^{-1}}W_{g^{-1}}\big)+G_hC_{hg}\big(R_{h^{-1}}C_{h^{-1}g}\big).
\end{gather*}
\end{Lemma}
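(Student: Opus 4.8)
The plan is to expand the metric compatibility condition
$(\nabla\otimes\mathrm{id})\mathbf{g}+(\sigma\otimes\mathrm{id})(\mathrm{id}\otimes\nabla)\mathbf{g}=0$
in the basis $\{\theta_a\otimes_A\theta_b\otimes_A\theta_c\}$ of $\Omega^1\otimes_A\Omega^1\otimes_A\Omega^1$ over the four-letter alphabet $H=\{p,\ip,s,\is\}$ and to read off the coefficient of each $\theta_a\otimes_A\theta_b\otimes_A\theta_c$, exactly as in the proof of Theorem~\ref{comp_N}. As already noted (both $N,M\neq 3$) we have $\alpha\equiv 0$, so the connection has the inner form $\nabla\theta_g=\theta\otimes_A\theta_g-\sigma(\theta_g\otimes_A\theta)$ with $\theta=-\sum_{k\in H}\theta_k$; combined with ${\rm d}G_g=[\theta,G_g]$ this gives $\nabla(G_g\theta_g)=(\theta G_g)\otimes_A\theta_g-G_g\sigma(\theta_g\otimes_A\theta)$. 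Writing $\mathbf{g}=\sum_{g\in H}G_g\,\theta_g\otimes_A\theta_{g^{-1}}$ and $\sigma(\theta_g\otimes_A\theta_h)=\sum_{a,b}\psi^{a,b}_{g,h}\theta_a\otimes_A\theta_b$, and repeatedly using $\theta_g f=(R_{g^{-1}}f)\theta_g$ to move function coefficients across the tensor signs, both summands of the compatibility condition become explicit combinations of the $\theta_a\otimes_A\theta_b\otimes_A\theta_c$ with coefficients in $C(\Z_N)\otimes C(\Z_M)$.

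Comparing the coefficient of a fixed $\theta_a\otimes_A\theta_b\otimes_A\theta_c$ produces the master equation stated in the Lemma, for an arbitrary bimodule map $\sigma$. Concretely, $(\nabla\otimes\mathrm{id})\mathbf{g}$ splits into a diagonal piece $-\sum_{g,k}(R_{k^{-1}}G_g)\theta_k\otimes_A\theta_g\otimes_A\theta_{g^{-1}}$ and a $\sigma$-piece $\sum_{g,h,a,b}G_g\psi^{a,b}_{g,h}\theta_a\otimes_A\theta_b\otimes_A\theta_{g^{-1}}$; using $\nabla\theta_{g^{-1}}=-\sum_h\theta_h\otimes_A\theta_{g^{-1}}+\sum_h\sigma(\theta_{g^{-1}}\otimes_A\theta_h)$ together with $\theta_g f=(R_{g^{-1}}f)\theta_g$, the term $(\sigma\otimes\mathrm{id})(\mathrm{id}\otimes\nabla)\mathbf{g}$ likewise splits into $-\sum_{g,h,a,b}G_g\psi^{a,b}_{g,h}\theta_a\otimes_A\theta_b\otimes_A\theta_{g^{-1}}$ plus a remainder whose coefficient of $\theta_a\otimes_A\theta_b\otimes_A\theta_c$ is $\sum_{g,h,k}G_g\psi^{a,b}_{g,k}\big(R_{g^{-1}}\psi^{k,c}_{g^{-1},h}\big)$. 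The two middle $\sigma$-pieces cancel identically, while the diagonal piece contributes only when the last slot $\theta_{g^{-1}}$ equals $\theta_c$ (so $g=c^{-1}$), the first slot then forcing $k=a$ and the middle one $b=c^{-1}$; this is exactly the right-hand side $R_{a^{-1}}G_{c^{-1}}\delta_{b,c^{-1}}$. The only subtlety here is keeping track of which $R$ acts on which tensor factor---moving a function leftwards past $\theta_g$ costs an $R_{g^{-1}}$, which is the origin both of the $R_{g^{-1}}$ in front of the second $\psi$ and of the $R_{a^{-1}}$ on the right.

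It remains to substitute the explicit $\sigma$ of the preceding Lemma (valid for $N,M>4$) and to run through the triples $(a,b,c)\in H^3$. Because $N,M>2$ we have $g\neq g^{-1}$, so most of the $4^3$ components are either trivial ($0=0$) or identified under the two evident symmetries---exchanging the two factors $\{p,\ip\}\leftrightarrow\{s,\is\}$, and inverting $g\leftrightarrow g^{-1}$ inside a factor---which collapses the system to the six types displayed in the statement. The decisive structural observation---and the source of most of the case-work---is that the mixed $\sigma$-coefficients behave differently according to whether the two indices lie in the same factor: $\sigma(\theta_g\otimes_A\theta_{g^{-1}})$ with $g\in\{p,\ip\}$ carries the extra cross term $W_g(\theta_s\otimes_A\theta_\is+\theta_\is\otimes_A\theta_s)$ (and symmetrically for $g\in\{s,\is\}$), whereas $\sigma(\theta_a\otimes_A\theta_b)$ with $a$, $b$ in different factors contains only the $C_{ab}$ term and never an $A$ or a $B$. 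Tracking this correctly is what inserts the summands involving $W$ and $C$ into the equations of types~2--6---which, upon setting $W\equiv 0$ and discarding the mixed letters, reduce to the three equations of Theorem~\ref{comp_N}. The remaining manipulation is the same routine symbolic one as there; it is purely a bookkeeping matter and is the main---indeed the only---obstacle.
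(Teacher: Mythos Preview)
Your proposal is correct and is precisely the computation the paper has in mind; the paper in fact states this Lemma without an explicit proof, relying on the reader to carry out the expansion exactly along the lines of Theorem~\ref{comp_N}, which you have done. Your derivation of the master equation---via the cancellation of the two ``middle'' $\sigma$-pieces and the identification of the diagonal term as $R_{a^{-1}}G_{c^{-1}}\delta_{b,c^{-1}}$---is correct, and your account of why the $W$- and $C$-terms enter only through the cross components of $\sigma$ is the right structural observation for reducing the $4^3$ coefficient equations to the six displayed types.
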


Some simplification can arise from considering torsion-freeness together with vanishing of the cotorsion, $\operatorname{co}T_\nabla=({\rm d}\otimes\mathrm{id}-\mathrm{id}\wedge \nabla){\bf g}$,
which is implied~\cite{TM20} by torsion-freeness together with metric compatibility. These
conditions are much simpler since they are linear and in principle can lead to significant restrictions on possible solutions of the main problem.

In our case the cotorsion-freeness can be written explicitly as a system of $16$ equations for functions $A$, $B$, $C$ and $W$:
\begin{gather*}
R_gG_g+G_{g^{-1}}(R_gB_g-1)=G_gR_{g^{-1}}A_{g^{-1}},\\
G_g\big(R_{g^{-1}}C_{g^{-1}h}-1\big)=G_{g^{-1}}(R_gC_{gh}-1),\\
R_{h^{-1}}G_g+G_h\big(R_{h^{-1}}W_{h^{-1}}\big)=G_g\big(R_{g^{-1}}C_{g^{-1}h}\big),
\end{gather*}
where $h\neq g,g^{-1}$ and these indices are taken from $\{p,\ip,s,\is\}$.

Observe that first of them can be used to express $A$ in terms of $B$, and the last one to determine $C$ as a function of $W$. The second one is a compatibility condition for functions~$C$.

It appears, however, that even using these linear dependencies the resulting set
of nonlinear equations is at present beyond the possibilities of exact analytical
solutions. Instead we shall concentrate on showing few possible solutions for
the metrics and compatible linear connections, in particular we want to answer
the question whether for the constant metrics there exist only one compatible
linear connection.

\subsection{Special cases of linear connection for the torus}

\subsubsection[The case with W=0, C=1]{The case with $\boldsymbol{W=0}$, $\boldsymbol{C=1}$}
We begin with considering a special case with all $W$ being zero, which then enforces all $C$ to be
identically $1$ and as a consequence the situation {\it almost} splits into the two parts related with the
two algebras for discrete circles $\mathbb{Z}_N$ and $\mathbb{Z}_M$.
Indeed, for $W=0$ first three relation from the list for the metric compatibility condition reduce to separate equations for groups $\mathbb{Z}_N$ and $\mathbb{Z}_M$, whose solutions we have already found in the previous section. Furthermore, cotorsion-freeness implies that for such a case we have
\[ R_{g^{-1}}C_{g^{-1}h}=\frac{R_{h^{-1}}G_g}{G_g}\]
for all $g$ and $h\neq g,g^{-1}$. Using this result in the fifth condition for metric compatibility we immediately infer that all functions $C$ have to be constantly equal $1$. As a result $R_{h}G_g=G_g$ for all $g$ and $h$ such that $h\neq g,g^{-1}$. We say that in this case the metric is {\it perpendicularly constant}. The remaining relations are automatically fulfilled. Furthermore, for $W=0$ and $C=1$ the connection $\nabla\theta_g=\sum_{a,b}\Gamma^g_{a,b}\theta_a\otimes_A\theta_b$ contains only terms $\Gamma^g_{a,b}$ with $a,b,g\in\{p,\ip\}$ or $a,b,g\in\{s,\is\}$ separately.

These $\Gamma$ functions need to be as determined in Theorem~\ref{theor_ZN} leaving us with the freedom of choosing different solutions. So, although the connection splits into two parts corresponding to two groups~$\mathbb Z_N$ and~$\mathbb Z_M$ both parts can have coefficients depending on both variables due to this freedom. If we enforce that the solutions are constant along the perpendicular direction, using, for example, star compatibility, we indeed have the product geometry and in this case the Riemannian curvature splits into the sum of Riemannian curvatures for two discrete circles. Otherwise, the dependence of each connection on both variables (which can be quite ad hoc) generates mixed terms in Riemannian and Ricci tensors.

\subsubsection{The case of the constant metric}
The previous example shows the existence of nontrivial solutions in the case of the product geometry yet does not show that the solutions are unique. Therefore, for the second example we shall ask the question of all linear connections compatible with the constant metric. Suppose now that the metric coefficients satisfy $G_p=G_\ip = G_s=G_\is$ and are constant. By symmetry arguments
we also assume that all $A$, $B$, $C$ and $W$ are also constant, which are identical (separately) for all $A,B$ and $W$s (as there is symmetry in the change of the space), moreover we assume that $C_{ps}=C_{sp}=C_{\ip\is}=C_{\is\ip}=C_1$ and $C_{p\is}=C_{\ip s}=C_{\is p}=C_{s\ip}=C_2$. The resulting system of equation is then
\begin{gather*}
BA-1 =0, \\
C_1 W + BA + B^2 -2B +C_2W - 2W =0, \\
BA -A + C_1W - 2W -B +B^2 +C_2W = 0,\\
C_1W - C_2 +C_1 C_2 =0, \\
C_2W + C_1C_2 -1 -W =0, \\
AW - 2B +BW -C_2-W+C_1B+C_2B+1=0,
\end{gather*}
and indeed has a unique solution
\[ A=1, \qquad C_1=C_2=1, \qquad W=0, \qquad B=1. \]

We infer from that at least in the case of the constant metric (which is the same for each of the components
of the torus) there exists a unique metric compatible linear connection with certain symmetries. The more general
case, with arbitrary constant coefficients leads to a huge number of nonlinear equations for 20 variables, which
is difficult to solve. Therefore, the only method to proceed is step by step.

To see how this study is involved let us consider the most general case, with the assumption that all $W$s are different
from~$0$. We still assume that $G_p=G_\ip = G_s=G_\is$ are constant, likewise all $C$ and $W$ and suppose now all $W$ are
non-zero. Moreover, we do not impose $B$ to be constant here. The fifth relation in metric compatibility can be now written in the form
\begin{equation*}
 2+W_{h^{-1}}+W_{g^{-1}}=0,
\end{equation*}
where the cotorsion-freeness was used in a completely similar manner as we did it in the previous cases. Changing $g$ into $g^{-1}$ or $h$ into $h^{-1}$, it follows that $W_g=W_{g^{-1}}$ and $W_{h}=W_{h^{-1}}$. The fourth relation for metric compatibility can be therefore written as
\[ B_g + B_{g^{-1}}+2\left(\frac{1}{B_{h^{-1}}}-1\right)=0, \] so by changing $h\leftrightarrow h^{-1}$ and subtracting resulting equations
we get $B_h=B_{h^{-1}}$ and similarly also $B_g=B_{g^{-1}}$. Therefore the above equation reduces to $B_g+\frac{1}{B_h}=1$.
The third relation for metric compatibility is now of the form,
\begin{equation*}
\left(\frac{1}{B_g}-1\right)(B_g-1)+2W_gW_h=0,
\end{equation*}
hence by the symmetry of the second term we infer
\[ \left(\frac{1}{B_g}-1\right)(B_g-1)=\left(\frac{1}{B_h}-1\right)(B_h-1).\]
Using now $B_g+\frac{1}{B_h}=1$ it can be reduced to an algebraic equation
\[(1-B_g)^3=B_g^3,\]
which has three solutions
\[ B_g=\tfrac{1}{2}, \qquad B_g=\tfrac{1}{2}\big(1\pm {\rm i}\sqrt{3}\big).\]
Since we still have an analogue of equation \eqref{iksy_b}, $R_gB_g=\frac{1}{B_g}$, the first solution is excluded. From the second one we deduce that the function $B_g$ can take values only in the set
\[ \big\{\tfrac{1}{2}\big(1+ {\rm i}\sqrt{3}\big), \tfrac{1}{2}\big(1- {\rm i}\sqrt{3}\big)\big\},\] and if $B_g(n,m)=\frac{1}{2}\big(1\pm {\rm i}\sqrt{3}\big)$, then $B_g(n+1,m)$ has to be equal to $\frac{1}{2}\big(1\mp {\rm i}\sqrt{3}\big)$. Obviously such solutions are possible only if $N$ is even. Moreover, from $\frac{1}{B_g}+B_h=1$ we can deduce also that $B_h=B_g$, hence $R_hB_g=\frac{1}{B_g}$, so we have a similar behaviour also in the second argument. As a~result there are two possible solutions
\[ B_g(n,m)=\tfrac{1}{2}\big(1+(-1)^{n+m}{\rm i}\sqrt{3}\big),\qquad B_g(n,m)=\tfrac{1}{2}\big(1-(-1)^{n+m}{\rm i}\sqrt{3}\big),\]
and moreover the existence of such solutions requires both~$N$ and~$M$ to be even. Furthermore, in such a case we have $2W_hW_g=-1$, which determines the values of~$W$'s (and, using cotorsion-freeness, also of~$C$'s). Indeed, using the fifth relation for metric compatibility (which now is of the form $W_g(2+W_h+W_g)=0$) together with the condition $2W_hW_g=-1$ we get $W_g=-1-\sqrt{\frac{3}{2}}$ and $W_h=\sqrt{\frac{3}{2}}-1$, or with the exchanged role of indices $h$ and $g$.

Therefore at least one of $W$s needs to vanish unless both $N$ and $M$ are even when the aforementioned possibility occurs, however, if it is not the case, we shall see that it is not possible that only one of $W$ is zero.
Indeed, suppose the contrary, i.e., without loss of generality assume that only $W_{h^{-1}}=0$. First notice that the last
relation in cotorsion-freeness implies
\[ C_{g^{-1}h}-1=W_{h^{-1}}.\]
Applying the above relation (together with $R_hA_h=\frac{1}{B_{h^{-1}}}$ and the analogue of \eqref{iksy_b} which is still valid here) in the fourth condition for metric compatibility we get
\[ W_{h^{-1}}(B_{g}+B_{g^{-1}})+\left(\frac{1}{B_{h^{-1}}}-1\right)(W_h+W_{h^{-1}})=0.\]
Since $W_{h^{-1}}=0$ we get $B_{h^{-1}}=1$ or $W_h=0$. Since we had assumed only one $W$ vanishes, $B_{h^{-1}}=1$. Applying the same technique to the third condition for metric compatibility as for the fourth one, we infer
\[ \left(\frac{1}{B_g}-1\right)(B_{g^{-1}}-1)+W_g(W_h+W_{h^{-1}})=0.\]
Replacing $g$ with $h$, and using $B_{h^{-1}}=1$, $W_h\neq 0$ we end up with $W_g=-W_{g^{-1}}$. If $W_{g}\neq 0$, then also $W_{g^{-1}}$. In this case the fifth condition for metric compatibility (after using cotorsion-freeness) reduces to $W_{g^{-1}}\big(2+W_{g^{-1}}\big)=0$, so both $W_{g^{-1}}$ and $W_{g}$ (since $W_h,W_{h^{-1}}\neq 0$) are equal to $-2$. But the only possibility to satisfy $W_g=-W_{g^{-1}}$ is now $W_g=W_{g^{-1}}=0$, which is a~contradiction. Therefore, the claim is proven.

Furthermore, notice that since from $W_{h}\neq W_{h^{-1}}=0$ we were able to deduce that $W_g=W_{g^{-1}}=0$, it follows that if one $W$ vanishes then there exists at least one pair of vanishing $W$s: $W_a=W_{a^{-1}}=0$. Therefore it is not possible that exactly two of $W$ with indices in different algebras
vanish simultaneously.

Therefore even in the case of a constant metric, such that the lengths of sides are
the same in all directions, the solution is not uniquely determined by the requirement of torsion-freeness and metric-compatibility. In addition to the trivial solution with all $W$ being zero (which reduces to the case discussed in the previous subsection), there are also other possibilities, e.g., with $W_h=W_{h^{-1}}=0$ and $W_{g}=W_{g^{-1}}=-2$. For $N$ and $M$ with even parities, there are even more sophisticated solutions with alternating functions~$B$.

\section{Conclusions and overview}
In this paper we posed the question, whether it is possible to classify all linear connections over the minimal differential calculi on
the finite cyclic group that are torsion-free and compatible with a given metric,
extending the already known results and the program of systematic studies of
noncommutative Riemannian geometry (see~\cite{BM_book} and the references therein).

Surprisingly, even though the problem is nonlinear the answer is positive yet only possible for a class of metrics that are either
proportional-symmetric (left and right metrics are proportional to each other) or satisfy very special relations that are quantized.
However, only the proportional-symmetric solutions are meaningful in the sense of Riemann geometry, as only they can lead
to a norm on the space of one-forms. The resulting linear connections yield a nontrivial scalar curvature for the Riemannian
geometry of the discretized circle, which has an interesting continuous limit.

The extension of the construction of bimodule connections and compatible metrics to the products
of two discretized circles leads to a highly nontrivial set of compatibility conditions and this paper
only scratches the surface of the problem. Yet, we were able to show that for the constant metric
there exists at least one torsion-free linear connection that is compatible with it. This example shows that torsion-freeness and
metric compatibility are not so restrictive conditions as in the classical situation and even in the simplest case we can have
a plenty of non-trivial solutions.

There remain two important problems: the uniqueness of the linear connection for a class of
nonconstant metrics as well as the relation of the computed scalar curvature to the spectral
analysis through the Dirac operator~\cite{Ba15,BG19,Gl17} for discretized models, which we leave
for the forthcoming work.

\subsection*{Acknowledgements}
We would like to thank the referees for their valuable comments on the content of our manuscript and their suggestions for improving the document. PZ was supported by the Faculty of Physics, Astronomy and Applied Computer Science of the Jagiellonian University under the DSC scheme: U1U/P05/NW/03.27.

\pdfbookmark[1]{References}{ref}
\LastPageEnding

\end{document}